\definecolor{darkred}{rgb}{0.8,0,0}
\definecolor{darkblue}{rgb}{0,0,0.7}
\newlength\figwidth
\renewcommand\section{\@startsection {section}{1}{\z@}%
  {-2.2ex \@plus -1ex \@minus -.2ex}%
  {1ex \@plus.1ex}%
  {\normalfont\bf\sffamily\color{darkblue}}}
\renewcommand\subsection{\@startsection{subsection}{2}{\z@}%
  {-1.85ex\@plus -0.4ex \@minus -.2ex}%
  {0.6ex \@plus .1ex}%
  {\normalfont\small\bf\sffamily}}
\renewcommand\subsubsection{\@startsection{subsubsection}{3}{\z@}%
  {-0.6ex\@plus -0.2ex \@minus -.2ex}%
  {0.4ex \@plus .1ex}%
  {\normalfont\normalsize\it}}
\renewcommand\paragraph{\@startsection{paragraph}{4}{\z@}%
  {0.2ex \@plus0.2ex \@minus0.1ex}{-0.5em}%
  {\normalfont\normalsize\bfseries}}
\def\ps@headings{%
  \let\@oddfoot\@empty
  \let\@evenfoot\@empty
  \def\@evenhead{\small\sffamily\thepage\hfil\slshape\leftmark}%
  \def\@oddhead{\small\sffamily{\slshape\rightmark}\hfil\thepage}%
  \let\@mkboth\markboth
  \def\chaptermark##1{\markboth{{\ifnum \c@secnumdepth >\m@ne
		\if@mainmatter \@chapapp\ \thechapter. \ \fi \fi ##1}}{}}%
  \def\sectionmark##1{\markright {{\ifnum \c@secnumdepth >\z@
		\thesection. \ \fi ##1}}}}
\def\fbf#1{\setbox0=\hbox{$#1$}\kern-0.10\wd0
  \lower0.02em\copy0\kern-\wd0 \lower0.02em\hbox{\kern+0.04em\copy0}\kern-\wd0
  \raise0.00em\copy0\kern-\wd0 \raise0.00em\hbox{\kern-0.04em\box0}}
\def\overl@ss#1#2{\vcenter{\offinterlineskip
        \ialign{$\m@th#1\hfil##\hfil$\crcr#2\crcr<\crcr } }}
\def\gl{\mathrel{\mathpalette\overl@ss>}}
\numberwithin{equation}{section}
\newtheorem{theorem}{Theorem}[section]
\newtheorem{definition}[theorem]{Definition}
\newtheorem{corollary}[theorem]{Corollary}
\newtheorem{lemma}[theorem]{Lemma}
\newtheorem{remark}[theorem]{Remark}
\newtheorem{proposition}[theorem]{Proposition}
\newtheorem{assumption}[theorem]{Assumption}
\def\maketitle{\par\noindent{\LARGE\bf\sffamily\color{red}\thetitle}\\[2ex]
{\large\theauthor}\\[1ex]
\textit{\thetextinfo}\\[1ex]
{\small\today}\par\vglue1.4\bigskipamount}
\def\title#1{\def\thetitle{#1}}
\def\author#1{\def\theauthor{#1}}
\def\textinfo#1{\def\thetextinfo{#1}}
\def\be{\begin{equation}}
\def\ee{\end{equation}}
\def\bse{\begin{subequations}}
\def\ese{\end{subequations}}
\newtheorem{RHP}{RHP}[section]
\definecolor{deeppurple}{rgb}{0.5, 0, 0.7}
\def\half{{\textstyle\frac12}}
\def\diag{\mathop{\rm diag}\nolimits}
\def\cn{\mathop{\rm cn}\nolimits}
\def\sn{\mathop{\rm sn}\nolimits}
\def\dn{\mathop{\rm dn}\nolimits}
\def\Natural{\mathbb{N}}
\def\Real{\mathbb{R}}
\def\R{\mathbb{R}}
\def\Complex{\mathbb{C}}
\def\Integer{\mathbb{Z}}
\def\Z{\mathbb{Z}}
\def\i{\text{i}}
\def\Re{\mathop{\rm Re}\nolimits}
\def\Im{\mathop{\rm Im}\nolimits}
\def\tr{\mathop{\rm tr}\nolimits}
\def\re{\mathrm{re}}
\def\im{\mathrm{im}}
\def\d{\mathrm{d}}
\def\sgn{\mathop{\rm sgn}\nolimits}
\def\e{\mathop{\rm e}\nolimits}
\def\@#1{{\mathbf{#1}}}
\def\_#1{{\mathsf{#1}}}
\def\L{\mathcal{L}}
\def\Dir{{\mathrm{Dir}}}
\def\note[#1]{\marginpar{\color{blue}[#1]}}
\def\partialderiv#1#2{{\frac{\partial #1}{\partial #2}}}
\def\txtfrac#1#2{{\textstyle\frac{#1}{#2}}}
\def\C{{\mathbb C}}
\def\R{{\mathbb R}}
\def\Z{{\mathbb Z}}
\def\D{\Delta}
\def\1{{\bf 1}}
\def\z{\zeta}
\def\l{\lambda}
\def\e{\mathrm{e}}
\def\g{g}
\def\mucirc{\mathring\mu}
\let\trueparagraph=\paragraph
\def\paragraph#1{\par\smallskip\trueparagraph{\rm\textbf{#1}}}
\def\em{\endgroup}
\begin{document}
\pagestyle{plain}

\title{Spectral theory for non-self-adjoint Dirac operators with\\[0.1ex]
periodic potentials and inverse scattering transform\\[0.1ex]
for the focusing nonlinear Schr\"odinger equation\\[0.1ex]
with periodic boundary conditions}
\author{Gino Biondini$^1$, Gregor Kova\v{c}i\v{c}$^2$, Alexander Tovbis$^3$, 
Zachery Wolski$^2$ and Zechuan Zhang$^4$}
\textinfo{$^1$ Department of Mathematics, State University of New York at Buffalo, Buffalo, New York 14260 - USA\\
$^2$ Department of Mathematical Sciences, Rensselaer Polytechnic Institute, Troy, New York 12180 - USA\\
$^3$ Department of Mathematics, University of Central Florida, Orlando, Florida 32816 - USA\\
$^4$ Scuola Internazionale Superiore di Studi Avanzati, Trieste - Italy}
\maketitle

\begin{abstract}
\noindent
We formulate the inverse spectral theory for a non-self-adjoint one-dimensional Dirac operator associated
periodic potentials via a Riemann-Hilbert problem approach.
We use the resulting formalism to solve the initial value problem for the focusing nonlinear Schr\"o\-dinger equation.
We establish a uniqueness theorem for the solutions of the Riemann-Hilbert problem, which provides a new method for obtaining the potential from the spectral data.
The formalism applies for both finite- and infinite-genus potentials.
As in the defocusing case, the formalism shows that only a single set of Dirichlet eigenvalues is needed in order to uniquely reconstruct 
the potential of the Dirac operator and the corresponding solution of the focusing NLS equation.
\par\bigskip\noindent
\textit{This work is dedicated to the memory of Vladimir Zakharov, a giant of nonlinear science.}
\end{abstract}


\bigskip
\section{Introduction and outline} 
\label{s:intro}

This work addresses the formulation of the inverse spectral theory for the non-self-adjoint Zakharov-Shabat (ZS) operator with periodic potentials using a Riemann-Hilbert problem (RHP) approach,
as well as the application of these results to solving the initial value problem (IVP) for the focusing nonlinear Schr\"odinger (NLS) equation under periodic boundary conditions.
Specifically, we study the eigenvalue problem associated with the ZS operator:
\vspace*{-1ex}
\be
\L\,v = z\,v\,,
\label{e:Diraceigenvalueproblem}
\ee
where $\L$ is a matrix-valued Dirac operator acting in $L^2(\Real,\C^2)$ with dense domain $H^{1}(\Real,\C^2)$, defined as
\vspace*{-1ex}
\be
\L:= \i\sigma_{3}(\partial_{x} - Q)\,
\label{e:Diracoperator}
\ee
with $v = v(x,z) = (v_1, v_2)^{T} $ and the superscript $T$ denoting matrix transpose.
Hereafter,
$\sigma_3 = \diag(1,-1)$ is the third Pauli matrix 
and $Q(x)$ is the matrix-valued potential function
\be
Q(x) = \begin{pmatrix} 0 & q(x) \\ -q^*(x) & 0 \end{pmatrix}\,,
\label{e:Q}
\ee
with the asterisk denoting complex conjugation.
The ``potential'' $q: \R \to \C$ is a function with minimal period~$L$,
i.e.,
\be
q(x+L) = q(x)\,,\quad \forall x\in\Real\,.
\label{e:Qperiodic}
\ee

The Lax spectrum of $\L$, denoted by $\Sigma(\L)$, is defined as the set of complex values $z$ for which there exists at least one bounded, nontrivial solution to the eigenvalue problem \eqref{e:Diraceigenvalueproblem}, i.e.,
\be
\Sigma(L) := \big\{z\in\C: \mathcal L\phi=z\phi,\,\, 0<\|\phi\|_{L^{\infty}(\R;\C^2)} < \infty \big\}\,.
\label{e:laxspec}
\ee
In this work, we will consistently use the term ``spectrum'' as a synonym for the Lax spectrum.  
For simplicity, and to minimize technical complications, we assume $q\in C^2([0,L])$ throughout, unless explicitly stated otherwise. 

It is well known that the eigenvalue problem~\eqref{e:Diraceigenvalueproblem} is intimately related to the focusing NLS equation, namely the partial differential equation (PDE)
\be
\i q_t + q_{xx} + 2 |q|^2 q = 0\,,
\label{e:nls}
\ee 
where $q = q(x,t)$, 
and subscripts denote partial derivatives.
This is because the Lax pair for the focusing NLS equation consists of the following overdetermined linear system of ordinary differential equations (ODEs) \cite{ZS1972,ZS1973}:
\vspace*{-1ex}
\bse%
\label{e:NLSLP}
\begin{gather}
	v_x = X(x,t,z)\,v\,,\qquad X(x,t,z) = -\i z\sigma_3 + Q(x,t)\,,
	\label{e:zs}
	\\
	v_t = T(x,t,z)\,v\,,\qquad T(x,t,z) = -2\i z^2\sigma_3 + H(x,t,z)\,,
	\label{e:NLSLP2}
\end{gather}
\ese
where $H(x,t,z) = 2zQ - \i\sigma_3(Q^2-Q_x)$ and where $Q(x,t)$ has the same form as in~\eqref{e:Q}
but with $q(x)$ replaced by $q(x,t)$. 
That is, the compatibility condition $v_{xt} = v_{tx}$ of the system~\eqref{e:NLSLP} is equivalent to the focusing NLS equation~\eqref{e:nls}.
The first half of the Lax pair, namely~\eqref{e:zs}, is known as the Zakharov-Shabat scattering problem, 
the eigenvalue $z\in\C$ is also referred to as the spectral parameter and $v=v(x,t,z)$ as the scattering eigenfunction.
Importantly, the eigenvalue problem~\eqref{e:Diraceigenvalueproblem} is equivalent to \eqref{e:zs}. 

In 1967, Gardner, Greene, Kruskal, and Miura introduced direct and inverse spectral methods to solve the IVP for the Korteweg-de Vries equation \cite{GGKM}. Shortly after, Zakharov and Shabat \cite{ZS1972,ZS1973} extended this approach,
now commonly referred to as the inverse scattering transform (IST), to the NLS equation~\eqref{e:nls}, 
relying on the spectral theory of the Zakharov-Shabat problem \eqref{e:nls}. 
Shortly afterwards Ablowitz, Kaup, Newell, and Segur broadened these methods to a large class of integrable nonlinear PDEs~\cite{AKNS1974}. 
Key advances in IST followed, particularly the reformulation of the inverse problem as a matrix RHP \cite{AC1991,BealsCoifman,Deift1998,Its2003,NMPZ1984,Zhou1989}, 
which enabled the development of powerful asymptotic methods such as the Deift-Zhou technique \cite{DeiftZhou1991}, that has been widely used with success across various applications.

After the development of direct and inverse spectral methods and the IST for decaying potentials, the focus naturally extended to IVPs with periodic boundary conditions. The spectral theory for Hill’s equation (the time-independent Schrödinger equation with periodic potentials) and its applications to the KdV equation under periodic boundary conditions have been studied extensively since the 1970s, forming the foundation of the finite-genus formalism \cite{BBEIM,Dubrovin1975,ItsMatveev1,ItsMatveev2,ItsMatveev3,KappelerPoschel2010,Lax1975,MW1966,Marchenko1974,MarchenkoOstrovsky1975,Matveev2008,McKeanVanMoerbeke,Novikov1974,NMPZ1984}. This framework was later expanded to the infinite-genus case, where global transformations map the KdV flow onto a periodic trajectory on an infinite-dimensional torus \cite{KappelerPoschel2010,McKeanTrubowitz1,McKeanTrubowitz2}.
Similarly, the spectral theory of focusing and defocusing Zakharov-Shabat operators and their finite-gap formulation has a longstanding research history \cite{GesztesyHolden,gesztesyweikard_acta1998,gesztesyweikard_bams1998,ItsKotlyarov1976,Kotlyarov1976,MaAblowitz,McKean1981,McLaughlinOverman}. 
Recently, McLaughlin and Nabelek developed a Riemann-Hilbert approach to solve the inverse spectral problem for Hill's operator in \cite{McLaughlinNabelek}. 
In \cite{BiondiniZhang2023} we then extended their approach to the self-adjoint Dirac operator 
[namely, the operator $\L$ that one would obtain when the minus sign in front of the term $q^*(x)$ in~\eqref{e:Q} is absent]
with periodic potentials, 
and we applied the results to solve the IVP for the defocusing NLS equation 
[namely, the PDE obtained when the nonlinear term in \eqref{e:nls} is preceded by a minus sign]
with periodic boundary conditions. 

In the present work, we further generalize the above approach to the non-self-adjoint Dirac operator~\eqref{e:Diracoperator} with periodic potentials, which enables us to solve the IVP for the focusing NLS equation~\eqref{e:nls} with periodic boundary conditions. 
The most significant difference between the spectral theory for the self-adjoint and non-self-adjoint Dirac operator is that, 
in the former, the spectrum is confined to the real $z$-axis.
In the latter, in contrast, the spectrum typically has non-real components, 
which introduces several complications in the analysis. 
In this work we show in detail how one can nonetheless address these challenges
and successfully generalize the formalism.

The outline of this work is as follows. 
In section \ref{s:spectrum} we formulate the direct spectral theory.  
We begin by recalling some results from Bloch-Floqeuet theory.
We then introduce a transformation of the fundamental matrix solution $Y(x,z)$ of the ZS problem, formulate some key assumptions,
and define the spectral data that will allow one to uniquely recover the potential $q(x)$,
including the Dirichlet spectrum.
In section \ref{s:asymptotics} we define the Bloch-Floquet eigenfunctions $\psi^\pm(x,z)$ of the modified ZS problem 
in terms of its modified fundamental matrix solution $\~Y(x,z)$
and we compute their asymptotic behavior as $z\to\infty$. 
Moreover, we express various quantities 
as infinite products via Hadamard's factorization theorem,
which will be instrumental for the inverse problem.
In section \ref{s:inverse} we present the formulation of the inverse problem as a RHP, for which we prove uniqueness of solutions, 
and we obtain the reconstruction formula for the potential based on the solution of the RHP. 
A crucial step in this process is introducing a matrix-valued function $B(z)$, which cancels the singularities of the RHP and ensures the uniqueness of the solution.
In section \ref{s:time} we compute the time dependence of the Dirichlet eigenvalues and Bloch-Floquet eigenfunctions and we 
show how the results of section \ref{s:inverse} can used to solve the IVP for the focusing NLS equation with periodic BC. 
Finally, in section~\ref{s:conclusions} we end this work with some concluding remarks.
The appendix is devoted to a few preliminary results and technical lemmas,
the explicit formulation of the RHP in a few special cases (namely, genus-zero and genus-one potentials) as examples,
a discussion of an efficient numerical method for the calculation of the Dirichlet spectrum, and some illustrative numerical results.

\section{Direct spectral theory for the periodic non-self-adjoint Zakharov-Shabat problem}
\label{s:spectrum}

In this section we begin formulating the direct spectral theory for the defocusing ZS spectral problem and define the spectral data that will allow us to uniquely reconstruct the potential in section~\ref{s:inverse}.
Since the time dependence of the potential does not play any role in the direct and inverse spectral theory,
in this section and the following ones we will temporarily omit the time dependence, 
which will then be restored when discussing the IVP for the NLS equation in section~\ref{s:time}.

\subsection{Lax spectrum, main eigenvalues and Dirichlet eigenvalues}

\paragraph{Fundamental analytic solutions, monodromy matrix, Floquet discriminant.}
We begin by briefly recalling some well-known results from Bloch-Floquet theory,
Recall that, for an $n \times n$ matrix-valued function $A\in L^1_{{\rm loc}}(\R)$ with $A(x+L) = A(x)$,
Floquet's theorem~\cite{Floquet_int,Eastham,Floquet} states that
any fundamental matrix solution $Y(x)$  of the system of linear homogeneous ODEs 
\vspace*{-0.5ex}
\be
y_x = A(x)\,y
\label{e:yprime} 
\ee
can be written in the Floquet normal form
\vspace*{-0.5ex}
\be
Y(x) = W(x)\,\e^{Rx}\,,
\label{e:canonical}
\ee
where $W(x)$ is a nonsingular matrix with $W(x+L) = W(x)$, and $R$ is a constant matrix. 
Thus, all bounded solutions of the ZS system~\eqref{e:zs} have the form
%
$v(x,z) = \e^{\i\nu x} w(x,z)$
(or linear combinations of such functions),
where 
$w(x+L,z) = w(x,z)$,
and $\nu\in\Real$ is the Floquet exponent (also referred to as quasi-momentum in some works). 
One also defines the so-called Bloch-Floquet solutions, or normal solutions, 
as the solutions of \eqref{e:zs} such that 
\vspace*{-0.5ex}
\be
\psi(x+L,z) = \rho\,\psi(x,z)\,,
\label{e:bdsol}
\ee 
where $\rho$ is the Floquet multiplier. 
Thus, 
a solution of~\eqref{e:zs} is bounded if and only if $|\rho|=1$, 
in which case $\rho= \e^{\i\nu L}$ with $\nu\in\Real$.
Moreover, 
the Floquet multipliers are the eigenvalues of the monodromy matrix $M(z)$,  
which is defined by
\vspace*{-0.5ex}
\be
Y(x+L,z)=Y(x,z)M(z)\,, 
\label{e:monodromy}
\ee
where $Y(x,z)$ is any fundamental matrix solution of~\eqref{e:zs}.
Hereafter, we choose $Y(x,z)$ as the principal matrix solution of~\eqref{e:zs}
that is,
the matrix solution of~\eqref{e:zs} normalized so that
$Y(0,z)\equiv\mathbb{I}$, where $\mathbb{I}$ is the identity matrix.
We then have
\vspace*{-0.5ex}
\be
M(z) = Y(L,z)\,.
\label{e:monodromy2}
\ee
Standard techniques allow one to show that, under the above assumptions, 
$Y(x,z)$ can be expressed as the following Volterra integral equation
\vspace*{-0.5ex}
\be
Y(x,z)=\e^{-\i z\sigma_3x}+\int_0^x\e^{-\i z\sigma_3(x-\xi)}Q(\xi)Y(\xi,z)\,\d\xi\,,
\label{phi}
\ee
which also allows one to show that, for all $x\in\Real$, $Y(x,z)$ is an entire function of $z$.
Since the RHS of ~\eqref{e:zs} is traceless, 
Abel's formula implies $\det M(z) \equiv 1$.  
Hence the eigenvalues of $M(z)$, i.e., the Floquet multipliers, are given by roots of the quadratic equation
\vspace*{-0.5ex}
\be
\rho^2 - 2\D(z)\rho+ 1 = 0,
\label{tracepoly}
\ee
where $\D(z)$ is the Floquet discriminant
\vspace*{-0.5ex}
\be
\D:=\D(z)=\half\tr M(z)\,,
\label{e:DeltaM}
\ee
which is also is an entire function of $z$ \cite{ForestLee,MaAblowitz, McLaughlinOverman}.
The eigenvalue problem \eqref{e:Diraceigenvalueproblem} has bounded solutions if and only if the following two conditions are simultaneously satisfied:
\vspace*{-0.5ex}
\be
\label{conditons4laxspec}
\Im \Delta(z)=0,\quad -1\leq \Re\Delta(z)\leq 1.
\ee
Thus, an equivalent representation of the Lax spectrum is as follows:
\vspace*{-0.5ex}
\be
\Sigma(\mathcal L) = \{ z \in \Complex : \D(z) \in [-1,1]\}.
\label{e:laxspec2}
\ee

\paragraph{Spectral bands and gaps.}
Equation~\eqref{e:laxspec2} implies that the Lax spectrum is a subset of the contour 
$\Gamma:=\{z\in\Complex: \Im \Delta(z)=0\}$ that is 
composed of the real $z$-axis plus a countable set of curves in the complex $z$-plane.
Each of these curves is a single Schwarz-symmetric analytic curve starting from infinity and ending at infinity in the complex plane.
The second condition in \eqref{conditons4laxspec}, namely, $-1\leq\Re\Delta\leq 1$, then divides each of the curves into a number (zero, one or more) of spectral bands. 
Denoting by $\Gamma_n$, for $n\in\mathbb{N}$, 
those analytic curves that contain a non-zero number of spectral bands,
we have 
\be
\label{e:Gamma}
\Gamma= \Real\cup\bigg(\bigcup_{n\in\Integer}\Gamma_n\bigg)\cup\Gamma_\mathrm{res},
\ee
where $\Gamma_\mathrm{res}$ is the union of all the analytic curves that do not contain any spectral bands.
Each finite portion of\, $\Gamma_n$ where $|\Re\Delta|\geq1$ and that is delimited by a band on either side is called a spectral gap.
Throughout this work, we will take the natural orientation of the real $z$-axis (i.e., oriented according to increasing values of $\Re z$) 
and we will take each $\Gamma_n$ to be oriented so that $\Im z$ is increasing in a neighborhood of infinity. 
With the above definition, one can rigorously talk about bands and gaps, as in a self-adjoint problem. The difference is that the bands and gaps are not restricted to lie on the real $z$-axis, but lie instead along curves $\Gamma_n$. Moreover, different spectral bands can intersect at saddle points of $\Delta$. 
Note also that no arc $\Gamma_n$ can be closed, and that  
each arc $\Gamma_n$ can intersect with another arc at most once.
(This because, in either case, one would obtain a closed contour on which $\Im\D(z)=0$, which would imply 
$\Im\D(z)\equiv0$ $\forall z\in\C$, which is a contradiction.)

\noindent
\begin{proposition}
If $q\in C^2(\Real)$, the Lax spectrum $\Sigma(\L)$ has the following properties
(e.g., see \cite{gesztesyweikard_acta1998,BOT_JST2023}):
\vspace*{-1ex}
\begin{itemize}
\advance\itemsep-4pt
\item It consists of a countable number of regular analytic arcs, the so-called spectral bands. 
\item The real $z$-axis is an infinitely long spectral band, i.e., $\Real\subset\Sigma(\L)$.
\item With the exception of the real axis, no spectral bands extend to infinity. 
\end{itemize}
\end{proposition}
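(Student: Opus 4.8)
The plan is to reduce all three assertions to elementary properties of the entire Floquet discriminant $\D(z)=\half\tr M(z)$, via the description $\Sigma(\L)=\D^{-1}([-1,1])$ from~\eqref{e:laxspec2}. I would first dispose of the real axis. For real $z=\lambda$ the coefficient matrix $X=-\i\lambda\sigma_3+Q$ of the ZS system~\eqref{e:zs} is skew-Hermitian --- the off-diagonal structure~\eqref{e:Q} of $Q$ gives $X^{\dagger}=-X$ --- so the principal matrix solution $Y(x,\lambda)$ is unitary for every $x$. Hence $M(\lambda)=Y(L,\lambda)\in SU(2)$, its eigenvalues are $\e^{\pm\i\theta(\lambda)}$ with $\theta(\lambda)\in\Real$, and therefore $\D(\lambda)=\cos\theta(\lambda)\in[-1,1]$. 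This gives $\Real\subset\Sigma(\L)$; since moreover $|\D|\le1$ everywhere on $\Real$, there are no spectral gaps along the real axis, so $\Real$ is a single, infinitely long band.

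Next I would establish the analytic-arc structure. Since $\Sigma(\L)\subset\Gamma=\D^{-1}(\Real)$ and $\D$ is entire and --- by the asymptotics below --- non-constant, the zeros of $\D'$ are isolated; away from them $\D$ is locally biholomorphic, so $\Gamma$ is locally a single real-analytic arc, while at a zero of $\D'$ of order $m$ exactly $2(m+1)$ analytic arcs meet. A closed subset of $\Complex$ that is locally a finite union of analytic arcs is a countable union of analytic arcs. Restricting $\D$ to each such arc, parametrized by arclength, gives a real-analytic function, so the set where $\D\in[-1,1]$ is a countable union of closed sub-arcs --- the spectral bands --- whose endpoints lie among the isolated zeros of $\D\mp1$ and of $\D'$, the latter being the saddle points of $\D$ where distinct bands may cross. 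This gives the first bullet.

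Finally I would prove that no band other than $\Real$ is unbounded; this is where the real work lies. Iterating the Volterra equation~\eqref{phi} and integrating by parts in the iterated integrals (using $q\in C^2$, although $q\in C^1$ already suffices) gives the classical expansion $\D(z)=\cos(Lz)+R(z)$ with $|R(z)|\le C\,\e^{L|\Im z|}/|z|$, and similarly $\D'(z)=-L\sin(Lz)+O\big(\e^{L|\Im z|}/|z|\big)$ as $|z|\to\infty$; cf.~\cite{gesztesyweikard_acta1998,BOT_JST2023}. From this I would extract two facts. (i) For every $\delta>0$ the set $\Sigma(\L)\cap\{|\Im z|\ge\delta\}$ is bounded: on $\Gamma$ one has $|\sin(L\Re z)|\,|\sinh(L\Im z)|=|\Im R(z)|=O(\e^{L|\Im z|}/|z|)$, and since $|\sinh(L\Im z)|$ is bounded below by a positive multiple of $\e^{L|\Im z|}$ when $|\Im z|\ge\delta$, this forces $\sin(L\Re z)\to0$, hence $|\cos(L\Re z)|\to1$, hence $|\Re\D(z)|\ge(1-o(1))\cosh(L\delta)>1$ for $|z|$ large (as $\cosh(L\delta)>1$), so $z\notin\Sigma(\L)$. (ii) Along each arc $\Gamma_n$ one has $|\Im z|\to\infty$ towards both ends at infinity: by the same estimate, on the part of $\Gamma_n$ where $|\Im z|$ stays bounded, for $|z|$ large either $\Re z$ is within $o(1)$ of $\tfrac{\pi}{L}\Integer$ or $\Im z$ is within $o(1)$ of $0$; in the latter case $\D'\neq0$, so the implicit function theorem makes $\Gamma$ coincide locally with $\Real$, a different component --- impossible; so $\Re z$ must stay near $\tfrac{\pi}{L}\Integer$, and by connectedness near a single lattice point, which keeps $|\Re z|$ bounded and thus contradicts $|z|\to\infty$. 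Combining: a band $B\neq\Real$ cannot lie in $\Gamma_{\mathrm{res}}$ (no bands there) nor equal $\Real$, so $B\subset\Gamma_n$ for some $n$; were it unbounded it would contain points with $|\Im z|\to\infty$, contradicting (i). Hence every band other than the real axis is bounded.

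The first two steps are routine. The main obstacle is the last one: establishing the sharp $\e^{L|\Im z|}/|z|$ control on $\D-\cos(Lz)$ and on $\D'$ directly from~\eqref{phi} --- the bookkeeping of the Volterra iteration together with one integration by parts --- and then converting it into a clean exclusion of ``phantom'' bands, i.e.\ bands that would cling to the real axis at large $|\Re z|$ or escape to infinity through the thin necks of $\Gamma$ near the points $\tfrac{\pi}{L}\Integer$.
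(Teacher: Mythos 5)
Your proposal is substantively correct, and — since the paper simply cites \cite{gesztesyweikard_acta1998,BOT_JST2023} for this proposition rather than proving bullets one and three — your argument is an independent proof rather than a parallel of a proof in the text. For the second bullet, the paper \emph{does} give a proof in Appendix~\ref{a:spectrum}: it writes out the monodromy symmetry $M(z)=\begin{pmatrix}M_{11}&M_{12}\\-M_{12}^*&M_{11}^*\end{pmatrix}$ for real $z$, uses $\det M=1$ to get $|M_{11}|\le1$, and concludes $|\Delta|=|\Re M_{11}|\le1$. Your observation that $X^\dagger=-X$ for real $z$, hence $Y(\cdot,\lambda)$ unitary, hence $M(\lambda)\in SU(2)$, is exactly the conceptual content behind that symmetry, so the two arguments are equivalent; yours explains \emph{why} the matrix has that form rather than invoking it.

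For the first and third bullets you take the route through the large-$z$ asymptotics of $\Delta$, which is indeed the standard one and is consistent with the paper's own Proposition~\ref{asymD}. The local analytic-arc structure of $\Delta^{-1}(\Real)$ around zeros of $\Delta'$ is fine. Your step (i) (the set $\Sigma(\L)\cap\{|\Im z|\ge\delta\}$ is bounded) is correct: on $\Gamma$ the identity $\Im\Delta=0$ forces $\sin(L\Re z)\sinh(L\Im z)=O(\e^{L|\Im z|}/|z|)$, hence $|\sin(L\Re z)|=O(1/|z|)$ when $|\Im z|\ge\delta$, hence $|\Re\Delta|\ge(1-o(1))\cosh(L\delta)>1$. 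Your step (ii) is also correct but very terse at the crucial moment: when you rule out the alternative $\Im z\to0$ you rely on $\Delta'\ne0$ near the real axis away from $\tfrac{\pi}{L}\Z$ plus the implicit function theorem to force the local level set to coincide with $\Real$, and then ``connectedness near a single lattice point'' to bound $\Re z$; both claims are true, but each hides a nontrivial piece of bookkeeping (the shrinking $O(|z|^{-1/2})$ neighborhoods of the lattice points must be seen to be eventually disjoint, and the band must be recognized as a single sub-arc of $\Gamma_n$ so that a continuously varying lattice index is forced to be constant). A shorter path to the same conclusion, which fits more naturally with the machinery the paper already has available, is to combine your fact (i) with Theorem~\ref{spectraprop}: the Counting Lemma (item~4) together with the asymptotics (item~5) place all but finitely many main eigenvalues inside discs $D_j$ of radius $\pi/4$ centered at $j\pi/L\in\Real$, and in fact within $O(1/j)$ of those real points, so every non-real band whose edges lie at large $|z|$ is confined to one of these shrinking discs and hence is bounded; the finitely many remaining bands off $\Real$ are bounded trivially.
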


For $z\in\Complex$, we denote the roots of~\eqref{tracepoly} as
\vspace*{-1ex}
\be
\rho_{1,2}(z) = \D(z) \mp (\D^2(z)-1)^{1/2},
\label{rho}
\ee
with $\rho_1$ and $\rho_2$ associated respectively with the plus and minus sign in~\eqref{rho},
for some appropriate choice of the complex square root, 
to be defined next.
Obviously $\rho_{1,2}(z)$ satisfy the relation $\rho_1(z) = 1/\rho_2(z)$.
For each $z\in\Sigma(\L)$, the condition $\D(z)\in[-1,1]$ implies $|\rho_{1,2}(z)|=1$, 
whereas for each $z\in\Complex\setminus\Sigma(\L)$, the condition $\D(z)\notin[-1,1]$ implies $|\rho_{1,2}(z)|\ne1$. 
It is also convenient to introduce the quantities
\be
r_\alpha(z) = (\Delta^2(z)-1)^\alpha,\qquad 
\alpha\in\R\,.
\ee
In particular, we can uniquely define the complex square root $r_{1/2}(z)$ in \eqref{rho} such that:
(i) its branch cut coincides with $\Sigma(\L)$;
(ii)
$r_{1/2}(z)\sim \Delta(z)$ as $|\Im z|\to\infty$; 
(iii)
$\forall z\in\Real$, $r_{1/2}(z)$ is continuous from above.
Moreover, $\forall z\in\Sigma(\L)\setminus\Real$, $r_{1/2}(z)$ is continuous from the left. 
%
We also define $\rho(z)$ be the root of~\eqref{tracepoly} that is analytic for $z\notin\Sigma(\L)$ and such that 
$|\rho(z)|<1$ for $z\in\C\setminus\Sigma(\L)$.
With the above definitions, we have $\rho(z) = \rho_1(z)$ $\forall z\notin\Sigma(\L)$.

\paragraph{Dirichlet spectrum.}
As is well known, knowledge of the Lax spectrum is not sufficient to uniquely reconstruct the potential.
Following \cite{MaAblowitz,McLaughlinOverman},
we also define the \textit{Dirichlet spectrum} associated with \eqref{e:zs} as follows:
\be
\Sigma_{\Dir}(x_o)
:=\{\mu \in \C : \exists~v \not\equiv 0 \in H^1([x_o,x_o+L],\C^2)~ ~\text{s.t.}~ ~ \L\,v = \mu\,v ~ \wedge~ v\in {\rm BC}_{{\rm Dir},x_o}
\}\,,
\label{e:dirichlet}
\ee
where ``BC$_{{\rm Dir},x_o}$'' denotes the following Dirichlet boundary conditions (BC) with base point $x_o$:
\be
\label{e:Dirbcs}
v_1(x_o)+v_2(x_o)=v_1(x_o+L)+v_2(x_o+L)=0\,,
\ee
with $v = (v_1,v_2)^T$.
Any point $\mu\in\Sigma_{\Dir}(x_o)$ will be referred to as a Dirichlet eigenvalue of~\eqref{e:Diraceigenvalueproblem}.
Similarly to the Floquet spectrum,
one can identify $\Sigma_{\Dir}(x_o)$ with the zero set of a suitable entire function.
For the ZS problem~\eqref{e:zs}, however,
additional complications arise compared to Hill's operator
(as was already shown in the self-adjoint case in \cite{BiondiniZhang2023}).
For this reason, 
we introduce the following similarity transformation, which will be instrumental not only for characterizing the Dirichlet spectrum,
but also for carrying out the inverse spectral theory:
\be
\~Y(x,z)=U\,Y(x,z)\,U^{-1},
\label{e:Ytildedef}
\ee
where
\be
U = \frac1{\sqrt2} \begin{pmatrix} 1 & 1 \\ -\i & \i \end{pmatrix}.
\label{e:Udef}
\ee
Then $\~Y(x,z)$ is the fundamental solution of the following modified scattering problem
\be
\~y_x = U(-\i z\sigma_3 + Q)U^{-1}\,\~y\,.
\label{e:modZS}
\ee
We also define
$\tilde{M}(z)=\~Y(L,z)$,
which is also an entire function of~$z$.
Moreover, since the trace and determinant are invariant under the transformation~\eqref{e:Ytildedef},
the Floquet eigenvalues of the modified scattering problem~\eqref{e:modZS},
i.e., the eigenvalues of $\~M(z)$, coincide with $\rho(z)$.
On the other hand, we now have:

\begin{proposition}
\label{p:dirzeros}
The Dirichlet spectrum $\{\mu_j\}$ with base point $x_o=0$ coincides with the set
\be
\Sigma_{\Dir}(0)= \left\{\mu \in \C : \~y_{12}(L,\mu) = 0\right\}.
\label{e:dirzeros}
\ee
\end{proposition}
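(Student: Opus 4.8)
The plan is to use the similarity transformation~\eqref{e:Ytildedef} to rewrite the Dirichlet boundary conditions~\eqref{e:Dirbcs} (with $x_o=0$) as the vanishing of a single entry of $\~M(\mu)$. The key elementary observation is that the first row of the matrix $U$ in~\eqref{e:Udef} equals $(1,1)/\sqrt2$, so that for any $v=(v_1,v_2)^T$ one has $(Uv)_1=(v_1+v_2)/\sqrt2$. Hence, if $v$ solves $\L v=\mu v$ on $[0,L]$ (equivalently $v_x=(-\i\mu\sigma_3+Q)v$), then $\~v:=Uv$ solves the modified scattering problem~\eqref{e:modZS} with $z=\mu$, and since $\~Y(0,\mu)=\mathbb{I}$ we have $\~v(x)=\~Y(x,\mu)\,\~v(0)$; moreover the two Dirichlet conditions in~\eqref{e:Dirbcs} with $x_o=0$ are precisely $\~v_1(0)=0$ and $\~v_1(L)=0$.

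For the forward inclusion, I would take a Dirichlet eigenvalue $\mu$ with eigenfunction $v\not\equiv0$. Because $\~Y(\cdot,\mu)$ is a fundamental matrix solution and $U$ is invertible, $v\not\equiv0$ forces $\~v(0)=Uv(0)\ne0$, and together with $\~v_1(0)=0$ this gives $\~v(0)=c\,(0,1)^T$ for some $c\ne0$. Evaluating $\~v=\~Y(\cdot,\mu)\,\~v(0)$ at $x=L$ yields $\~v(L)=\~M(\mu)\,(0,c)^T=c\,(\~y_{12}(L,\mu),\~y_{22}(L,\mu))^T$, so the remaining Dirichlet condition $\~v_1(L)=0$ reads $c\,\~y_{12}(L,\mu)=0$, i.e.\ $\~y_{12}(L,\mu)=0$. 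Conversely, given $\mu$ with $\~y_{12}(L,\mu)=0$, I would set $\~v(x):=\~Y(x,\mu)\,(0,1)^T$ and $v:=U^{-1}\~v$; then $v$ solves $\L v=\mu v$ on $[0,L]$, it is nontrivial since $\~v(0)=(0,1)^T\ne0$ with $\~Y$ and $U$ invertible, and $v\in H^1([0,L],\C^2)$ by standard ODE theory under the running hypothesis $q\in C^2([0,L])$. Finally $v_1(0)+v_2(0)=\sqrt2\,\~v_1(0)=0$ and $v_1(L)+v_2(L)=\sqrt2\,\~v_1(L)=\sqrt2\,\~y_{12}(L,\mu)=0$, so $v$ satisfies $\mathrm{BC}_{\mathrm{Dir},0}$ and $\mu\in\Sigma_{\Dir}(0)$.

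I do not expect a genuine obstacle in this argument: the entire content is the bookkeeping of the change of variables $U$, which was in fact introduced precisely so that the Dirichlet condition becomes the vanishing of the first component of the transformed eigenfunction at both endpoints. The only points requiring (routine) care are the nontriviality argument, handled via invertibility of the principal fundamental solution, and the $H^1$-regularity of solutions of the linear ODE, which is immediate from $q\in C^2([0,L])$. One should also note in passing that an identical computation would give the analogous characterization of $\Sigma_{\Dir}(x_o)$ for a general base point $x_o$ in terms of $\~Y$ evaluated at $x_o$ and $x_o+L$.
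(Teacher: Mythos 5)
Your argument is correct and is precisely the one the paper's setup is designed to enable: the paper introduces the similarity transformation $U$ in~\eqref{e:Ytildedef}--\eqref{e:Udef} explicitly so that the Dirichlet boundary conditions~\eqref{e:Dirbcs} become $\~v_1(0)=\~v_1(L)=0$ for $\~v=Uv$, and then the characterization~\eqref{e:dirzeros} falls out from $\~v(0)\in\mathrm{span}\{(0,1)^T\}$ and the principal normalization $\~Y(0,z)=\mathbb{I}$. The paper states Proposition~\ref{p:dirzeros} without writing out the proof, evidently regarding it as immediate from the preceding definitions; your write-up supplies exactly the bookkeeping that was left implicit (the equivalence of the scattering problem under conjugation by $U$, nontriviality via invertibility of $\~Y$ and $U$, and the $H^1$ regularity from $q\in C^2$), and there is nothing missing or incorrect.
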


\subsection{Further properties of the spectrum. Spectral data}

The formulation of the inverse spectral theory in section~\ref{s:inverse}
will require the use of some known results from spectral theory, 
which are summarized for convenience in the following theorem
\cite{grebertkappelermityagin,kappeler,McLaughlinOverman,MaAblowitz}
(a proof of some of these results is also given in Appendix~\ref{a:spectrum}):  
\begin{theorem}
\label{spectraprop}
The main spectrum and Dirichlet spectrum of the Dirac operator $\L$~\eqref{e:Diracoperator} with $L$-periodic potential 
$q\in L^2(\R,\C)$ satisfy the following properties:
\begin{enumerate}
\item 
The main spectrum $\{\z_j\}_{j\in\Integer}$ is defined as the roots of the equation $\Delta^2(z)-1=0$. 
The main eigenvalues\, $\zeta_j$ are either real or arise in complex conjugate pairs. 
\item 
Each real eigenvalue $\zeta_j$ is also a Dirichlet eigenvalue.
\item 
There exists an integer $J>0$ such that the entire function $\~y_{12}(L,z)$ has exactly one root in each disc:
\be\label{e:Dj}
D_j=\left\{z\in\Complex: |z-j\pi|<\frac{\pi}{4}\right\},\qquad |j|>J.
\ee
and exactly $2J+1$ roots in the disc $B_J=\{z\in\Complex:|z|<J\pi+\pi/4\}$ when counted with their multiplicities. 
There are no other roots.
\item 
(Counting Lemma) 
There exists an integer $J>0$ such that the entire function $\Delta^2(z)-1$ has exactly two roots in each disc $D_j$ with $|j|>J$ 
and exactly $4J+2$ roots in the disc $B_J=\{z\in\Complex:|z|<J\pi+\pi/4\}$ when counted with their multiplicities. There are no other roots.
\item 
If $q\in C^2(\Real)$, 
the periodic, antiperiodic and Dirichlet eigenvalues $\z_j$ and $\mu_j$  have the following asymptotic behavior as $|j|\to\infty$:
\be
\z_{2j},\z_{2j-1},\mu_j = \frac{\pi j}{L}+O\left(\frac{1}{j}\right).
\label{asymmuz}
\ee
\end{enumerate}
\end{theorem}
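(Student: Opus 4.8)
\smallskip
The plan is to obtain items~1--2 from a symmetry reduction of the monodromy matrix $M(z)=Y(L,z)$, the two counting lemmas (items~3--4) from the Volterra representation~\eqref{phi} together with Rouch\'e's theorem, and the localization estimates (item~5) by bootstrapping from sharpened large-$z$ expansions of $\D(z)$ and $\~y_{12}(L,z)$. For items~1 and~2, first I would record the discrete symmetry of~\eqref{e:zs}: since $\sigma_2\sigma_3\sigma_2^{-1}=-\sigma_3$ and $\sigma_2\,\overline{Q(x)}\,\sigma_2^{-1}=Q(x)$ for $Q$ as in~\eqref{e:Q}, the matrix $\sigma_2\,\overline{Y(x,\bar z)}\,\sigma_2^{-1}$ solves~\eqref{e:zs} and equals $\mathbb I$ at $x=0$, so by uniqueness $Y(x,z)=\sigma_2\,\overline{Y(x,\bar z)}\,\sigma_2^{-1}$ for all $x$; evaluating at $x=L$ gives $M(z)=\sigma_2\,\overline{M(\bar z)}\,\sigma_2^{-1}$, hence $\D(z)=\overline{\D(\bar z)}$ and $\D^2(z)-1=\overline{\D^2(\bar z)-1}$. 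Since $\D^2-1$ is entire and (by the asymptotics below) not identically zero, its zero set is symmetric under $z\mapsto\bar z$, which is item~1. For item~2, the same identity restricted to $z\in\Real$ forces $M(z)=\bigl(\begin{smallmatrix} a & b\\ -\bar b & \bar a\end{smallmatrix}\bigr)$ with $a=M_{11}(z)$, $b=M_{12}(z)$, and $\det M\equiv1$ then gives $|a|^2+|b|^2=1$, so $\D(z)=\Re a(z)$ with $|\D(z)|\le|a(z)|\le1$ on the real axis; at a real main eigenvalue $\z_j$ we have $|\D(\z_j)|=1$, which forces $\Im a(\z_j)=0$, $|a(\z_j)|=1$ and $b(\z_j)=0$, so $M(\z_j)=\pm\mathbb I$, hence $\~M(\z_j)=U\,M(\z_j)\,U^{-1}=\pm\mathbb I$ and $\~y_{12}(L,\z_j)=0$, i.e.\ $\z_j\in\Sigma_{\Dir}(0)$ by Proposition~\ref{p:dirzeros}.

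\smallskip
For items~3 and~4 I would iterate~\eqref{phi}. Since $Q$ is off-diagonal the first Neumann term contributes nothing to $\tr M$, and for $q\in L^2([0,L])$ standard estimates of the resulting Neumann series give, uniformly as $|z|\to\infty$,
\be
\D(z)=\cos(zL)+o\!\left(\e^{|\Im z|L}\right),\qquad \~y_{12}(L,z)=\sin(zL)+o\!\left(\e^{|\Im z|L}\right),
\ee
the leading terms being the values of $\D$ and of $\~M_{12}$ at $q\equiv0$ (there $M=\e^{-\i z\sigma_3 L}$ and $\~M=U\e^{-\i z\sigma_3 L}U^{-1}=\bigl(\begin{smallmatrix}\cos zL & \sin zL\\ -\sin zL & \cos zL\end{smallmatrix}\bigr)$). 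Combined with the elementary bound $|\sin\zeta|\ge c\,\e^{|\Im\zeta|}$ valid whenever $\mathrm{dist}(\zeta,\pi\Integer)\ge\pi/4$, Rouch\'e's theorem on the circles $\partial D_j$ then gives exactly one zero of $\~y_{12}(L,z)$ in $D_j$ for all large $|j|$ and --- since $\D^2(z)-1=-\sin^2(zL)+o(\e^{2|\Im z|L})$ --- exactly two zeros of $\D^2-1$ there, while Rouch\'e on $\partial B_J$ gives the same zero counts as $\sin(zL)$ and $\sin^2(zL)$, namely $2J+1$ and $4J+2$. The same lower bound applied for $z\notin\bigcup_j D_j$ with $|z|$ large shows there are no further zeros, and enlarging $J$ to absorb the finitely many exceptional indices and the bounded region where the estimates are not yet effective completes items~3 and~4.

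\smallskip
For item~5, with $q\in C^2$ I would sharpen the expansions by one more integration by parts in the second Neumann term, obtaining
\be
\D(z)=\cos(zL)-\frac{\|q\|_{L^2([0,L])}^2}{2z}\sin(zL)+O\!\left(\frac{\e^{|\Im z|L}}{z^2}\right),\qquad \~y_{12}(L,z)=\sin(zL)+O\!\left(\frac{\e^{|\Im z|L}}{z}\right).
\ee
On $D_j$ the imaginary part of $z$ is bounded, so for the Dirichlet eigenvalue $\mu_j\in D_j$ the second relation gives $\sin(\mu_jL)=O(1/j)$, hence $\mu_j=\pi j/L+O(1/j)$ (using $|\sin\eta|\ge\tfrac2\pi|\eta|$ for $|\eta|\le\pi/4$). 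For a main eigenvalue $\z\in D_j$ the first relation first yields $\sin^2(\z L)=O(1/j)$, i.e.\ the rough localization $\z=\pi j/L+O(1/\sqrt j)$; in that smaller neighbourhood one may Taylor-expand about $\pi j/L$ to get $w\bigl(w+\|q\|_{L^2}^2/(\pi j)\bigr)=O(1/j^2)$ with $w:=\z-\pi j/L$, which forces $|w|=O(1/j)$ in either case $|w|\ge 2\|q\|_{L^2}^2/(\pi j)$ or $|w|<2\|q\|_{L^2}^2/(\pi j)$. Hence $\z_{2j},\z_{2j-1}=\pi j/L+O(1/j)$.

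\smallskip
The real work lies in the uniform large-$z$ estimates underpinning items~3--5: one must control the Neumann series for $Y(x,z)$ with errors that are small \emph{relative to} $\e^{|\Im z|L}$, uniformly in $z\in\C$, and for item~5 one additionally needs the exact $O(1/z)$ coefficient in the expansion of $\D$, without which the two main eigenvalues in $D_j$ are only pinned down to $O(1/\sqrt j)$. By contrast, the one feature genuinely particular to the non-self-adjoint setting --- the $SU(2)$-type form of $M(z)$ on $\Real$ used in item~2 --- is elementary, and the remainder of the argument parallels the self-adjoint case \cite{McLaughlinNabelek,BiondiniZhang2023}.
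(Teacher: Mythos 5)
Your proof is correct and takes essentially the same approach as the paper. Items 1--2 use the $\sigma_2$-symmetry of the monodromy matrix together with $\det M\equiv1$, exactly as in the paper's Appendix~\ref{a:spectrum} (Propositions there work directly with $\~M$ and conclude $|\~M_{11}|=1$, $\~M_{12}=0$; you conclude the equivalent, slightly sharper, fact $M(\z_j)=\pm\mathbb I$ and then conjugate), while items 3--5 are the standard Rouch\'e-plus-bootstrapped-asymptotics arguments that the paper does not reproduce but attributes to \cite{gesztesyweikard_acta1998,kappeler}.
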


The proofs of the above ``counting lemmas'' can be found in \cite{gesztesyweikard_acta1998,kappeler}, 
and are straightforward applications of Rouche's theorem. 
Importantly, these results induce a natural two-to-one map between the main eigenvalues $\zeta_j$ and the Dirichlet eigenvalues $\mu_j$.

\begin{assumption}
\label{assump}
Throughout the rest of this work, we will make the following assumptions for the main spectrum and Dirichlet spectrum:
(i)
   Every analytic arc $\Gamma_n$ that contains one or more spectral bands intersects the real axis. 
(ii)
    All complex main eigenvalues are simple.
(iii)
    All the Dirichlet eigenvalues $\mu_j$ are simple zeros of $\~y_{12}(L,z)=0$,
    except those which coincide with real zeros of $\Delta^2-1$ with multiplicity $m$ higher than 2, which have multiplicity $m/2$.
\end{assumption}


(Regarding the last item of Assumption~\ref{assump},
we note that all real main eigenvalues are zeros of $\Delta^2-1$ with even multiplicity, since the real axis is one infinitely long band.)
Without loss of generality, we can label the points of the main spectrum such that $\zeta_{2j-1}=\zeta_{2j}^*$ with $\Im \zeta_{2j}\geq 0$. 
Moreover, employing the lexicographic ordering of complex numbers defined by
\be
a\preccurlyeq b :=\begin{cases}
	\Re a< \Re b,\\
	\Re a=\Re b ~\wedge~ \Im a\leq \Im b, 
	\end{cases}
\ee
we can take $\zeta_{2j}\preccurlyeq \zeta_{2j+2}$ for $\forall j\in\Integer$. 
Given Theorem \ref{spectraprop} and Assumption \ref{assump}, when $\z_{2j} = \z_{2j-1}$
(implying that $\z_{2j}$ and $\z_{2j-1}$ are real) we sort the Dirichlet eigenvalues so that $\mu_j = \z_{2j}$.
Note that we cannot say whether $\mu_j\preccurlyeq \mu_{j+1}$ for $\forall j\in\Integer$. 
However, by Theorem \ref{spectraprop},  there exists a $J\in\mathbb{N}$, s.t., $\mu_j\preccurlyeq \mu_{j+1}$ for $\forall |j|>J$.

\begin{lemma}
\label{y2}
At each value $z = \mu_j$,  one has 
$\~y_{22}(L,\mu_j) = 1/\~y_{11}(L,\mu_j)$, with one of them equal to $\rho(\mu_j)$ or $\rho^{-1}(\mu_j)$, 
and the vector-valued fundamental solution $\~y_2(x,\mu_j)$ is a Bloch-Floquet solution of~\eqref{e:modZS}
with Floquet multiplier 
 $\~y_{22}(L,\mu_j)$.
\end{lemma}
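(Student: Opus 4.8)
The plan is to exploit the two characterizations of the Dirichlet eigenvalue $\mu_j$ that are already available: on the one hand, by Proposition~\ref{p:dirzeros}, $\mu_j$ is a zero of $\~y_{12}(L,z)$; on the other, since $\det\~M(z)\equiv1$, the eigenvalues of the monodromy matrix $\~M(z)=\~Y(L,z)$ are the Floquet multipliers $\rho(z)$ and $1/\rho(z)$. First I would write out $\~M(\mu_j)$ explicitly. Because $\~y_{12}(L,\mu_j)=0$, the matrix $\~M(\mu_j)$ is lower triangular, so its eigenvalues are exactly its diagonal entries $\~y_{11}(L,\mu_j)$ and $\~y_{22}(L,\mu_j)$. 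Combined with $\det\~M(\mu_j)=\~y_{11}(L,\mu_j)\,\~y_{22}(L,\mu_j)=1$, this immediately gives $\~y_{22}(L,\mu_j)=1/\~y_{11}(L,\mu_j)$, and since the eigenvalues must be $\rho(\mu_j)$ and $\rho^{-1}(\mu_j)$, one of the two diagonal entries equals $\rho(\mu_j)$ and the other equals $\rho^{-1}(\mu_j)$. (Here I would note that $\rho(\mu_j)$ is still well-defined even if $\mu_j\in\Sigma(\L)$, by taking the appropriate boundary value, as fixed in the paragraph defining $\rho(z)$ and $r_{1/2}$.)

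Next I would identify the Bloch-Floquet eigenfunction. The vector solution $\~y_2(x,z)$ is the second column of $\~Y(x,z)$, i.e.\ the solution of~\eqref{e:modZS} with initial data $\~y_2(0,z)=(0,1)^T$. By~\eqref{e:monodromy}--\eqref{e:monodromy2} applied to $\~Y$, one has $\~y_2(x+L,z)=\~Y(x,z)\,\~M(z)\,e_2$ where $e_2=(0,1)^T$. At $z=\mu_j$, since $\~M(\mu_j)$ is lower triangular, $\~M(\mu_j)\,e_2=\~y_{22}(L,\mu_j)\,e_2$, so $e_2$ is an eigenvector of $\~M(\mu_j)$ with eigenvalue $\~y_{22}(L,\mu_j)$. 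Therefore $\~y_2(x+L,\mu_j)=\~y_{22}(L,\mu_j)\,\~Y(x,\mu_j)\,e_2=\~y_{22}(L,\mu_j)\,\~y_2(x,\mu_j)$, which is precisely the Bloch-Floquet relation~\eqref{e:bdsol} for $\~y_2(\cdot,\mu_j)$ with Floquet multiplier $\~y_{22}(L,\mu_j)$.

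The only genuinely substantive point — and the step I expect to need the most care — is making sure $\~y_2(x,\mu_j)$ is \emph{nontrivial}, so that it is a legitimate Bloch-Floquet solution. This is automatic here because $\~y_2(0,\mu_j)=(0,1)^T\neq0$ and solutions of the linear ODE~\eqref{e:modZS} with nonzero initial data never vanish identically; one should also observe that $\~y_{22}(L,\mu_j)\neq0$ since it equals $\rho(\mu_j)^{\pm1}$ and $\rho$ never vanishes (again a consequence of $\det\~M\equiv1$). A secondary subtlety worth a sentence is the degenerate/higher-multiplicity case flagged in Assumption~\ref{assump}(iii), where $\mu_j$ coincides with a real main eigenvalue: there $\rho(\mu_j)=\pm1$ and both diagonal entries of $\~M(\mu_j)$ equal $\rho(\mu_j)$, but the statement of the lemma still holds verbatim, with $\~y_{22}(L,\mu_j)=\rho(\mu_j)=\rho^{-1}(\mu_j)=\pm1$; I would remark that the argument above goes through unchanged since it never used simplicity of the eigenvalue.
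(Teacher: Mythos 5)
Your proof is correct and follows essentially the same approach as the paper: both exploit the fact that $\~y_{12}(L,\mu_j)=0$ makes the monodromy matrix $\~M(\mu_j)$ lower triangular, use the determinant relation to deduce $\~y_{22}=1/\~y_{11}$, and conclude from the monodromy relation that $\~y_2(x,\mu_j)$ is a Bloch-Floquet solution with multiplier $\~y_{22}(L,\mu_j)$. The paper phrases the last step as expanding $\~y_2(x+L,\mu_j)$ over the basis $\~y_1,\~y_2$ and evaluating at $x=0$, whereas you phrase it as $e_2$ being an eigenvector of $\~M(\mu_j)$; these are the same computation.
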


\begin{proof}
The Dirichlet eigenfunction  $\~y_2(x,\mu_j)$ solves the modified ZS problem with spectral parameter $z=\mu_j$, 
and so does $\~y_2(x+L,\mu_j)$.  
So we have $\~y_2(x+L,\mu_j)=a\~y_1(x,\mu_j)+b\~y_2(x,\mu_j)$ for some constants $a$ and $b$. 
Evaluating this expression at $x=0$, we conclude $a=0$ and $b=\~y_{22}(L,\mu_j)$. 
Therefore, by Floquet's theorem, $\~y_2(x,\mu_j)$ is a Bloch-Floquet solution with multiplier $\~y_{22}(L,\mu_j)=\rho(\mu_j)$  or $\~y_{22}(L,\mu_j)=\rho(\mu_j)^{-1}$. Since $\det \~Y(L,\mu_j)=1$, we also have $\~y_{22}(L,\mu_j)= 1/\~y_{11}(L,\mu_j)$.
\end{proof}

As in the defocusing case, the main spectrum $\{\zeta_j\}_{j\in\Integer}$ 
decomposes into nondegenerate band edges $E_j$ and degenerate band edges $\hat{\zeta}_j$.
By Theorem~\ref{spectraprop} and Assumption~\ref{assump}, each $E_j$ is a simple root of $\Delta^2-1=0$, while each $\hat{\zeta}_j$s is a double root.
Let $\Sigma^o(\mathcal L) = \Sigma(\mathcal L)\setminus\{\zeta_j\}_{j\in\Integer}$ denote the interior of the Lax spectrum
minus all the double main eigenvalues. 
An important distinction between the defocusing and focusing cases is that, in the former, the Dirichlet eigenvalues are confined to 
the (degenerate or non-degenerate) band gaps, whereas in the latter they can also lie in the interior of the spectral bands.
(We will show examples of this.)

\begin{definition}
\label{nudef}
For each Dirichlet eigenvalue $\mu_k\notin\Sigma^o(\mathcal L)$, we define $\nu_k=-\sgn(\log|\~y_{22}(L,\mu_k)|)$,
with the signum function taken to be zero when its argument is zero.
For each $\mu_k\in\Sigma^o(\mathcal L)$ 
we take $\nu_k=1$ if $\~y_{22}(L,\mu_k) = \rho(\mu_k)$ and $\nu_k = -1$ if $\~y_{22}(L,\mu_k) = 1/\rho(\mu_k)$.
With this definition, $\nu_k$ is such that $\~y_{22}(L,\mu_k) = \rho^{\nu_k}(\mu_k)$.
\end{definition}
\begin{proposition}
For any Dirichlet eigenvalue $\mu_j$, $\nu_j=0$ if and only if $\mu_j$ sits at a band edge, i.e., $\mu_j \in \{\zeta_k\}_{k\in\Integer}$.
\end{proposition}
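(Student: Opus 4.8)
The plan is to prove both implications directly from Definition~\ref{nudef} together with Lemma~\ref{y2}. Recall that, by Lemma~\ref{y2}, for every Dirichlet eigenvalue $\mu_j$ the quantity $\~y_{22}(L,\mu_j)$ equals one of the two Floquet multipliers, i.e., $\~y_{22}(L,\mu_j)\in\{\rho(\mu_j),\rho^{-1}(\mu_j)\}$. The key analytic fact to bring in is the dichotomy already recorded in the excerpt: for $z\in\Sigma(\L)$ one has $|\rho_{1,2}(z)|=1$, hence $|\~y_{22}(L,\mu_j)|=1$ and $\log|\~y_{22}(L,\mu_j)|=0$; whereas for $z\in\C\setminus\Sigma(\L)$ one has $|\rho_{1,2}(z)|\ne1$, hence $\log|\~y_{22}(L,\mu_j)|\ne0$. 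So the vanishing of $\nu_j$ is governed entirely by whether $\mu_j$ lies on the Lax spectrum, except that on $\Sigma^o(\L)$ the definition of $\nu_k$ is made by a different rule ($\pm1$ according to which multiplier is picked out), and one must check these two rules are consistent.

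First I would handle the direction ``$\mu_j$ a band edge $\implies\nu_j=0$''. If $\mu_j=\zeta_k$ for some $k$, then $\Delta^2(\mu_j)-1=0$, so $\rho_1(\mu_j)=\rho_2(\mu_j)=\Delta(\mu_j)=\pm1$, i.e., the two Floquet multipliers collide at $\pm1$. Since band edges lie on $\Sigma(\L)$ but not in $\Sigma^o(\L)$ (the latter excludes all main eigenvalues), the first clause of Definition~\ref{nudef} applies: $\nu_j=-\sgn(\log|\~y_{22}(L,\mu_j)|)$. By Lemma~\ref{y2}, $\~y_{22}(L,\mu_j)=\rho^{\pm1}(\mu_j)=\pm1$, so $|\~y_{22}(L,\mu_j)|=1$, $\log|\~y_{22}(L,\mu_j)|=0$, and the signum convention gives $\nu_j=0$.

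For the converse ``$\nu_j=0\implies\mu_j$ a band edge'', I would argue by contraposition on the two cases of Definition~\ref{nudef}. If $\mu_j\in\Sigma^o(\L)$, then by definition $\nu_j=\pm1\ne0$, so this case cannot produce $\nu_j=0$. If $\mu_j\notin\Sigma^o(\L)$ and $\mu_j$ is not a band edge, there are two sub-cases. Either $\mu_j\notin\Sigma(\L)$, in which case $|\rho(\mu_j)|\ne1$ and hence $|\~y_{22}(L,\mu_j)|=|\rho^{\pm1}(\mu_j)|\ne1$, so $\log|\~y_{22}(L,\mu_j)|\ne0$ and $\nu_j=\mp1\ne0$. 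Or $\mu_j\in\Sigma(\L)\setminus\Sigma^o(\L)$ but $\mu_j$ is not a band edge; since $\Sigma(\L)\setminus\Sigma^o(\L)$ consists precisely of the main eigenvalues $\{\zeta_k\}$, and the band edges $E_k$ together with the degenerate ones $\hat\zeta_k$ exhaust $\{\zeta_k\}$, the only remaining possibility is that $\mu_j=\hat\zeta_k$ is a degenerate (double) main eigenvalue — but by the ordering conventions fixed just before Lemma~\ref{y2} (when $\zeta_{2k}=\zeta_{2k-1}$ is real one sets $\mu_j=\zeta_{2k}$), a degenerate main eigenvalue \emph{is} labeled as a band edge in the sense ``$\mu_j\in\{\zeta_k\}_{k\in\Integer}$'' of the proposition, so this is not a genuine counterexample. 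Hence $\nu_j=0$ forces $\mu_j\in\{\zeta_k\}_{k\in\Integer}$.

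The main obstacle I expect is bookkeeping at the degenerate band edges and the interface $\partial\Sigma^o(\L)$: I must make sure that when $\mu_j$ sits at a double main eigenvalue the two clauses of Definition~\ref{nudef} do not conflict, and that the phrase ``$\mu_j$ sits at a band edge, i.e.\ $\mu_j\in\{\zeta_k\}_{k\in\Integer}$'' in the proposition is read to include degenerate band edges. Once one grants that reading, and recalls $\rho(\zeta_k)=\pm1$ at every main eigenvalue, the argument is the elementary case analysis above; the only genuinely used inputs are Lemma~\ref{y2}, the multiplier-modulus dichotomy on and off $\Sigma(\L)$, and the definition of the main spectrum as the zero set of $\Delta^2-1$.
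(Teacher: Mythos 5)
Your proof is correct and takes essentially the same approach as the paper: the converse direction is argued identically (Definition~\ref{nudef} forces $\mu_j\notin\Sigma^o(\L)$, while $|\rho(z)|\ne1$ off $\Sigma(\L)$ rules out the complement, leaving $\mu_j\in\Sigma(\L)\setminus\Sigma^o(\L)=\{\zeta_k\}$). For the forward direction the paper works directly with $\det\~Y=1$ and $\tr\~Y(L,\cdot)=\pm2$ to force $\~y_{11}=\~y_{22}=\pm1$, whereas you invoke Lemma~\ref{y2} and the collision $\rho=\rho^{-1}=\pm1$ at a main eigenvalue; these are the same computation phrased in terms of multipliers rather than matrix entries, and your extra care about the degenerate eigenvalues $\hat\zeta_k$ being included in ``$\mu_j\in\{\zeta_k\}$'' is a legitimate clarification of the statement's wording, not a gap.
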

\begin{proof}
Note first that, since $\det\~Y(x,z)=1$, whenever 
$\~y_{12}(L,\mu_j) =0$ one has $\~y_{11}(L,\mu_j)\~y_{22}(L,\mu_j) = 1$.
The first part of the result then follows 
because, 
at a band edge, $\~y_{11}(L,z) + \~y_{22}(L,z) = \pm2$,
and the above two conditions can only be simultaneously satisfied if $\~y_{11}(L,\mu_j) = \~y_{22}(L,\mu_j)$ 
and both are equal respectively to 1 or $-1$, implying $\nu_j=0$.
The second part of the result follows because, by Definition~\ref{nudef}, $\nu_j = 0$ can only happen when $\mu_j\notin\Sigma^o(\L)$.
But since $|\rho(z)|<1$ for all $z\notin\Sigma(\L)$, one can only have $|\~y_{12}(L,\mu_j)|=1$ if $\mu_j\in\Sigma(\L)$.
Therefore, $\mu_j\in\Sigma(\L)\setminus\Sigma^o(\L) = \{\zeta_k\}_{k\in\Integer}$.
\end{proof}

\begin{definition}(Fixed and movable Dirichlet eigenvalues)\label{def:dirichleteigenvalue}
Let $\mu\in\C$ be a Dirichlet eigenvalue associated to the monodromy matrix $\tilde{M}(z; x_o)$ with a given base point $x = x_o$, Then we say that $\mu$ is a fixed Dirichlet eigenvalue if
$\mu \in \Sigma_{\Dir}(L;x)$ for all $x\in\R$. Otherwise, we say $\mu\in\C$ is a movable Dirichlet eigenvalue. 
We use $\mucirc_k$ to denote either a movable Dirichlet eigenvalue or an element of $\Sigma^o(\mathcal L)$\,.
\end{definition}

We are now ready to define the spectral data that will uniquely determine the potential~$q(x)$.
Specifically, excluding 
the trivial potential $q(x)\equiv0$ from consideration, 
we define the spectral data associated to the potential $q$ as the set
\be
S(q):=\{{E}_{2k-1},{E}_{2k},\mucirc_k,\nu_k\}_{k=\g_-,\dots,\g_+}
\label{e:spectraldata}
\ee
Next we introduce a few additional quantities that will appear in the formulation of the Riemann-Hilbert problem in section~\ref{s:inverse}.

\begin{definition}
\label{def:varpi_gamma}
Let $m_n$ be the number of bands along $\Gamma_n$. Then we define $\varpi_0$ as some intersection in the following way:
	\vspace*{-1ex}
	\begin{enumerate}
	\item[(i)] If $m_0\!\!\mod 2\neq0$, let $n_0=0$ and $\varpi_0=\Gamma_0\cap\Real$.
	\item[(ii)] If $m_0\!\!\mod 2=0$, let $n_0$ be the smallest positive integer such that $m_0\mod 2\neq0$, and let $\varpi_0=\Gamma_{n_0}\cap\Real$.
	\end{enumerate}
Then let $n_1$ be the smallest integer greater than $n_0$ such that $m_{n_1}\mod 2\neq0$, and let $\varpi_1=\Gamma_{n_1}\cap\Real$.
 By induction, for $j\geq 2$, let $n_j$ be the smallest integer greater than $n_{j-1}$ such that $m_{n_j}\mod 2\neq0$, and $\varpi_j=\Gamma_{n_j}\cap\Real$. 
 Similarly, for $j<0$, let $n_{j-1}$ be the smallest integer less than $n_{j}$ such that $m_{n_j}\mod 2\neq0$, and $\varpi_j=\Gamma_{n_j}\cap\Real$.
Finally, let 
$\Gamma_n^\pm = \Gamma_n\cap \mathbb{C}^\pm$,
and let $g_n$ the number of non-degenerate main eigenvalues on $\Gamma_n^+$
Additionally, for all $n\in\Integer$, let 
\vspace*{-0.6ex}
\[
\Gamma_n^+=\bigcup\limits_{k=0}^{g_n}\Gamma_{n,k}^+,
\]
where $\Gamma_{n,k}^+$ is taken to denote either a band or a gap on $\Gamma_n^+$ depending on whether $k$ is respectively odd or even, 
with the first band from the top to bottom denoted as $\Gamma_{n,1}^+$,
and $\Gamma_{n,0}^+$ is the portion of $\Gamma_n^+$ extending out to infinity.
Correspondingly, in the lower half-plane we set $\Gamma_{n,k}^-=\Gamma_{n,k}^{+*}$.
\end{definition}

\begin{figure}[t!]
\centerline{\includegraphics[width=0.55\textwidth]{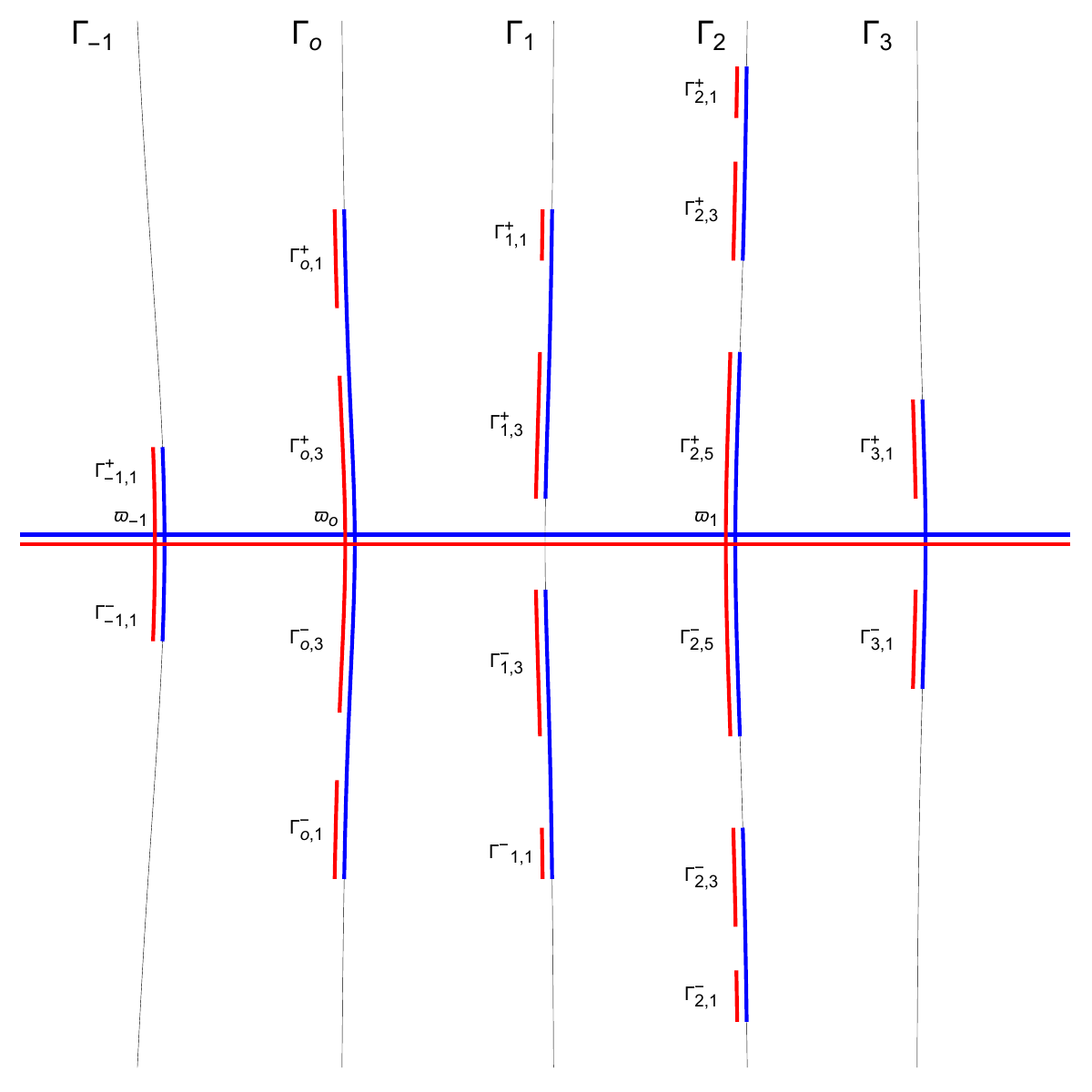}}
\caption{Schematic diagram showing a particular realization of the analytic arcs $\Gamma^\pm_{n,k}$,
non-degenerate main spectrum and the choice of branch cuts for $r_{1/2}(z)$ (in red) and for $r_{1/4}(z)$ (in blue).}
\label{f:fourthrootbranchcuts}
\end{figure}

Figure~\ref{f:fourthrootbranchcuts} is a cartoon showing a schematic representation of the spectrum, the analytic arcs $\Gamma_n$, etc.
Concrete examples of the class of potentials satisfying the above assumptions
(and which can therefore be treated with the present formalism)
include the genus-zero constant potential $q(x) = A\,\e^{i\alpha}$
with $A>0$ and $\alpha\in\Real$,
as well as $q(x) = A\,\dn(x,m)$ 
(where $\dn(x,m)$ is one of the Jacobi elliptic functions 
and $m$ is the elliptic parameter \cite{NIST}),
which was studied in \cite{AdvMath2023}, 
where it was shown that it is a finite-genus potential (with genus equal to $2A-1$) if and only if $A\in\Natural$.
We also believe that most of the finite-genus potentials of the focusing NLS equation \cite{BBEIM} will also satisfy the assumptions. 
See also \cite{DeconinckSegal} for other genus-one potentials of the focusing NLS equation.
The example a constant (genus-zero) potential will be treated in detail in Appendix~\ref{s:genuszero},
whereas the example of the genus-one potential $q(x) = \dn(x,m)$ will be treated in detail in Appendix~\ref{s:dn}. 

We end this section with a result that will allow us to simplify the formulation of the Riemann-Hilbert problem for a large class of potentials:

\begin{lemma}
If the potential $q(x)$ is real and even,
the Dirichlet eigenvalues with base point $x_o=0$ are located at points of the main spectrum.
\label{l:Dirichletbase0}
\end{lemma}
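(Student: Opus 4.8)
The plan is to exploit the reflection symmetry $x\mapsto -x$, which is available precisely because $q$ is both real and even. First I would record the shape of the coefficient matrix of the modified scattering problem~\eqref{e:modZS}. A short computation with $U$ from~\eqref{e:Udef} shows that, when $q$ is real,
\[
U(-\i z\sigma_3 + Q)U^{-1} = \begin{pmatrix} 0 & z-\i q(x) \\ -z-\i q(x) & 0 \end{pmatrix} =: \tilde X(x,z),
\]
which is off-diagonal. The two structural facts I need are then immediate: (a) $\sigma_3\,\tilde X(x,z)\,\sigma_3 = -\tilde X(x,z)$, since $\tilde X$ is off-diagonal; and (b) $\tilde X(-x,z) = \tilde X(x,z)$, since $q$ is even. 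It is the combination of the two that drives the argument: reality alone makes $\tilde X$ off-diagonal, and evenness alone makes it even in $x$.

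Next I would propagate the symmetry to solutions. If $\tilde y(x,z)$ solves $\tilde y_x = \tilde X(x,z)\,\tilde y$, then by the chain rule and (b) the vector $\tilde y(-x,z)$ solves $w_x = -\tilde X(x,z)\,w$, and then by (a) the vector $\sigma_3\,\tilde y(-x,z)$ again solves~\eqref{e:modZS}. Applying this column by column to the principal solution $\tilde Y(x,z)$ normalized by $\tilde Y(0,z) = \mathbb{I}$, and invoking uniqueness of the fundamental matrix solution with prescribed value at $x=0$, I obtain
\[
\tilde Y(-x,z) = \sigma_3\,\tilde Y(x,z)\,\sigma_3, \qquad x\in\Real,\ z\in\C.
\]
Setting $x=L$ gives $\tilde Y(-L,z) = \sigma_3\,\tilde M(z)\,\sigma_3$, while the monodromy relation $\tilde Y(x+L,z) = \tilde Y(x,z)\,\tilde M(z)$ evaluated at $x=-L$ gives $\tilde Y(-L,z) = \tilde M(z)^{-1}$. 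Since $\det\tilde M(z)\equiv 1$, comparing $(1,1)$ entries in $\tilde M(z)^{-1} = \sigma_3\,\tilde M(z)\,\sigma_3$ yields $\tilde y_{11}(L,z) = \tilde y_{22}(L,z)$ for all $z\in\C$.

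Finally I would close the argument at a Dirichlet eigenvalue. By Proposition~\ref{p:dirzeros}, any $\mu\in\Sigma_{\Dir}(0)$ satisfies $\tilde y_{12}(L,\mu) = 0$, so $\det\tilde Y(L,\mu)=1$ forces $\tilde y_{11}(L,\mu)\,\tilde y_{22}(L,\mu) = 1$; together with $\tilde y_{11}(L,\mu) = \tilde y_{22}(L,\mu)$ this gives $\tilde y_{11}(L,\mu)^2 = 1$, hence $\Delta(\mu) = \tfrac12\tr\tilde M(\mu) = \tilde y_{11}(L,\mu) = \pm 1$. Therefore $\Delta^2(\mu) - 1 = 0$, i.e.\ $\mu$ is a point of the main spectrum by Theorem~\ref{spectraprop}(1). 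I do not expect a genuine obstacle here: the proof is short, and the only delicate point is the sign bookkeeping in the reflection symmetry — specifically, the conjugation by $\sigma_3$ that cancels the sign flip of $\tilde X$ is available only because reality of $q$ makes $\tilde X$ off-diagonal. (Equivalently, one can run the same computation directly on~\eqref{e:zs} using conjugation by $\sigma_1$, since $U\sigma_1 U^{-1} = \sigma_3$.)
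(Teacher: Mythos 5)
Your proof is correct, and it takes a somewhat different route from the paper's. The paper's argument is shorter because it quotes the structural decomposition $M(z)=\Delta(z)I + c(z)\sigma_3 - \i s(z)\sigma_2$ for real even potentials from \cite{AdvMath2023}, reads off that the Dirichlet condition $\~y_{12}(L,z)=0$ is $c(z)=-s(z)$, and compares with $\Delta^2 = 1 + c^2 - s^2$ from $\det M = 1$. You instead derive the needed monodromy symmetry from scratch by propagating the reflection $x\mapsto -x$ through the modified scattering problem: off-diagonality of $\~X$ (from reality of $q$) plus evenness of $\~X$ (from evenness of $q$) gives $\~Y(-x,z)=\sigma_3\~Y(x,z)\sigma_3$, hence $\~M^{-1}=\sigma_3\~M\sigma_3$, i.e.\ $\~y_{11}(L,z)=\~y_{22}(L,z)$; combining this with $\~y_{12}(L,\mu)=0$ and $\det\~M=1$ forces $\Delta(\mu)=\pm1$. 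These are equivalent observations — the identity $\~y_{11}(L,z)=\~y_{22}(L,z)$ is exactly what the absence of a $\sigma_1$ component in $M$ becomes after conjugation by $U$ (since $U\sigma_1U^{-1}=\sigma_3$) — but your version is self-contained, makes the role of each hypothesis (reality versus evenness) transparent, and avoids relying on a cited result whose own proof would otherwise be a black box. The paper's version buys brevity and connects to notation ($c$, $s$) used elsewhere; yours buys pedagogical clarity and independence. Your computation of $\~X$, the uniqueness step for the principal solution, the evaluation of the monodromy relation at $x=-L$, and the final determinant argument are all correct.
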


\begin{proof}
Recall first that, for any real and even potential, the symmetries of the ZS problem allow the monodromy matrix $M(z)$ to be written as 
\be
M(z) = \Delta(z) I + c(z) \sigma_3 - \i s(z)\sigma_2\,,
\label{e:MDeltaIcs}
\ee
where $I$ and $\sigma_2$ are the identity matrix and the second Pauli matrix (e.g., see \cite{AdvMath2023}).
In terms of $M(z)$, the Dirichlet eigenvalues are the zeros of 
$\frac \i2 [ M_{11} - M_{22} +   M_{21} - M_{12} ] = \i(c(z) + s(z))$.
Thus, the Dirichlet eigenvalues are the points for which $c(z)=-s(z)$.
Next, note that, since $\det M(z) =1$, \eqref{e:MDeltaIcs} yields
\be
\Delta^2(z)= 1 + c^2(z) - s^2(z)\,.
\ee
The main spectrum is therefore located at the points for which $c^2(z)=s^2(z)$.  
We then see that each Dirichlet eigenvalue must coincide with a main eigenvalue.
\end{proof}

\section{Modified Bloch-Floquet solutions, asymptotics and infinite product expansions}
\label{s:asymptotics}

Some additional results are needed in order to formulate the inverse spectral theory.
Specifically, we need to control the asymptotic behavior of relevant quantities as $z\to\infty$.
\begin{lemma}
\label{l:asymY}
If $q\in C^2([0,L],\Complex)$, then for all $z\in\Complex$ the fundamental matrix $Y(x,z)$ has
the following asymptotic behavior as $z\to\infty$:
\be
\label{asymY}
Y(x,z)= \left[ I + \frac1{2iz} \sigma_3 K[q](x) 
  + \frac1{2iz} \begin{pmatrix} 0 & \!\! q(x) - q(0)\,\e^{-2izx} \\  q^*(x) - q^*(0)\,\e^{2izx} \!\! &0 \end{pmatrix} 
  + O(1/z^2) \right] \,\e^{-\i zx\sigma_3}\,,
\ee
where for brevity we defined 
\be
K[q](x) =  \int_0^x|q(\xi)|^2\d\xi\,.
\ee
\end{lemma}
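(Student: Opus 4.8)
The plan is to reduce the statement to a Volterra iteration. First I would pass from $Y$ to the de-oscillated function $N(x,z):=Y(x,z)\,\e^{\i zx\sigma_3}$, which by~\eqref{phi} satisfies the integral equation
\be
N(x,z)=I+\int_0^x\e^{-\i z\sigma_3(x-\xi)}\,Q(\xi)\,N(\xi,z)\,\e^{\i z\sigma_3(x-\xi)}\,\d\xi\,.
\label{e:Nvolterra}
\ee
The structural point that makes the expansion work is that, because $Q$ is off-diagonal, conjugating $Q(\xi)$ (or any off-diagonal matrix) by $\e^{\i z\sigma_3(x-\xi)}$ merely multiplies its $(1,2)$ and $(2,1)$ entries by the oscillatory factors $\e^{-2\i z(x-\xi)}$ and $\e^{2\i z(x-\xi)}$, whereas any diagonal matrix is left invariant. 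Hence the off-diagonal part of $N$ is built from oscillatory integrals, which shrink under integration by parts (each integration by parts costs one derivative of $q$ and gains a factor $1/z$), while the diagonal part of $N$ arises from genuine, non-oscillatory integrals and is what produces the $\sigma_3K[q](x)$ term.

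Next I would iterate~\eqref{e:Nvolterra} as a Neumann/Picard series. For the first iterate $N^{(1)}=I+\int_0^x\e^{-\i z\sigma_3(x-\xi)}Q(\xi)\e^{\i z\sigma_3(x-\xi)}\,\d\xi$, one integration by parts gives $\int_0^x q(\xi)\,\e^{-2\i z(x-\xi)}\,\d\xi=\tfrac{1}{2\i z}\big(q(x)-q(0)\e^{-2\i zx}\big)+O(1/z^2)$, the remainder being controlled by a second integration by parts (this is where $q\in C^2$ is used), and similarly for the $(2,1)$ entry; this already yields the off-diagonal correction in~\eqref{asymY}. Since $N^{(1)}-I$ is off-diagonal, the second iterate reads $N^{(2)}=N^{(1)}+\int_0^x Q(\xi)\,[N^{(1)}(\xi,z)-I]\,\d\xi$, where the added integral is diagonal (off-diagonal times off-diagonal) so that conjugation acts trivially on it; inserting the leading $\tfrac{1}{2\i z}$-behavior of $N^{(1)}-I$, the non-oscillatory piece integrates to $\tfrac{1}{2\i z}\diag\big(K[q](x),-K[q](x)\big)=\tfrac{1}{2\i z}\sigma_3K[q](x)$, while the terms carrying $\e^{\pm2\i z\xi}$ contribute only at $O(1/z^2)$. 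A short check confirms that neither this iterate nor later ones alter the off-diagonal part at order $1/z$, so $N^{(2)}$ already coincides with the bracketed matrix in~\eqref{asymY} up to $O(1/z^2)$.

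Finally I would control the tail $N-N^{(2)}$: the standard Volterra estimate shows the series converges (uniformly for $x\in[0,L]$ and, in the $z$-variable, uniformly on horizontal strips), and the repeated-integration-by-parts structure shows that each further iterate is smaller by a factor $O(1/z)$, whence $N(x,z)-N^{(2)}(x,z)=O(1/z^2)$; multiplying back by $\e^{-\i zx\sigma_3}$ then gives~\eqref{asymY}. I expect this last step to be the only real obstacle: making the truncation error genuinely $O(1/z^2)$, uniformly in $x\in[0,L]$ and --- for non-real $z$ --- with honest bookkeeping of the exponential growth $\e^{2|\Im z|x}$ that the kernel carries, is where care is needed; the algebraic extraction of the explicit $1/z$ terms from the first two iterates is routine.
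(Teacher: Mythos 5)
Your proposal is correct and follows essentially the same route as the paper, which (citing its companion works \cite{BOT_JST2023,BiondiniZhang2023}) obtains this expansion precisely by Neumann/Volterra iteration of the integral equation~\eqref{phi} together with integration by parts, using $q\in C^2$ to control the remainder. You also correctly flag the one place requiring care: for $\Im z\ne0$ the ``error'' must be understood modulo the inherited exponential factor $\e^{2|\Im z|x}$ (matching the retained $q(0)\e^{\mp2\i zx}$ terms), so the $O(1/z^2)$ bound is uniform only in that weighted sense.
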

The proof of Lemma~\ref{l:asymY} is obtained following identical methods as in \cite{BOT_JST2023,BiondiniZhang2023}, and is therefore omitted for brevity.
As already mentioned earlier, for simplicity in this work we will always assume that $q\in C^2([0,L],\Complex)$ unless explicitly stated otherwise.
It is straightforward to show that the asymptotic behavior of $Y(x,z)$ in Lemma~\ref{l:asymY} implies the following:
\begin{proposition}
\label{asymD}
The Floquet discriminant $\D(z)$ has the following asymptotic behavior as $z\to\infty$:
\vspace*{-0.6ex}
\begin{gather}
\Delta(z)=\half\e^{\mp\i zL}\left(1\pm\frac{1}{2\i z}\int_0^L|q(x)|^2\d x+O\left(\frac{1}{z^2}\right)\right),
\qquad 
z\in\C^\pm\,.
\end{gather}
\end{proposition}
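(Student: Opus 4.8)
The plan is to derive the asymptotics of $\Delta(z) = \tfrac12\tr M(z) = \tfrac12\tr Y(L,z)$ directly from the asymptotic expansion of $Y(x,z)$ in Lemma~\ref{l:asymY}, evaluated at $x = L$. First I would substitute $x = L$ into \eqref{asymY}, obtaining
\[
Y(L,z) = \left[ I + \frac{1}{2\i z}\sigma_3 K[q](L) + \frac{1}{2\i z}\begin{pmatrix} 0 & q(L) - q(0)\e^{-2\i zL} \\ q^*(L) - q^*(0)\e^{2\i zL} & 0 \end{pmatrix} + O(1/z^2) \right] \e^{-\i z L \sigma_3}.
\]
By periodicity, $q(L) = q(0)$, so the off-diagonal entries of the bracketed matrix become $q(0)(1 - \e^{-2\i zL})$ and $q^*(0)(1 - \e^{2\i zL})$; these are bounded on each closed half-plane $\overline{\C^\pm}$ (indeed one of $\e^{\mp 2\i zL}$ is bounded there), so they contribute to the $O(1/z)$ term but, crucially, only to the off-diagonal part of $Y(L,z)\e^{\i zL\sigma_3}$.

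Next I would take the trace. Since $\e^{-\i zL\sigma_3} = \diag(\e^{-\i zL}, \e^{\i zL})$, the trace of the product is
\[
\tr Y(L,z) = \e^{-\i zL}\!\left(1 + \frac{K[q](L)}{2\i z} + O(1/z^2)\right) + \e^{\i zL}\!\left(1 - \frac{K[q](L)}{2\i z} + O(1/z^2)\right),
\]
where the off-diagonal contributions drop out of the trace entirely and $K[q](L) = \int_0^L |q(x)|^2\,\d x$. The key observation is that for $z \in \C^+$ the term $\e^{\i zL}$ is exponentially small compared to $\e^{-\i zL}$ (and the $O(1/z^2)$ matrix entries, being multiplied by the bounded exponentials, remain $O(1/z^2)$), so that $\e^{\i zL}(1 - K/(2\i z)) = \e^{-\i zL}\cdot O(1/z^2)$ after factoring; symmetrically for $z \in \C^-$. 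Hence, factoring out $\tfrac12 \e^{\mp\i zL}$ in $\C^\pm$ respectively, one gets $\Delta(z) = \tfrac12 \e^{\mp\i zL}(1 \pm \tfrac{1}{2\i z}\int_0^L|q|^2\,\d x + O(1/z^2))$, which is the claimed formula.

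The main subtlety — not really an obstacle, but the point requiring care — is the uniformity of the error term and the handling of the exponentially growing/decaying factors: one must be sure that the $O(1/z^2)$ remainder in Lemma~\ref{l:asymY} is uniform in the relevant sectors and that multiplying it by $\e^{\pm\i zL}$ does not spoil the estimate when the exponential is large. This is handled by always pairing the growing exponential with the small one: in $\C^+$ we write everything as $\e^{-\i zL}$ times a bounded-plus-decaying factor, using $|\e^{2\i zL}| \le 1$ there, and analogously in $\C^-$. I would also note that the two expressions (for $z\in\C^+$ and $z\in\C^-$) are consistent on $\R$ up to the stated order, reflecting the fact that $\Delta$ is entire and real-analytic structure is respected, though on $\R$ itself neither exponential dominates and the full two-term expression $\Delta(z)\sim\cos(zL)$ is the accurate statement.
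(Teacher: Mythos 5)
Your proof follows the same route the paper takes (the paper simply says the result ``follows in a straightforward way'' from Lemma~\ref{l:asymY}): set $x=L$, use periodicity to simplify, observe the off-diagonal entries drop out of the trace, and factor out the dominant exponential in each half-plane. The structure is right, and your handling of the $O(1/z^2)$ entry being multiplied by $\e^{\pm\i zL}$ (pair the growing exponential with the decaying one before estimating) is exactly the correct move.

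One step is overstated, however. You write that ``for $z\in\C^+$ the term $\e^{\i zL}$ is exponentially small compared to $\e^{-\i zL}$ \dots\ so that $\e^{\i zL}(1 - K/(2\i z)) = \e^{-\i zL}\cdot O(1/z^2)$ after factoring.'' After factoring out $\e^{-\i zL}$, the leftover factor is $\e^{2\i zL}(1 - K/(2\i z))$, and $|\e^{2\i zL}| = \e^{-2L\,\Im z}\le 1$ is \emph{bounded} in $\overline{\C^+}$, not $O(1/z^2)$. The estimate $\e^{2\i zL}=O(1/z^2)$ holds only when $\Im z \gtrsim \log|z|$; on horizontal rays (say $\Im z =$ const$>0$, $\Re z\to\infty$) it fails, and there $\Delta(z)=\cos(zL)-\tfrac{K}{2z}\sin(zL)+O(1/z^2)$ is the honest expansion, with neither exponential dominating. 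Your closing caveat (``on $\R$ itself neither exponential dominates'') has the right instinct but should be extended to the whole regime of bounded $\Im z$, not just $\R$. This imprecision is shared by the proposition itself as stated in the paper; the subsequent use in Proposition~\ref{p:r14asymp} is explicitly for $|\Im z|\to\infty$, where the one-exponential form is valid, so it is not a fatal gap — but a careful write-up should state that the single-exponential asymptotics holds as $|\Im z|\to\infty$ (or nontangentially), with the uniform two-term form $\cos(zL)-\tfrac{K}{2z}\sin(zL)+O(1/z^2)$ valid as $z\to\infty$ unrestricted.
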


Next we introduce a family of Bloch-Floquet solutions of the modified ZS problem~\eqref{e:modZS},
that is instrumental in the formulation of the inverse problem:
\begin{definition}
Define $\psi^\pm(x,z)$ as the Bloch-Floquet solutions of the modified ZS problem~\eqref{e:modZS} 
uniquely determined by the conditions 
\be
\psi^{\pm}(x+L,z)=\rho^{\pm1}(z)\psi^{\pm}(x,z)\,,\qquad
\psi_1^{\pm}(0,z)=1.
\label{e:psipmdef}
\ee
Also, define $\Psi(x,z)$ as the matrix Bloch-Floquet, given by
\vspace*{-1ex}
\be
\label{e:defPsi}
\Psi(x,z) = \big( \,\psi^\mp\,, \,\psi^\pm\, \big)\,,\qquad z\in\Complex^\pm\setminus\Real\,.
\ee
\end{definition}
\noindent 
The following propositions are proved in an identical way as in the self-adjoint case \cite{BiondiniZhang2023}. 
We therefore omit the proof for brevity.

\begin{proposition}
\label{p:BFyexpansion}
The Bloch-Floquet solutions~ $\psi^\pm(x,z)$ defined by~\eqref{e:psipmdef} are given by
\be
\psi^{\pm}(x,z)=\~y_1(x,z)+\frac{\rho^{\pm1}(z)-\~y_{11}(L,z)}{\~y_{12}(L,z)}\~y_2(x,z),
\label{BF}
\ee
where $\~Y(x,z) = (\~y_1,\~y_2)$.
\end{proposition}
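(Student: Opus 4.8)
The plan is to write the Bloch--Floquet solution $\psi^\pm(x,z)$ as a linear combination of the columns of the modified fundamental matrix $\~Y(x,z) = (\~y_1,\~y_2)$ and then pin down the coefficients using the two defining conditions in~\eqref{e:psipmdef}. Since $\~y_1,\~y_2$ form a basis for the solution space of~\eqref{e:modZS}, I write $\psi^\pm(x,z) = a(z)\,\~y_1(x,z) + b(z)\,\~y_2(x,z)$ for scalar functions $a,b$ to be determined. Evaluating at $x=0$ and using $\~Y(0,z) = \mathbb I$, so that $\~y_1(0,z) = (1,0)^T$ and $\~y_2(0,z) = (0,1)^T$, the normalization $\psi_1^\pm(0,z)=1$ forces $a(z)=1$, while $b(z)$ remains free for the moment; thus $\psi^\pm(x,z) = \~y_1(x,z) + b(z)\,\~y_2(x,z)$.

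Next I impose the Floquet condition $\psi^\pm(x+L,z) = \rho^{\pm1}(z)\,\psi^\pm(x,z)$. Using $\~Y(x+L,z) = \~Y(x,z)\,\~M(z)$ with $\~M(z) = \~Y(L,z)$, the columns transform as $\~y_1(x+L,z) = \~y_{11}(L,z)\,\~y_1(x,z) + \~y_{21}(L,z)\,\~y_2(x,z)$ and $\~y_2(x+L,z) = \~y_{12}(L,z)\,\~y_1(x,z) + \~y_{22}(L,z)\,\~y_2(x,z)$. Substituting into $\psi^\pm(x+L,z) = \~y_1(x+L,z) + b(z)\,\~y_2(x+L,z)$ and matching the coefficients of $\~y_1$ and $\~y_2$ against $\rho^{\pm1}(z)\big(\~y_1(x,z) + b(z)\,\~y_2(x,z)\big)$ yields two scalar equations; the $\~y_1$-component gives $\~y_{11}(L,z) + b(z)\,\~y_{12}(L,z) = \rho^{\pm1}(z)$, which solves to $b(z) = \big(\rho^{\pm1}(z) - \~y_{11}(L,z)\big)/\~y_{12}(L,z)$, exactly the claimed coefficient. (The $\~y_2$-component gives a second relation that is automatically consistent because $\rho^{\pm1}(z)$ is an eigenvalue of $\~M(z)$, i.e., a root of~\eqref{tracepoly} with $\det\~M=1$; this is where one uses that $\rho$ solves the characteristic equation, and I would remark on it but not belabor it.)

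The main technical point worth a sentence is that the formula makes sense precisely on $\C\setminus(\Sigma(\L)\cup\{\mu_j\})$, since it requires $\~y_{12}(L,z)\neq0$ (otherwise the Dirichlet spectrum interferes) and a well-defined single-valued branch $\rho(z)$ (which exists off the spectrum, by the branch choice for $r_{1/2}$ made just before the Dirichlet spectrum paragraph). I do not expect any serious obstacle here: the only thing to be careful about is the bookkeeping of which of $\rho,\rho^{-1}$ goes with $\psi^+$ versus $\psi^-$, which is fixed by the defining condition~\eqref{e:psipmdef} and carries through the computation transparently. The argument is otherwise purely linear-algebraic and mirrors the self-adjoint case in~\cite{BiondiniZhang2023}.
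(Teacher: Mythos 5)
Your proof is correct and is precisely the linear-algebraic argument the paper has in mind; the authors explicitly state that this proposition ``is proved in an identical way as in the self-adjoint case'' and omit the details, and what you wrote (expand $\psi^\pm$ in the basis $\~y_1,\~y_2$, use $\~Y(0,z)=\mathbb I$ to force $a=1$, use $\~Y(x+L,z)=\~Y(x,z)\~M(z)$ together with the Floquet condition to solve for $b$, and note that the second scalar relation is redundant because $\rho^{\pm1}$ satisfies~\eqref{tracepoly} and $\det\~M=1$) is that argument. No gaps.
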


\begin{proposition}
\label{p:asympsi}
For all $x>0$, $\psi^-(x,z)$ has the following asymptotic behavior as $z\to\infty$:
\label{asympsi-}
\be
\psi^-(x,z) =  \begin{cases} \displaystyle 
    \left[ \begin{pmatrix} 1 \\ -\i \end{pmatrix}
     + \frac1{2iz} \begin{pmatrix} 1 \\ -\i \end{pmatrix} K[q](x)
     + \frac1{2iz} \begin{pmatrix} q^*(x) - q^*(0) \\ \i\,\big( q^*(x) +  q^*(0) \big) \end{pmatrix}
     + O(1/z^2) \right]\,\e^{-\i zx} \,,\quad z\in\C^+\,,
             \\ \displaystyle
     \left[ \begin{pmatrix} 1 \\ \i \end{pmatrix}
     - \frac1{2iz} \begin{pmatrix} 1 \\ \i \end{pmatrix} K[q](x)
     + \frac1{2iz} \begin{pmatrix} q(x) - q(0) \\ -\i\,\big( q(x) + q(0) \big) \end{pmatrix} 
     + O(1/z^2) \right]\,\e^{\i zx} \,,\qquad z\in\C^-\,,
     \end{cases}
\ee
with $K[q](x)$ as before.
Similarly, $\psi^+(x,z)$ has the following asymptotic behavior as $z\to\infty$:
\label{asympsi+}
\be
\psi^+(x,z) =  \begin{cases} \displaystyle 
    \left[ \begin{pmatrix} 1 \\ \i \end{pmatrix}
     - \frac1{2iz} \begin{pmatrix} 1 \\ \i \end{pmatrix} K[q](x)
     + \frac1{2iz} \begin{pmatrix} q(x) - q(0) \\ -\i\,\big( q(x) + q(0) \big) \end{pmatrix} 
     + O(1/z^2) \right]\,\e^{\i zx} \,,\quad z\in\C^+\,,
             \\ \displaystyle
    \left[ \begin{pmatrix} 1 \\ -\i \end{pmatrix}
     + \frac1{2iz} \begin{pmatrix} 1 \\ -\i \end{pmatrix} K[q](x)
     + \frac1{2iz} \begin{pmatrix} q^*(x) - q^*(0) \\ \i\,\big( q^*(x) +  q^*(0) \big) \end{pmatrix}
     + O(1/z^2) \right]\,\e^{-\i zx} \,,\quad z\in\C^-\,,
     \end{cases}
\ee
\end{proposition}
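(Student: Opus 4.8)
The plan is to derive both asymptotic expansions directly from Proposition~\ref{p:BFyexpansion}, which expresses $\psi^{\pm}(x,z)$ explicitly in terms of the entries of $\~Y(x,z)$ and the Floquet multiplier $\rho^{\pm 1}(z)$. First I would translate the asymptotics of $Y(x,z)$ from Lemma~\ref{l:asymY} into asymptotics for $\~Y(x,z) = U Y(x,z) U^{-1}$ by conjugating with the fixed constant matrix $U$ in~\eqref{e:Udef}; this is purely mechanical, since $U$ does not depend on $z$. In particular, writing $\~Y(x,z) = (\~y_1,\~y_2)$, one obtains the leading columns: $\~y_1(x,z) \sim \e^{-\i zx}(1,-\i)^T/\sqrt2$ in $\C^+$ and $\sim \e^{\i zx}(1,\i)^T/\sqrt2$ in $\C^-$ (up to normalization), together with the $O(1/z)$ corrections coming from $K[q]$ and the $q(x)-q(0)\e^{\mp 2\i zx}$ entries. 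Note the exponentially growing term $q(0)\e^{-2\i zx}$ sitting in $Y(x,z)$ for $z\in\C^+$ is harmless here because it is multiplied by $\e^{-\i zx\sigma_3}$; after conjugation one must track carefully which column of $\~Y$ picks up $\e^{\i zx}$ versus $\e^{-\i zx}$.

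Next I would compute the scalar prefactor $(\rho^{\pm 1}(z) - \~y_{11}(L,z))/\~y_{12}(L,z)$ appearing in~\eqref{BF}. For this I need the large-$z$ behavior of $\rho(z) = \rho_1(z) = \Delta(z) - r_{1/2}(z)$, together with $\Delta(z)$ from Proposition~\ref{asymD} and the $(1,1)$ and $(1,2)$ entries of $\~M(z) = \~Y(L,z)$ from the conjugated Lemma~\ref{l:asymY}. Since $r_{1/2}(z)\sim\Delta(z)$ as $|\Im z|\to\infty$ by its defining property~(ii), one finds $\rho(z)\sim \e^{\i zL}$ for $z\in\C^+$ and $\rho(z)\sim\e^{-\i zL}$ for $z\in\C^-$, i.e.\ $\rho(z)$ decays exponentially, with precise subleading corrections obtainable by expanding $\Delta - (\Delta^2-1)^{1/2} = 1/(2\Delta) + O(1/\Delta^3)$. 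Combining these with $\~y_{11}(L,z)$ and $\~y_{12}(L,z)$, the ratio multiplying $\~y_2(x,z)$ in~\eqref{BF} has a definite limit, and one assembles the four cases ($\psi^-$ in $\C^+$, $\psi^-$ in $\C^-$, $\psi^+$ in $\C^+$, $\psi^+$ in $\C^-$) by bookkeeping the contributions at order $1$ and $1/z$; the normalization $\psi_1^\pm(0,z)=1$ in~\eqref{e:psipmdef} is automatically respected by formula~\eqref{BF}, which serves as a useful consistency check at $x=0$.

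The main obstacle I anticipate is the careful cancellation of the exponentially large terms. In~\eqref{BF} the coefficient $(\rho^{\pm1}-\~y_{11}(L,z))/\~y_{12}(L,z)$ individually involves quantities of size $\e^{\pm\i zL}$ (hence exponentially large or small depending on the half-plane and the sign), and $\~y_2(x,z)$ itself has a column that grows like $\e^{\i zx}$ in $\C^+$; one must verify that, after multiplication, only the terms scaling like $\e^{\mp\i zx}$ (matching the claimed $\psi^-$ behavior) or $\e^{\pm\i zx}$ (matching $\psi^+$) survive in the overall expansion, while the spurious $\e^{\i z(2L - x)}$-type contributions either cancel or are absorbed into the $O(1/z^2)$ remainder by the decay of $\rho^{\mp 1}$. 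This requires keeping the expansion of $\rho^{\pm1}(z)$ to one order beyond leading and being attentive to the branch conventions for $r_{1/2}$ stated before Lemma~\ref{y2}. Once this is handled, reading off the $(1,1)$, $(1,2)$, $(2,1)$ entries of the conjugated Lemma~\ref{l:asymY} and substituting gives exactly the four displayed vectors, with $K[q](x)=\int_0^x|q(\xi)|^2\d\xi$ as stated. Since the paper notes this proposition is proved identically to the self-adjoint case of~\cite{BiondiniZhang2023}, I would present the argument at the level of these structural steps and refer to that reference for the remaining routine algebra.
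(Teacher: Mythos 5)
Your proposed route — start from formula~\eqref{BF}, conjugate the asymptotics of Lemma~\ref{l:asymY} by the constant matrix $U$ to get those of $\~Y$, expand $\rho^{\pm1}(z)=\Delta\mp(\Delta^2-1)^{1/2}$ via Proposition~\ref{asymD} and the branch choice for $r_{1/2}$, and then track the cancellation of the dominant exponentials in the product with $\~y_2$ — is exactly the calculation the paper has in mind, which it omits by deferring to the self-adjoint case in~\cite{BiondiniZhang2023}. The proposal is correct and takes essentially the same approach as the paper.
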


As in \cite{BiondiniZhang2023}, the significance of the Bloch-Floquet eigenfunctions is that their asymptotic behavior as $z\to\infty$ allows one to uniquely recover the potential $q(x)$.
Indeed, it is straightforward to see the asymptotic behavior in Proposition~\ref{p:asympsi}, 
together with the definitions~\eqref{e:Ytildedef} and~\eqref{BF}, yields:
\begin{corollary}
The matrix potential $Q(x)$ of the Dirac operator~\eqref{e:Diracoperator} can be recovered as 
\be
Q(x) = \lim_{z\to\infty} \i z\,[\sigma_3,\~U\, \Psi(x,z)\e^{\i zx\sigma_3}]\,.
\label{e:reconstruction}
\ee
where 
\be
\~U = (1/\sqrt2) \,U^{-1} = \frac12 \begin{pmatrix} 1 & \i \\ 1 & -\i \end{pmatrix}\,.
\label{e:Utildedef}
\ee
\end{corollary}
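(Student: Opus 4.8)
The plan is to read off $Q(x)$ from the $O(1/z)$ term in the large-$z$ expansion of the matrix $\~U\,\Psi(x,z)\,\e^{\i zx\sigma_3}$, using the asymptotics of $\psi^\pm$ from Proposition~\ref{p:asympsi}. First I would recall that, by~\eqref{e:defPsi}, for $z\in\C^+$ the columns of $\Psi(x,z)$ are $\psi^-(x,z)$ and $\psi^+(x,z)$, and for $z\in\C^-$ they are $\psi^+(x,z)$ and $\psi^-(x,z)$; in either half-plane the first column carries the factor $\e^{-\i zx}$ and the second column the factor $\e^{\i zx}$. Hence $\Psi(x,z)\,\e^{\i zx\sigma_3}$ strips off exactly these exponentials, leaving in each column a vector with a finite limit plus a correction of order $1/z$. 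Concretely, taking $z\in\C^+$, Proposition~\ref{p:asympsi} gives
\be
\Psi(x,z)\,\e^{\i zx\sigma_3} = \begin{pmatrix} 1 & 1 \\ -\i & \i \end{pmatrix}
  + \frac1{2\i z}\left[ K[q](x)\begin{pmatrix} 1 & -1 \\ -\i & -\i \end{pmatrix}
  + \begin{pmatrix} q^*(x)-q^*(0) & q(x)-q(0) \\ \i(q^*(x)+q^*(0)) & -\i(q(x)+q(0)) \end{pmatrix}\right] + O(1/z^2)\,.
\ee
Left-multiplying by $\~U = \tfrac12\begin{pmatrix} 1 & \i \\ 1 & -\i\end{pmatrix}$, the leading term becomes $\~U\begin{pmatrix} 1 & 1 \\ -\i & \i\end{pmatrix} = I$ (this is just the identity $\~U\,U^{-1}\cdot\sqrt2 \cdot \tfrac1{\sqrt2}$ repackaged, since $U^{-1}=\frac1{\sqrt2}\begin{pmatrix}1&1\\-\i&\i\end{pmatrix}$ and $\~U=\frac1{\sqrt2}U^{-1}$). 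The $O(1/z)$ coefficient is obtained by the same matrix multiplication; I would check that the $K[q](x)$ contribution lands on the diagonal, so that $[\sigma_3,\cdot]$ annihilates it, while the off-diagonal contribution produces, after multiplying by $\i z$ and commuting with $\sigma_3$, exactly the matrix $Q(x)=\begin{pmatrix}0 & q(x) \\ -q^*(x) & 0\end{pmatrix}$. The same computation for $z\in\C^-$ gives the same answer (the two columns are swapped, which is compensated by the swap in the definition of $\Psi$), so the limit is well defined and independent of the direction of approach.

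The key steps, in order, are: (1) substitute the explicit expansions of $\psi^\pm$ into the two branches of the definition of $\Psi$; (2) multiply on the right by $\e^{\i zx\sigma_3}$ to remove the oscillatory exponentials and identify the $I + \tfrac1{2\i z}(\cdots) + O(1/z^2)$ structure; (3) multiply on the left by $\~U$, using $\~U\begin{pmatrix}1&1\\-\i&\i\end{pmatrix}=I$; (4) compute the commutator with $\sigma_3$, observing that $[\sigma_3,\text{diagonal}]=0$ kills the $K[q]$ terms and that $[\sigma_3,\text{off-diagonal}]$ doubles the off-diagonal part with a sign; (5) multiply by $\i z$ and take $z\to\infty$, and finally (6) verify the $\C^-$ branch gives the same result. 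None of these steps is genuinely hard — the only point requiring care is bookkeeping the signs and the $\i$'s through steps (3)–(4), and making sure the $q(0)$, $q^*(0)$ boundary terms, which are not symmetric between the two columns, nevertheless cancel out of the commutator so that the clean formula $Q(x)$ survives; this sign-tracking is the main (minor) obstacle. Since the statement asserts the corollary follows directly from Proposition~\ref{p:asympsi} together with~\eqref{e:Ytildedef} and~\eqref{BF}, I would present the computation compactly rather than expanding every entry.
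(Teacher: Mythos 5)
Your proposal is correct and follows precisely the route the paper has in mind: the paper states the corollary is ``straightforward'' to obtain from Proposition~\ref{p:asympsi} together with~\eqref{e:Ytildedef} and~\eqref{BF}, and your computation is exactly that verification. Your bookkeeping is sound — after left-multiplication by $\~U$ the $K[q]$ term becomes $\tfrac{K[q](x)}{2\i z}\sigma_3$ and the $q(0),q^*(0)$ boundary terms land on the diagonal, so both drop out of $[\sigma_3,\cdot]$, while the off-diagonal part yields $\tfrac{1}{2\i z}\bigl(\begin{smallmatrix}0&q(x)\\q^*(x)&0\end{smallmatrix}\bigr)$ whose commutator with $\sigma_3$ gives $\tfrac{1}{\i z}Q(x)$, and the $\C^-$ branch agrees by the column swap built into~\eqref{e:defPsi}.
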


On the other hand, similarly to the self-adjoint case, \eqref{BF} implies that $\psi^\pm(x,z)$ are singular at the
Dirichlet eigenvalues $\mu_j$.  These singularities can either be removable or simple poles, as discussed in the following 
two propositions:

\begin{proposition}
If $\nu_j=0$, both $\psi^+(x,z)$ and $\psi^-(x,z)$ remain finite at $z=\mu_j$.
\label{p:psipm_nu=0}
\end{proposition}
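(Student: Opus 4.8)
The plan is to work from the representation~\eqref{BF}: since $\~y_1(x,\cdot)$ and $\~y_2(x,\cdot)$ are entire, $\psi^{\pm}(x,z)$ stays finite at $z=\mu_j$ precisely when the scalar coefficient $g_{\pm}(z):=\big(\rho^{\pm1}(z)-\~y_{11}(L,z)\big)/\~y_{12}(L,z)$ stays bounded near $\mu_j$. I would not estimate $g_+$ and $g_-$ individually — each carries the square‑root branch $r_{1/2}(z)=(\Delta^2(z)-1)^{1/2}$ — but work instead with their two elementary symmetric functions. Using $\rho(z)+\rho^{-1}(z)=2\Delta(z)=\~y_{11}(L,z)+\~y_{22}(L,z)$, $\rho(z)\rho^{-1}(z)=1$ and $\det\~M(z)=\~y_{11}(L,z)\,\~y_{22}(L,z)-\~y_{12}(L,z)\,\~y_{21}(L,z)=1$, a one‑line computation gives
\[
g_+(z)+g_-(z)=\frac{\~y_{22}(L,z)-\~y_{11}(L,z)}{\~y_{12}(L,z)},\qquad
g_+(z)\,g_-(z)=-\frac{\~y_{21}(L,z)}{\~y_{12}(L,z)}\,,
\]
expressions that do not involve $r_{1/2}$. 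Hence, if both are holomorphic in a neighborhood of $\mu_j$, then near $\mu_j$ (off the spectral arc through $\mu_j$) $g_+$ and $g_-$ are the two roots of a monic quadratic with holomorphic coefficients, so they are bounded, and therefore so is $\psi^{\pm}(x,z)$ — which is exactly the assertion.

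First I would record what $\nu_j=0$ gives. By the preceding Proposition $\mu_j\in\{\zeta_k\}$, so $\Delta^2(\mu_j)=1$; and since $\mu_j$ is a Dirichlet eigenvalue, $\~y_{12}(L,\mu_j)=0$, whence $\~y_{11}(L,\mu_j)=\~y_{22}(L,\mu_j)=\Delta(\mu_j)=\pm1$. Moreover, by Assumption~\ref{assump}(ii) complex main eigenvalues are simple, whereas (as noted after Assumption~\ref{assump}) real main eigenvalues are zeros of $\Delta^2-1$ of even multiplicity; in this framework a Dirichlet eigenvalue meets the main spectrum only at such real, even‑order points. Thus $\mu_j\in\R$ is a zero of $\Delta^2-1$ of some even order $m=2\kappa$, and by Assumption~\ref{assump}(iii) $\~y_{12}(L,\cdot)$ has a zero of order exactly $\kappa$ at $\mu_j$. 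Since then $\Delta^2-1=(z-\mu_j)^{2\kappa}h(z)$ with $h$ holomorphic and $h(\mu_j)\ne0$, the branch $r_{1/2}$ is holomorphic near $\mu_j$, and the two holomorphic choices $\pm(z-\mu_j)^\kappa h(z)^{1/2}$ of $r_{1/2}$ make $g_+,g_-$ holomorphic near $\mu_j$ as an unordered pair with the symmetric functions above.

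Next, the product. The focusing structure $Q^\dagger=-Q$, together with the unitarity of $U$, yields the involution symmetry $\~Y(x,z)^\dagger\,\~Y(x,\bar z)=\mathbb{I}$; combined with $\det\~Y\equiv1$ this gives $\~y_{21}(L,z)=-\overline{\~y_{12}(L,\bar z)}$. Because $\mu_j$ is real, writing the Taylor expansion of $\~y_{12}(L,\cdot)$ about $\mu_j$ and conjugating shows that $\~y_{21}(L,\cdot)$ vanishes at $\mu_j$ to the same order $\kappa$. Hence $g_+g_-=-\~y_{21}(L,\cdot)/\~y_{12}(L,\cdot)$ has a removable singularity at $\mu_j$ (indeed a finite, nonzero limit). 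For the sum, $\tr\~M=2\Delta$ and $\det\~M=1$ give
\[
\big(\~y_{11}(L,z)-\~y_{22}(L,z)\big)^2=4\big(\Delta^2(z)-1\big)-4\,\~y_{12}(L,z)\,\~y_{21}(L,z)\,,
\]
whose right‑hand side vanishes at $\mu_j$ to order at least $\min(2\kappa,\kappa+\kappa)=2\kappa$; hence $\~y_{11}(L,\cdot)-\~y_{22}(L,\cdot)$ vanishes there to order at least $\kappa$, so $g_++g_-=\big(\~y_{22}(L,\cdot)-\~y_{11}(L,\cdot)\big)/\~y_{12}(L,\cdot)$ is holomorphic near $\mu_j$ as well. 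With both symmetric functions holomorphic near $\mu_j$, the reduction in the first paragraph gives the proposition.

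The step I expect to be the main obstacle is the reduction itself: recognizing that one should pass to $g_++g_-$ and $g_+g_-$ and then exploit the involution symmetry $\~y_{21}(L,z)=-\overline{\~y_{12}(L,\bar z)}$. Individually, $g_+$ and $g_-$ involve the branch $r_{1/2}$ and their zero orders at $\mu_j$ are genuinely delicate, whereas $g_+g_-$ is a ratio of entire functions whose numerator and denominator share the \emph{same} zero order at $\mu_j$ — and this last fact holds exactly because $\mu_j$ is real. Accordingly, one must take care in the first step that the real‑ness of $\mu_j$ is actually used: a Dirichlet eigenvalue sitting at a complex (hence simple) band edge would make $r_{1/2}$ branch at $\mu_j$ and force $\~y_{21}(L,\mu_j)\ne0$, so $g_+g_-$ would have a pole — which is why Assumption~\ref{assump}, together with the even multiplicity of real main eigenvalues, must be invoked.
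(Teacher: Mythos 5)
Your approach is genuinely different from the paper's, and considerably more careful: the paper simply notes that the numerator $\rho^{\pm1}(z)-\~y_{11}(L,z)$ and the denominator $\~y_{12}(L,z)$ both vanish at $\mu_j$ and appeals to Assumption~\ref{assump}(ii) without any order counting. Your idea of passing to the elementary symmetric functions $g_++g_-=(\~y_{22}-\~y_{11})/\~y_{12}$ and $g_+g_-=-\~y_{21}/\~y_{12}$ is nice because it eliminates the branch $r_{1/2}$ and reduces the question to comparing zero orders of entire functions, and the symmetry $\~y_{21}(L,z)=-\overline{\~y_{12}(L,\bar z)}$ that you invoke is indeed a correct consequence of~\eqref{e:symY~}.

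However, there is a genuine gap, and it sits exactly at the place you flag as the main obstacle. You assert that $\nu_j=0$ forces $\mu_j\in\R$ to be a zero of $\Delta^2-1$ of even order, so that a Dirichlet eigenvalue never sits at a \emph{simple} (hence complex) band edge $E_k$. This is not a consequence of Assumption~\ref{assump}, and it is false. Assumption~\ref{assump}(iii) says what the multiplicity of $\~y_{12}(L,\cdot)$ is when $\mu_j$ coincides with a real high-order zero of $\Delta^2-1$, but it nowhere excludes $\mu_j=E_k$ with $E_k$ a simple complex main eigenvalue; indeed Lemma~\ref{l:Dirichletbase0} says that for real even $q$ every Dirichlet eigenvalue is locked to a main eigenvalue, and the genus-zero example in Appendix~\ref{s:genuszero} with $\alpha=0$ (constant potential $q\equiv A$) produces precisely $\mu_0=\i A=E_1$, a simple complex band edge, with $\nu_0=0$. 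For that potential one computes from~\eqref{ytildeconstant} that $\~y_{11}(L,z)=\~y_{22}(L,z)=\Delta(z)$, so $g_\pm(z)=\mp r_{1/2}(z)/\~y_{12}(L,z)\sim(z-\i A)^{-1/2}$, and $\psi^\pm(x,z)$ are \emph{not} finite at $z=\mu_0$. In other words, your own diagnosis in the last paragraph — that a Dirichlet eigenvalue at a simple band edge forces $\~y_{21}(L,\mu_j)\ne0$ and hence a pole of $g_+g_-$ — is correct, and that case actually occurs; the step where you discard it by appealing to Assumption~\ref{assump} is the unjustified one. The proposition as literally stated is itself overclaimed in this situation (the paper implicitly concedes this: Item~2 of Theorem~\ref{t:direct} permits quartic-root singularities of $\Phi_\pm$ at $\{E_k\}$, which originate precisely from the square-root blow-up of $\psi^\pm$ at such $\mu_j$, partially compensated by the quartic-root zero of $B$). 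Your argument is sound and complete for the case $\mu_j$ is a real degenerate main eigenvalue; to match the paper you would need to either state that restriction explicitly, or treat the $\mu_j=E_k$ case separately and track the cancellation against $B(z)$ in $\Phi$ rather than claiming finiteness of $\psi^\pm$ alone.
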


\proof
When $\nu_j=0$, from Definition \ref{nudef}, we have $\rho(\mu_j)=1/\rho(\mu_j)=1$. By Lemma \ref{y2}, it follows that $\~y_{11}(L,\mu_j)=\rho(\mu_j)=1/\rho(\mu_j)=1$. This implies that $\mu_j$ is a zero of both $\~y_{12}(L,z)$ and $\rho^{\pm1}(z)-\~y_{11}(L,z)$.
Thus, by~\eqref{BF} and Condition (ii) of Assumption~\ref{assump}, $\psi^{\pm}$ remains finite at $\mu_j$. 
\endproof

\begin{proposition}
If $\nu_j=1$, then $\psi^-(x,\mu_j)$ is finite, while $\psi^+(x,z)$ has a simple pole at $\mu_j$.
Conversely, if $\nu_j=-1$, then $\psi^+(x,\mu_j)$ is finite, while $\psi^-(x,z)$ has a simple pole at $\mu_j$.
\end{proposition}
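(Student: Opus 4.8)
The plan is to analyze the formula~\eqref{BF} for $\psi^\pm$ near $z=\mu_j$, separating the behavior of the three factors: the zero of $\~y_{12}(L,z)$ at $\mu_j$, the numerator $\rho^{\pm1}(z)-\~y_{11}(L,z)$, and the analytic vector $\~y_2(x,z)$. First I would recall that by Proposition~\ref{p:psipm_nu=0} and its proof, $\nu_j=0$ is exactly the case $\~y_{11}(L,\mu_j)=1$; so when $\nu_j=\pm1$ we have $\~y_{11}(L,\mu_j)\neq1$ and, by Lemma~\ref{y2} together with Definition~\ref{nudef}, $\~y_{22}(L,\mu_j)=\rho^{\nu_j}(\mu_j)$ while $\~y_{11}(L,\mu_j)=1/\~y_{22}(L,\mu_j)=\rho^{-\nu_j}(\mu_j)$. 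By Assumption~\ref{assump}(iii) (and using that $\mu_j\notin\{\zeta_k\}$ when $\nu_j\neq0$, so it is not a band edge) $\mu_j$ is a \emph{simple} zero of $\~y_{12}(L,z)$, hence $\~y_{12}(L,z) = c_j(z-\mu_j) + O((z-\mu_j)^2)$ with $c_j\neq0$.

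Next I would examine the two numerators $\rho^{+1}(z)-\~y_{11}(L,z)$ and $\rho^{-1}(z)-\~y_{11}(L,z)$ at $z=\mu_j$, treating the cases $\nu_j=1$ and $\nu_j=-1$ in parallel. Take $\nu_j=1$: then $\~y_{11}(L,\mu_j)=\rho^{-1}(\mu_j)=1/\rho(\mu_j)$. Consequently the numerator for $\psi^-$, namely $\rho^{-1}(z)-\~y_{11}(L,z)$, vanishes at $\mu_j$, so in~\eqref{BF} the coefficient of $\~y_2$ for $\psi^-$ is $\tfrac{O(z-\mu_j)}{c_j(z-\mu_j)+\cdots}$, which stays bounded as $z\to\mu_j$; since $\~y_1$ is entire, $\psi^-(x,\mu_j)$ is finite. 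For $\psi^+$, the numerator $\rho^{+1}(z)-\~y_{11}(L,z)$ at $\mu_j$ equals $\rho(\mu_j)-1/\rho(\mu_j)$, which is nonzero precisely because $\rho(\mu_j)\neq\pm1$ — and $\rho(\mu_j)\neq\pm1$ is guaranteed since $\rho(\mu_j)=\pm1$ would force $\nu_j=0$. Hence the $\~y_2$ coefficient in $\psi^+$ behaves like $\tfrac{\text{nonzero}}{c_j(z-\mu_j)+\cdots}$, a simple pole, and because $\~y_2(x,\mu_j)\not\equiv0$ (it is part of a fundamental solution, with $\~y_2(0,\mu_j)=(0,1)^T$) the pole is genuinely present in $\psi^+(x,z)$ for generic $x$; more precisely it is a simple pole as a function of $z$ with $x$-dependent residue proportional to $\~y_2(x,\mu_j)$. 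The case $\nu_j=-1$ is identical with the roles of $\rho$ and $\rho^{-1}$, hence of $\psi^+$ and $\psi^-$, interchanged, which I would state by appealing to the symmetry $\rho\leftrightarrow\rho^{-1}$ rather than repeating the computation.

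The main obstacle I anticipate is not any single hard estimate but making precise the claim that $\rho^{\pm1}(z)-\~y_{11}(L,z)$ does \emph{not} also vanish (or vanishes only to low order) at $\mu_j$ in the pole case — i.e., ruling out an accidental higher-order cancellation that would make the apparent simple pole removable or worse. This reduces to showing $\rho(\mu_j)\neq1/\rho(\mu_j)$, equivalently $\rho(\mu_j)^2\neq1$, which as noted follows from Definition~\ref{nudef}: if $\rho(\mu_j)\in\{1,-1\}$ then $|\~y_{22}(L,\mu_j)|=1$ and $\mu_j\in\Sigma(\L)$, and the sign-convention branch of $\nu_j$ would force $\nu_j\in\{1,-1\}$ only if $\mu_j\in\Sigma^o(\L)$; but then $\rho(\mu_j)=\pm1$ is a band edge value, so $\mu_j\in\{\zeta_k\}$, contradicting $\mu_j\in\Sigma^o(\L)=\Sigma(\L)\setminus\{\zeta_k\}$. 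Thus $\rho(\mu_j)^2\neq1$, the numerator is nonzero, and the pole is exactly simple. A secondary point to handle carefully is that the statement is about $\psi^\pm(x,z)$ as functions of $z$ for fixed $x$; since $\~y_1(x,z)$ and $\~y_2(x,z)$ are entire in $z$ for each fixed $x$, all the analysis above transfers verbatim with $x$ as a parameter, and the residue of the simple pole is $\dfrac{\rho^{\pm1}(\mu_j)-\~y_{11}(L,\mu_j)}{c_j}\,\~y_2(x,\mu_j)$ up to the derivative normalization, which I would record for use in Section~\ref{s:inverse}.
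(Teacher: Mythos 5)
Your proof is correct and follows essentially the same route as the paper's: use Definition~\ref{nudef} to get $\~y_{11}(L,\mu_j)=\rho^{-\nu_j}(\mu_j)$, so that the numerator $\rho^{-\nu_j}(z)-\~y_{11}(L,z)$ cancels the zero of $\~y_{12}(L,z)$ in the regular Bloch--Floquet column while the other numerator stays nonzero, yielding a simple pole. You are somewhat more careful than the paper's two-line proof, explicitly invoking Assumption~\ref{assump}(iii) for the simplicity of the zero of $\~y_{12}(L,z)$ and verifying $\rho(\mu_j)^2\neq1$ to rule out an accidental cancellation --- both points the paper leaves implicit.
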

\begin{proof}
    We prove the case when $\nu_j=1$, the case with $\nu_j=-1$ follows analogously.  By Definition \ref{nudef}, we have $\~y_{22}(L,\mu_k)=\rho(\mu_k)=1/\~y_{11}(L,\mu_k)$, indicating that $\mu_j$ is a zero of both $\~y_{12}(L,z)$ and $\rho^{-1}(z)-\~y_{11}(L,z)$. Consequently, $\psi^-(x,z)$ is finite at $\mu_j$ and $\psi^+$ has a pole at $\mu_j$.
\end{proof}

The last ingredient needed in order to formulate the inverse problem are infinite product expansions
for various relevant quantities:

\begin{remark}
Using the asymptotics in~\eqref{asymY}
and the corresponding expressions for the entries of\, $\~Y(x,z)$
(which are omitted for brevity but can be obtained in a straightforward way from~\eqref{asymY}),
one can prove that $\Delta(z)$, and $\~y_{12}(L,z)$ are entire functions with order~1, as in the self-adjoint case~\cite{BiondiniZhang2023}.
Thus, Hadamard's factorization theorem (e.g., see \cite{Ahlfors,Krantz}), 
allows us to write $\Delta^2(z)-1$ and $\~y_{12}(L,z)$ as the following infinite products:
if $\mu_0\ne0$ and $\z_{0}\ne0$,
\bse
\label{e:Hadamardproducts}
\begin{gather}
\~y_{12}(L,z)=\e^{A_1z+B_1}\prod_{j\in\mathbb{Z}}\left(1-\frac{z}{\mu_j}\right)\e^{\frac{z}{\mu_j}}\,,
\label{y12}
\\
\Delta^2(z)-1=\e^{A_2z+B_2}\prod_{j\in\Z}\left(1-\frac{z}{\z_{2j}}\right)\left(1-\frac{z}{\z_{2j-1}}\right)\e^{z\big(\frac{1}{\z_{2j}}+\frac{1}{\z_{2j-1}}\big)}\,.
\label{e:d2-1}
\end{gather}
\ese
If $\mu_0=0$, \eqref{y12} is replaced by
\bse
\be
\~y_{12}(L,z)=z\e^{A_1z+B_1}\prod_{j\neq 0}\bigg(1-\frac{z}{\mu_j}\bigg)\,\e^{\frac{z}{\mu_j}}.
\ee
\ese
From \cite{kappeler} (specifically, see Lemma~5.4 and Appendix~C), the above canonical products could also be written as
\bse
\label{e:modifiedinfiniteproducts}
\begin{gather}
\~y_{12}(L,z)=-\prod_{j\in\Integer}\frac{\mu_j-z}{\pi_j},\label{y12simpler}\\
\Delta^2(z)-1=-\prod_{j\in\Integer}\frac{(\zeta_{2j}-z)(\zeta_{2j-1}-z)}{\pi_j^2},
\end{gather}
\ese
where $\pi_j$ is defined as
\be
\pi_j=\begin{cases}
	j\pi,& j\neq 0\\
	1, & j=0.
	\end{cases}
\ee
\end{remark}

\section{Inverse spectral theory for the periodic non-self-adjoint Zakharov-Shabat problem}
\label{s:inverse}

We are now ready to begin formulating the inverse spectral theory.  
The starting point is the matrix Bloch-Floquet eigenfunction $\Psi(x,z)$ in~\eqref{e:defPsi}.
One would like to formulate a Riemann-Hilbert problem for $\Psi(x,z)$.
As in the self-adjoint case, however, there are two complications:
(i)~The columns $\psi^\pm(x,z)$ of $\Psi(x,z)$ have a possibly infinite number of poles;
(ii)~$\det\Psi(x,z)\ne1$.
As in \cite{McLaughlinNabelek,BiondiniZhang2023}, both issues can be ``cured'' by introducing a suitable matrix $B(z)$ as discussed below.
An additional complication compared to the self-adjoint case, however, is that, in the non-self-adjoint case, the discontinuities of
$\Psi(x,z)$ are not limited to the real $z$-axis.

\begin{definition}
\label{d:f0f+f-}
Let the $2\times2$ matrix-valued function $B(z)$ be defined for all $z\in\Complex\setminus\Sigma(\L)$ as
\bse
    \be
	B(z) = b(z) \begin{cases} 
            \i\,\diag(f^-,f^+),&z\in\Complex^+\setminus\Sigma(\L)\,,\\[1ex]
            \diag(f^+,f^-),&z\in\Complex^-\setminus\Sigma(\L)\,,
	\end{cases}
	\label{B}
    \ee
    with
    \be
        b(z) = \frac{(f^0(z))^{1/2}}{(\Delta^2(z)-1)^{1/4}}\,,
    \ee
and where $f^0(z)$ and $f^\pm(z)$ are defined as 
    \be
	f^0(z)= - \prod_{\nu_j=0}\frac{\mu_j-z}{\pi_j}\,,\quad
	f^+(z)=\prod_{\nu_j=1}\frac{\mu_j-z}{\pi_j}\,,\quad
	f^-(z)=\prod_{\nu_j=-1}\frac{\mu_j-z}{\pi_j}\,.
	\label{e:f+-0def}
    \ee
\ese
\end{definition}

One can show that each of the infinite products in \eqref{e:f+-0def} is convergent for the same reasons that those in~\eqref{e:modifiedinfiniteproducts} are (e.g., see \cite{kappeler}).
Note also that 
\be
\~y_{12}(L,z) = f^0(z) f^+(z) f^-(z). 
\ee
Note also that, since $B(z)$ involves $(f^0(z))^{1/2}$ and $r_{1/4}(z) = (\D^2(z)-1)^{1/4}$, 
in order to  define it uniquely (and formulate the Riemann-Hilbert problem),
one must introduce a proper choice of branch cut for these quantities.  
We do so below.
In the meantime, however, we have:

\begin{proposition}
\label{p:spectralproducts}
The matrix $B(z)$ is completely determined by the spectral data $S(q)$ defined in \eqref{e:spectraldata}.
Explicitly:
\vspace*{-0.6ex}
\bse
\begin{align}
&b^4(z) =  - \prod\limits_{\nu_j=0} \frac{(\mucirc_j-z)^2}{(E_{2j}-z)(E_{2j-1}-z)} \prod\limits_{\nu_j\neq0}\frac{\pi_j^2}{(E_{2j}-z)(E_{2j-1}-z)}\,,
    \label{e:rootratio}
    \\
    &f^+(z)=\prod_{j=\g^-,\atop\nu_j=1 }^{\g^+}\frac{\mucirc_j-z}{\pi_j}\,,\qquad
    f^-(z)=\prod_{j=\g^-\atop\nu_j=-1 }^{\g^+}\frac{\mucirc_j-z}{\pi_j}\,.
    \label{e:fpmnew}
\end{align}
\ese 
\end{proposition}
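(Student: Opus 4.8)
The plan is to unpack the definitions of $b(z)$, $f^0(z)$, and $f^\pm(z)$ and rewrite each in terms of the spectral data $S(q) = \{E_{2k-1}, E_{2k}, \mucirc_k, \nu_k\}$. First I would recall the key bookkeeping fact, established in the preceding discussion: the main spectrum $\{\zeta_j\}$ splits into nondegenerate band edges $E_j$ (simple roots of $\Delta^2-1$) and degenerate band edges $\hat\zeta_j$ (double roots), while the Dirichlet eigenvalues $\mu_j$ with $\nu_j=0$ are exactly those sitting at band edges (by the Proposition following Definition~\ref{nudef}). Moreover, by the last item of Assumption~\ref{assump} together with the remark that real main eigenvalues have even multiplicity, each $\mu_j$ with $\nu_j=0$ coincides with a main eigenvalue, and conversely the only Dirichlet eigenvalues that are \emph{not} movable are precisely these. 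So in the notation of Definition~\ref{def:dirichleteigenvalue}, $\mucirc_j$ (a movable Dirichlet eigenvalue or an element of $\Sigma^o(\mathcal L)$) ranges over exactly the $\mu_j$ with $\nu_j \neq 0$, and the fixed Dirichlet eigenvalues are the $\mu_j$ with $\nu_j=0$, which equal certain $\zeta_k$'s. This matching is the conceptual heart of the proof; once it is in place, the algebra is essentially substitution.

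Next I would handle $f^\pm$. By definition $f^+(z) = \prod_{\nu_j=1}(\mu_j-z)/\pi_j$ and $f^-(z) = \prod_{\nu_j=-1}(\mu_j-z)/\pi_j$. Since every $\mu_j$ with $\nu_j = \pm 1$ is a movable Dirichlet eigenvalue (or an interior point of $\Sigma^o$), it is exactly one of the $\mucirc_j$, and the index set $\{j : \nu_j = \pm 1\}$ is a subset of $\{\g^-, \dots, \g^+\}$ (the range appearing in $S(q)$). This yields~\eqref{e:fpmnew} directly. For $b^4(z)$, I would start from $b^4(z) = (f^0(z))^2 / (\Delta^2(z)-1)$, use the Hadamard product~\eqref{y12simpler}–type representation $\Delta^2(z)-1 = -\prod_{j\in\Z}(\zeta_{2j}-z)(\zeta_{2j-1}-z)/\pi_j^2$ from~\eqref{e:modifiedinfiniteproducts}, and the analogous $f^0(z) = -\prod_{\nu_j=0}(\mu_j-z)/\pi_j$ from~\eqref{e:f+-0def}. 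Then $b^4(z) = \prod_{\nu_j=0}(\mu_j-z)^2/\pi_j^2 \cdot \big(-\prod_{j\in\Z}\pi_j^2/[(\zeta_{2j}-z)(\zeta_{2j-1}-z)]\big)$. Now split the product over the main spectrum according to whether each pair $(\zeta_{2j}, \zeta_{2j-1})$ is degenerate or not. For a degenerate pair, $\zeta_{2j}=\zeta_{2j-1}=\hat\zeta_j$ equals some $\mu_k$ with $\nu_k=0$, so the factor $(\zeta_{2j}-z)(\zeta_{2j-1}-z) = (\mu_k - z)^2$ cancels exactly against a factor of $(\mu_k-z)^2$ coming from the $f^0$ numerator; a careful reindexing shows this cancellation is bijective and leaves over a residual $\prod_{\nu_j=0}(\mucirc_j - z)^2/[(E_{2j}-z)(E_{2j-1}-z)]$ — wait, more precisely it leaves the two factors $(E_{2j}-z)(E_{2j-1}-z)$ in the denominator corresponding to the nondegenerate pair flanking each degenerate-edge-Dirichlet-eigenvalue — giving the first product in~\eqref{e:rootratio}. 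For the nondegenerate pairs that are not adjacent to a fixed Dirichlet eigenvalue, the factor $\pi_j^2/[(E_{2j}-z)(E_{2j-1}-z)]$ survives untouched, producing the second product in~\eqref{e:rootratio}. The overall sign is $(-1)\cdot(-1)^{?}$; I would track it using the $\pi_j$ normalization, for which both products converge, so the infinite rearrangement is legitimate.

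The main obstacle I anticipate is \textbf{the reindexing and sign bookkeeping} in the $b^4$ computation: the two infinite products (over $\{\nu_j=0\}$ and over pairs of the main spectrum) must be combined term-by-term, and this requires making the two-to-one correspondence between main eigenvalues and Dirichlet eigenvalues (noted right after Theorem~\ref{spectraprop}) completely explicit, together with the convention $\mu_j = \zeta_{2j}$ at degenerate edges. One has to be careful that the products are only \emph{conditionally} grouped: individually $\prod_{\nu_j=0}(\mu_j-z)$ may diverge, and it is only after pairing each degenerate-edge factor with its $\pi_j$ normalization that convergence is restored — so I would phrase the argument via the normalized products of~\eqref{e:modifiedinfiniteproducts}, for which Lemma~5.4 and Appendix~C of~\cite{kappeler} guarantee absolute convergence, and only then split and recombine. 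A secondary, purely notational subtlety is that the index ranges $\{\g^-,\dots,\g^+\}$ in $S(q)$ and the statement of the proposition use the same symbol $j$ for the $E$'s, the $\mucirc$'s and the $\nu$'s; I would make clear that the factor $\pi_j^2$ paired with $(E_{2j}-z)(E_{2j-1}-z)$ in the second product of~\eqref{e:rootratio} uses the main-spectrum index, consistent with the two-to-one map. Once these indexing conventions are pinned down, the proof reduces to the substitutions already described.
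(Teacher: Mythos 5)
Your overall computational route --- writing $b^4 = (f^0)^2/(\Delta^2-1)$, expanding both via the normalized Hadamard products \eqref{e:modifiedinfiniteproducts}, splitting the product over the main spectrum according to $\nu_j$, and cancelling the degenerate factors --- is exactly the one the paper uses. But the bookkeeping you identify as ``the conceptual heart of the proof'' contains a genuine error that would make your final substitution fail. You claim that ``$\mucirc_j$ ranges over exactly the $\mu_j$ with $\nu_j\neq0$'' and that the fixed Dirichlet eigenvalues are precisely those with $\nu_j=0$. Neither statement holds. The Proposition after Definition~\ref{nudef} gives $\nu_j=0$ iff $\mu_j$ sits at a band edge, which may be a degenerate double point $\hat\zeta_j$ (in which case $\mu_j$ is not a $\mucirc$) \emph{or} a nondegenerate edge $E_{2j}$ or $E_{2j-1}$. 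In the latter case $\mu_j = \mucirc_j$ is a movable Dirichlet eigenvalue that happens, at the chosen base point, to sit on a band edge; it has $\nu_j=0$. These are exactly the indices over which the first product in \eqref{e:rootratio} runs --- under your claim that product would be vacuous. In the other direction, a fixed Dirichlet eigenvalue can lie in $\Sigma^o(\L)$ and therefore be a $\mucirc_k$ with $\nu_k=\pm1$ (Definition~\ref{def:dirichleteigenvalue} and Definition~\ref{nudef}); the genus-zero constant potential in Appendix~\ref{s:genuszero} gives a concrete example with $\mu_o=\i A\cos\alpha$ fixed yet $\nu_0=\pm1$.

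Your self-correction about ``the nondegenerate pair flanking each degenerate-edge-Dirichlet-eigenvalue'' reflects the same confusion; nothing flanks anything. The correct split of $\prod_{\nu_j=0}(\mu_j-z)^2/\big[(\zeta_{2j}-z)(\zeta_{2j-1}-z)\big]$ is simply: (a) if $\zeta_{2j}=\zeta_{2j-1}$ is degenerate, then $\mu_j$ equals that double point and the factor is identically $1$; (b) if $(\zeta_{2j},\zeta_{2j-1})=(E_{2j},E_{2j-1})$ is nondegenerate and $\mu_j=\mucirc_j$ lies at one of those edges, the factor $(\mucirc_j-z)^2/[(E_{2j}-z)(E_{2j-1}-z)]$ survives. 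Case (a) drops out and case (b) gives the first product of \eqref{e:rootratio}; indices with $\nu_j\neq0$ automatically have nondegenerate $\zeta$-pairs (otherwise one would have $\nu_j=0$), giving the second. Your convergence caution --- working with the $\pi_j$-normalized products throughout and only then regrouping --- is valid, and the rest of your plan (in particular the identification of $f^\pm$) is fine.
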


\begin{proof}
By direct calculation, we have 
\vspace*{-1ex}
\begin{multline}
b^4(z) = \frac{
	\prod\limits_{\nu_j=0}(\mu_j-z)^2}{\prod\limits_{\nu_j=0}(\z_{2j}-z)(\z_{2j-1}-z)}\prod\limits_{\nu_j\neq0}\frac{\pi_j^2}{(E_{2j}-z)(E_{2j-1}-z)}
\\[-1ex]
    = \prod\limits_{\nu_j=0}\frac{(\mucirc_j-z)^2}{(E_{2j}-z)(E_{2j-1}-z)}
	\prod\limits_{\nu_j\neq0}\frac{\pi_j^2}{(E_{2j}-z)(E_{2j-1}-z)},
\end{multline}
where in the last equality we used the fact that all real Dirichlet eigenvalues coincide with the double main eigenvalues $\zeta_j$, allowing them to cancel each other in the first fraction. 
\end{proof}


\begin{proposition}
\label{l:Brealevenq}
The matrix $B(z)$ defined in \eqref{B} satisfies
	\be
	\det B(z)= \mp \frac{\~y_{12}(L,z)}{(\D^2-1)^{1/2}}\,, \qquad z\in\Complex^\pm\,.
	\ee
Moreover, if $q(x)$ is real and even, $B(z)$ is proportional to the identity matrix.
\end{proposition}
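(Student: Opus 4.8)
The plan is to treat the two assertions separately, both being essentially direct consequences of the definitions together with results already established in this section.

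For the determinant formula I would simply compute. Write $B(z)=b(z)\,D(z)$, where $D(z)=\i\,\diag(f^-,f^+)$ on $\Complex^+\setminus\Sigma(\L)$ and $D(z)=\diag(f^+,f^-)$ on $\Complex^-\setminus\Sigma(\L)$. Then $\det B(z)=b^2(z)\det D(z)$, and in the upper half-plane $\det D(z)=\i^2 f^-(z)f^+(z)=-f^+(z)f^-(z)$, while in the lower half-plane $\det D(z)=f^+(z)f^-(z)$; this already produces the sign $\mp$. It then remains to note that $b^2(z)=f^0(z)/(\D^2(z)-1)^{1/2}$ — here one uses the conventions (to be fixed below) that the branch cut chosen for $r_{1/4}(z)=(\D^2-1)^{1/4}$ contains the branch cut $\Sigma(\L)$ of $r_{1/2}(z)$, so that $\big(r_{1/4}(z)\big)^2=r_{1/2}(z)$ identically, and similarly that $\big((f^0)^{1/2}\big)^2=f^0$ with the branch of $(f^0)^{1/2}$ fixed below — together with the factorization $\~y_{12}(L,z)=f^0(z)f^+(z)f^-(z)$ recorded right after Definition~\ref{d:f0f+f-}. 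Combining these gives $\det B(z)=\mp f^0(z)f^+(z)f^-(z)/(\D^2-1)^{1/2}=\mp\,\~y_{12}(L,z)/(\D^2-1)^{1/2}$ for $z\in\Complex^\pm$.

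For the second assertion I would argue that the off-diagonal structure of $B$ collapses when $q$ is real and even. By Lemma~\ref{l:Dirichletbase0}, every Dirichlet eigenvalue $\mu_j$ with base point $x_o=0$ lies at a point of the main spectrum, i.e.\ $\mu_j\in\{\zeta_k\}_{k\in\Integer}$. By the proposition stating that $\nu_j=0$ if and only if $\mu_j\in\{\zeta_k\}_{k\in\Integer}$ (the one immediately preceding Definition~\ref{def:dirichleteigenvalue}), this forces $\nu_j=0$ for every $j$. Hence the index sets $\{j:\nu_j=1\}$ and $\{j:\nu_j=-1\}$ are empty, so $f^+(z)\equiv f^-(z)\equiv1$ as empty products, and therefore $B(z)=\i\,b(z)\,I$ for $z\in\Complex^+$ and $B(z)=b(z)\,I$ for $z\in\Complex^-$; in either case $B(z)$ is a scalar multiple of the identity matrix.

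Neither step presents a genuine obstacle; the only point requiring care is the consistency of the branch choices in the first part — one needs the branch cut of $r_{1/4}$ and the branch of $(f^0)^{1/2}$ (both specified in the subsequent discussion) to be chosen so that $\big(r_{1/4}\big)^2=r_{1/2}$ and $\big((f^0)^{1/2}\big)^2=f^0$ hold identically on $\Complex\setminus\Sigma(\L)$, so that the manipulation of $b^2(z)$ is legitimate uniformly across both half-planes.
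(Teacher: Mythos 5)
Your proof is correct and follows essentially the same route as the paper's: the determinant formula by direct computation using $\~y_{12}(L,z)=f^0f^+f^-$ and $b^2 = f^0/(\Delta^2-1)^{1/2}$, and the second assertion via Lemma~\ref{l:Dirichletbase0} forcing $\nu_j=0$ and hence $f^\pm\equiv 1$. You supply a bit more detail than the paper (which simply states "straightforward calculation" and shows only $\C^+$), and you helpfully flag the branch-compatibility requirement that makes the squaring of $b(z)$ legitimate — a point the paper's proof leaves implicit.
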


\begin{proof}
The first part of the result is obtained by straightforward calculation. 
For example, for $z\in\Complex^+$, we have 
$\det B(z)= - f^0(z)f^+(z)f^-(z)/(\Delta^2(z)-1) ^{1/2} = - \~y_{12}(L,z)/(\D^2(z)-1)^{1/2}$. 
The second part of the result is a consequence of Lemma~\ref{l:Dirichletbase0}.  
Indeed, the lemma immediately implies that,
if $q(x)$ is real and even, $f^\pm(z)\equiv 1$.
\end{proof}

Next we define the branch cuts for $(f^0(z))^{1/2}$ and $r_{1/4}(z)$ and 
introduce all the quantities that will eventually appear in the Riemann-Hilbert problem.

\begin{definition}
Let $m_n$ be the number of bands along $\Gamma_n$ and $g_n$ be the number of non-degenerate main eigenvalues as defined in Definition \ref{def:varpi_gamma}.
Without loss of generality, we fix the choice of branch for $r_{1/2}(z)$ such that $r_{1/4}(z)>0$ as $\Im z\to\infty$ on $\Gamma_0$.
Then we uniquely define the complex fourth root $r_{1/4}(z)$ so that:
\vspace*{-1ex}
\
\begin{enumerate}
\advance\itemsep-4pt
\item 
If $m_n=1$, then the whole band is a branch cut. If $m_n\neq1$, then $\Gamma_n\setminus(\cup_{k=0}^{[g_n/4]}\Gamma_{n,4k}^\pm)$ is the branch cut, where $[x]$ denotes the floor function. 
\item 
$r_{1/4}(z) =  \D^{1/2}(z)(1+o(1))$ as $|\Im z|\to\infty$.
\end{enumerate}
Further, let\,
$\Phi(x,z)$ be the sectionally analytic $2\times2$ matrix defined as
	\be
	\Phi(x,z) =  \frac12 U^{-1}\, \Psi(x,z)B(z)\e^{\i zx\sigma_3}\,,
	\quad z\in\Complex\,.
	\label{Phi}
	\ee 
Also, let $V(x,z)$ be the $2\times2$ jump matrix defined as 
	\begin{align}
	V(x,z) = \begin{cases}
    (-1)^{\mathcal{M}_n(z)+1}\begin{pmatrix}
			{f^-}/{f^+} & 0\\
			0 & {f^+}/{f^-}
		\end{pmatrix},
		& z\in\Omega\,,\\
    (-1)^{\mathcal{M}_n(z)}\begin{pmatrix}
			{f^-}/{f^+} & 0\\
			0 & {f^+}/{f^-}
		\end{pmatrix},
		 &z\in\Real\setminus\Omega\,,\\
    (-1)^{\mathcal{M}_n^+(z)}(-\i)^k\e^{-\i zx\hat\sigma_3}
		\begin{pmatrix}
			0 & {f^+}/{f^-}\\
			{f^-}/{f^+} & 0
		\end{pmatrix}, & z\in \Gamma_{n,k}^+,\quad $k$~\textrm{odd},\\
	(-1)^{\mathcal{M}_n^+(z)}(-\i)^k, & z\in \Gamma_{n,k}^+,\quad $k$~\textrm{even},\\
    (-1)^{\mathcal{M}_n^-(z)}\i^k\e^{-\i zx\hat\sigma_3}
		\begin{pmatrix}
			0 & {f^-}/{f^+}\\
			{f^+}/{f^-} & 0
		\end{pmatrix}, &  z\in\Gamma_{n,k}^-,\quad $k$~\textrm{odd},\\
	(-1)^{\mathcal{M}_n^-(z)}\i^k, & z\in \Gamma_{n,k}^-,\quad $k$~\textrm{even},
	\end{cases}
\label{e:Vexplicit}
\end{align}
where $\Omega = \cup_{k\in\mathbb{Z}}[\varpi_{2k},\varpi_{2k+1}]$ and 
$\e^{ia\hat\sigma_3}A = \e^{ia\sigma_3}A\,\e^{-ia\sigma_3}$ for any $2\times2$ matrix~$A$.



The quantities $\mathcal{M}_n^\pm$ in~\eqref{e:Vexplicit} are defined as follows. 
Let $s$ be the number of Dirichlet eigenvalues $\mucirc$ in $\Gamma_n^+$, with $\nu=0$ on the band edges. 
Label the intersection of $\Gamma_n^+$ and the real axis $\eta_{s+1}$ and label the uppermost band edge on $\Gamma_n^+$ as $\eta_0$. 
Starting at $\eta_{s+1}$ and moving towards infinity along $\Gamma_n^+$, label these $\mucirc$ as $\eta_s,\eta_{s-1},...,\eta_1$ in the order they appear along the curve. 
Define $\gamma_{np}$ as the partial arc of $\Gamma_n^+$ between $\eta_p$ and $\eta_{p+1}$. Then
\be
\label{mnplus}
    \mathcal{M}_n^+(z)=p, \qquad z\in \gamma_{np}.
\ee
Define $\mathcal{M}_n^-$in the same way as $\mathcal{M}_n^+$, but using $\Gamma_n^-$ instead of $\Gamma_n^+$. Label all corresponding points along $\Gamma_n$ as $\eta_{-s}, \eta_{-s+1}, \cdots, \eta_{-1}, \eta_{0}$.

Define $\mathcal{M}_n$ by labeling the intersections of\, $\Gamma_n$ and the real axis as $\xi_n$, and then
\be
\label{mn}
    \mathcal{M}_n(z)=\begin{cases}
        \sum\limits_{s=n+1}^{-1}\left(\mathcal{M}_{n+1}^+(\zeta_s)+\mathcal{M}_{n+1}^-(\zeta_s)\right), & z\in[\zeta_n,\zeta_{n+1}], \qquad n\leq-2\\
        0, & z\in[\zeta_{n},\zeta_{n+1}], \qquad n=-1\\
        \sum\limits_{s=0}^{n}\left(\mathcal{M}_n^+(\zeta_s)+\mathcal{M}_n^-(\zeta_s)\right), & z\in[\zeta_n,\zeta_{n+1}], \qquad n\geq0
    \end{cases}.
\ee
%
%
	To explicitly express the jump contour for the RH problem, we define $\tilde{\Gamma}$ as
	\be\label{e:~Gamma}
	\~\Gamma = \Gamma\setminus\big(\bigcup_{n=G_-}^{G_+} \Gamma_{n,0}^{\pm}\cup \Gamma_\mathrm{res}\big),
	\ee
	where $\Gamma$ is defined in \eqref{e:Gamma}.
\end{definition}

\noindent
The above choices of branch cuts imply 

\begin{proposition}
With the above choices of branch cut, 
$r_{1/4}(z) = \i^n|\D(z)|^{1/2}(1 + O(1/z))$ as $|\Im z|\to\infty$ along $\Gamma_n$. 
\label{p:r14asymp}
\end{proposition}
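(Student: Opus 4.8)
The plan is to track the asymptotic behavior of $r_{1/4}(z)$ along each arc $\Gamma_n$ by combining the already-established asymptotics of $r_{1/2}(z)$ with a careful bookkeeping of how many branch cuts of the fourth root one crosses while moving from the ``reference'' arc $\Gamma_0$ to $\Gamma_n$. First I would recall that, by the normalization of the square root in Section~\ref{s:spectrum}, $r_{1/2}(z)\sim\Delta(z)$ as $|\Im z|\to\infty$, and on $\Gamma_n$ the discriminant satisfies $\Delta(z)=\tfrac12\e^{\mp\i zL}(1+O(1/z))$ by Proposition~\ref{asymD}; since $\Im\Delta(z)=0$ on $\Gamma_n$, the quantity $r_{1/2}(z)$ is real there and its modulus is $|\Delta(z)|(1+O(1/z))$, so $r_{1/4}(z)$ equals $\pm|\Delta(z)|^{1/2}$ or $\pm\i|\Delta(z)|^{1/2}$ near infinity on $\Gamma_n$, the ambiguity being exactly a fourth root of unity.

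Next I would fix the constant on $\Gamma_0$: by the stated normalization, $r_{1/4}(z)>0$ as $\Im z\to\infty$ along $\Gamma_0$, which is the case $n=0$, $\i^0=1$. To pass from $\Gamma_0$ to a general $\Gamma_n$, I would deform a path in the $z$-plane from a point near infinity on $\Gamma_0$ to a point near infinity on $\Gamma_n$ and count the signed number of branch cuts of $r_{1/4}(z)$ crossed. Here the key structural input is the choice of branch cuts in the Definition preceding the proposition: on each arc, the fourth-root branch cut consists of the full arc minus the ``every-fourth'' pieces $\Gamma_{n,4k}^\pm$ (or the whole band if $m_n=1$), so that crossing from one arc to the next (and, within an arc, moving past successive band/gap segments) changes the local branch of $r_{1/4}$ by one factor of $\i$. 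I would show, by induction on $|n|$ using the orientation conventions (each $\Gamma_n$ oriented with $\Im z$ increasing near infinity, arcs ordered so that consecutive ones are separated by exactly the right number of cut segments), that the accumulated phase factor in passing from $\Gamma_0$ to $\Gamma_n$ is precisely $\i^n$; the error $1+O(1/z)$ is inherited unchanged from the $r_{1/2}$ asymptotics since raising $1+O(1/z)$ to the power $1/2$ again yields $1+O(1/z)$.

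The main obstacle I anticipate is the combinatorial verification that the branch-cut prescription really produces the phase increment $\i$ per unit increment of $n$ — i.e., reconciling the ``minus the $\Gamma_{n,4k}^\pm$'' rule, the floor-function index $[g_n/4]$, and the various orientation choices so that the count is exactly $n$ and not, say, $n$ plus a bounded correction depending on the number of bands $m_n$ or eigenvalues $g_n$ on intervening arcs. This is essentially the same bookkeeping that underlies the definition of the integers $\mathcal{M}_n^\pm$, and I would lean on those definitions: the quantity $\i^n$ is designed precisely so that the branch of $r_{1/4}$ near infinity on $\Gamma_n$ is consistent with the jump matrix $V(x,z)$ in~\eqref{e:Vexplicit}, so the proof amounts to checking that Definition~\ref{def:varpi_gamma}'s labeling and the branch-cut rule are mutually consistent. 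Once that consistency is in hand, the estimate $r_{1/4}(z)=\i^n|\Delta(z)|^{1/2}(1+O(1/z))$ on $\Gamma_n$ follows immediately, and since $|\Delta(z)|^{1/2}=\tfrac1{\sqrt2}\e^{|\Im z|L/2}(1+O(1/z))$ this gives the claimed behavior as $|\Im z|\to\infty$.
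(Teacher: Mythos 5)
Your plan takes a genuinely different route from the paper, but the mechanism you propose — ``count the signed number of branch cuts of $r_{1/4}(z)$ crossed'' — does not actually produce the factor $\i^n$ here. Look again at the definition of the branch cut for $r_{1/4}$: it is $\Gamma_n\setminus\big(\cup_{k=0}^{[g_n/4]}\Gamma_{n,4k}^\pm\big)$, and by construction $\Gamma_{n,0}^\pm$ (the pieces of $\Gamma_n$ extending to infinity) are \emph{removed} from the cut. So at large $|\Im z|$ the arcs are not branch cuts of $r_{1/4}$ at all, and a path joining a high point on $\Gamma_0$ to a high point on $\Gamma_n$ crosses no $r_{1/4}$ cuts whatsoever. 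There is simply nothing discrete to count; the factor $\i^n$ cannot arise as a product of jump discontinuities across cuts.

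What actually produces $\i^n$ is the \emph{continuous} winding of $\arg\Delta$ across the strip between $\Gamma_0$ and $\Gamma_n$, and this is how the paper proves the proposition. Writing $z=x+\i y$ and using Proposition~\ref{asymD}, one has $\Delta(z)\sim\cos(zL)=\cos(xL)\cosh(yL)-\i\sin(xL)\sinh(yL)$ as $|y|\to\infty$, so $r_{1/4}(z)\sim\cos^{1/2}(zL)\sim\cosh^{1/2}(yL)\,\e^{\mp\i xL/2}$. Since the arcs $\Gamma_n$ are asymptotic to the vertical lines $x=n\pi/L$ (these are exactly the lines where $\Im\cos(zL)\to0$), the phase $\e^{\mp\i xL/2}$ converges to a power of $\i$ on $\Gamma_n$, and $\cosh^{1/2}(yL)\sim|\Delta(z)|^{1/2}$. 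That is the whole proof. If you want to keep the flavor of your argument, the fix is to replace ``count branch-cut crossings'' by ``track the continuous change of $\tfrac14\arg(\Delta^2-1)\sim\tfrac12\arg\Delta\sim\mp\tfrac12 xL$ along a path in the region $|\Im z|\gg1$ where $r_{1/4}$ is analytic''; once stated that way your inductive step becomes exactly the phase calculation above, and the combinatorics of $\mathcal{M}_n^\pm$, $[g_n/4]$, etc.\ that you worried about play no role in this proposition (they matter only for the finite-$|\Im z|$ jump structure in $V(x,z)$).
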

\begin{proof}
Let $z=x+\i y$. From Proposition \ref{asymD} we have that $\D(z)\sim \cos(zL)$ as $|y|\to\infty$,
where as usual ``$\sim$'' denotes equality up to higher-order terms. 
Therefore, as $|y|\to\infty$,
$r_{1/4}(z)\sim \cos^{1/2}(zL) = (\cos xL \cosh yL + \i \sin xL \sinh yL)^{1/2}
  \sim \cosh^{1/2}(yL)\,\e^{\i xL/2}$.
This completes the proof.
\end{proof}
\noindent Most importantly, the above definitions allow us to obtain the main result of this section, namely: 
\begin{theorem}
	The $2\times2$ matrix-valued function $\Phi(x,z)$ defined by~\eqref{Phi} solves the following Riemann-Hilbert problem:
	\label{t:direct}
	\begin{RHP}
		\label{RHP1}
		Find a $2\times2$ matrix-valued function $\Phi(x,z)$ such that
		\vspace*{-1ex}
		\begin{enumerate}
			\advance\itemsep-2pt
			\item 
			$\Phi(x,z)$ is a holomorphic function of $z$ for $z\in\C\setminus\~\Gamma$. 
			\item 
			The non-tangential limits $\Phi_\pm(x,z)$ of $\Phi(x,z)$ to $\~\Gamma$ are continuous functions of $z$ in $\~\Gamma\setminus\{E_k\}$, and have at worst quartic root singularities on $\{E_k\}$.
			\item 
			$\Phi_\pm(x,z)$ satisfy the jump relation 
			\be
			\Phi_+(x,z)=\Phi_-(x,z)V(x,z),\qquad z\in \~\Gamma \,,
			\ee
			where $V(x,z)$ is given by~\eqref{e:Vexplicit}, and $\~\Gamma$ is given by \eqref{e:~Gamma}.
			\item 
			As $z\to\infty$, $\Phi(x,z)$ has the following asymptotic behavior 
			\be
			\Phi(x,z) = (I+O(1/z))\,B(z), 
			\label{asymPhi}
			\ee
			with $U$ as in~\eqref{e:Udef} and $B(z)$ as in~\eqref{B}.
			\item 
			There exist positive constants $c$ and $M$ such that $|\phi_{ij}(x,z)|\leq M\e^{c|z|^2}$ for all $z\in \mathcal{D}$.
		\end{enumerate}
	\end{RHP}
\end{theorem}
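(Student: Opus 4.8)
The plan is to verify each of the five items of RHP~\ref{RHP1} in turn by unwinding the definitions, relying on the asymptotic results and the properties of the Bloch-Floquet eigenfunctions already established. First I would address analyticity (item~1): the matrix $\Phi(x,z)$ is built from $\Psi(x,z)$ and $B(z)$ via~\eqref{Phi}. The entries of $\Psi(x,z)$ are, by Proposition~\ref{p:BFyexpansion}, rational combinations of the entire functions $\~y_{ij}(L,z)$, $\~y_i(x,z)$ and of $\rho^{\pm1}(z)$; the only possible singularities come from (a) the branch points of $\rho^{\pm1}$, i.e.\ the branch cut of $r_{1/2}(z)=(\Delta^2-1)^{1/2}$, which is $\Sigma(\L)$, and (b) the zeros of $\~y_{12}(L,z)$, i.e.\ the Dirichlet eigenvalues $\mu_j$. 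Multiplication by $B(z)=b(z)\diag(\cdot)$ introduces the factors $(f^0)^{1/2}$ and $r_{1/4}^{-1}$, whose branch cuts were just fixed in the preceding Definition. The key bookkeeping step is to show that all the $\mu_j$-poles of $\psi^\pm$ are exactly cancelled by the corresponding factors $f^\pm(z)$ and $(f^0(z))^{1/2}$ in $B(z)$: by the Propositions on $\psi^\pm$ at $\mu_j$, a pole of $\psi^+$ occurs precisely when $\nu_j=1$ (a factor $(\mu_j-z)$ of $f^+$ sits in the first column of $B$), a pole of $\psi^-$ when $\nu_j=-1$, and when $\nu_j=0$ both columns pick up $(\mu_j-z)^{1/2}$ from $(f^0)^{1/2}$ while $\psi^\pm$ is already finite (so one gets at worst a vanishing, not a pole). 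What survives as genuine discontinuity is then the union of $\Sigma(\L)$ (through $r_{1/2}$, hence the band structure of the $\Gamma_n$) together with the extra cuts of $r_{1/4}$ and $(f^0)^{1/2}$; matching this against the definition~\eqref{e:~Gamma} of $\~\Gamma$ gives item~1, and the fact that $r_{1/4}$ contributes only fourth-root singularities at the nondegenerate band edges $E_k$ — while $\psi^\pm$ and the remaining factors stay bounded there — gives the regularity statement in item~2.

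Next I would compute the jump (item~3). On each component of $\~\Gamma$ one has $\Phi_+ = \Phi_- V$ with $V = B_-^{-1}\e^{-\i zx\sigma_3}\Psi_-^{-1}\Psi_+\,\e^{\i zx\sigma_3}B_+$ (reading off from~\eqref{Phi}, since $U^{-1}$ is constant). The point is that $\Psi_+$ and $\Psi_-$ are related column-by-column: on a band the two Bloch-Floquet solutions $\psi^+,\psi^-$ trade places up to the sign ambiguity of the square root, whereas on the real axis (inside vs.\ outside $\Omega$) and on a gap they stay in place but $r_{1/2}$ and $(f^0)^{1/2}$ flip sign across their cuts. Thus $\Psi_-^{-1}\Psi_+$ is, in each case, either a constant permutation-type matrix or the identity, and the $x$-dependent factor $\e^{-\i zx\hat\sigma_3}$ appears exactly when the columns are swapped. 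The ratios $f^\pm/f^\mp$ and the powers of $\pm\i$ and the signs $(-1)^{\mathcal{M}}$ come from comparing the boundary values $b_\pm(z)$ of $b(z)=\big(f^0\big)^{1/2}/r_{1/4}$ across the various pieces of cut; the integer-valued functions $\mathcal{M}_n,\mathcal{M}_n^\pm$ are precisely designed to keep track of how many sign flips of $r_{1/4}$ and $(f^0)^{1/2}$ have accumulated along $\Gamma_n$ as one moves from infinity down to the real axis and then continues to the next arc. So this step is a careful but essentially mechanical case analysis matching the six cases of~\eqref{e:Vexplicit} one at a time, and I expect the careful version of this to be the main technical burden — in particular, checking that the recursive definition~\eqref{mn} of $\mathcal{M}_n$ correctly composes the per-arc counts so that the jumps on the two ends of each gap $[\zeta_n,\zeta_{n+1}]$ on $\R\setminus\Omega$ are consistent.

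For item~4, I would combine the large-$z$ asymptotics of $\Psi(x,z)$ coming from Proposition~\ref{p:asympsi} — which after multiplication by $U^{-1}$ and $\e^{\i zx\sigma_3}$ give $\frac12 U^{-1}\Psi(x,z)\e^{\i zx\sigma_3} = I + O(1/z)$ (this is exactly the content behind the reconstruction formula~\eqref{e:reconstruction}) — with the definition of $\Phi$, which then yields $\Phi(x,z) = (I + O(1/z))\,B(z)$ directly, using that the leading $O(1/z^2)$ errors in the $\psi^\pm$ expansions are uniform for $x$ in a period. One subtlety is to argue that the asymptotics of $\psi^-$ and $\psi^+$ in $\C^+$ and $\C^-$ dovetail correctly with the column ordering in~\eqref{e:defPsi} so that the product with the two branches $\i\diag(f^-,f^+)$ and $\diag(f^+,f^-)$ of $B(z)$ is consistent; this is where the choice of $U$ and the factor of $\i$ in~\eqref{B} are used. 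Finally, item~5 is the growth bound: each entry of $Y(x,z)$, hence of $\~Y(x,z)$, $\Delta(z)$, $\~y_{12}(L,z)$, is entire of order~$1$ (as recalled in the Remark preceding Section~\ref{s:inverse}), so grows like $\e^{c|z|}$; the quotients defining $\psi^\pm$ and the $1/4$-root factors in $B$ can at worst square this rate, giving $\e^{c|z|^2}$ on the compact-in-$x$, bounded-away-from-$\~\Gamma$ region $\mathcal D$, which establishes item~5. Assembling items~1--5 completes the proof that $\Phi(x,z)$ solves RHP~\ref{RHP1}.
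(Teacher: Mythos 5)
Your proposal follows the same route as the paper's proof, item by item: analyticity away from $\~\Gamma$ from the cancellation of the $\mu_j$-poles of $\psi^\pm$ by the factors $f^\pm$ and $(f^0)^{1/2}$ in $B(z)$ (items 1--2), the jump computed from $V=\e^{-\i zx\sigma_3}B_-^{-1}\Psi_-^{-1}\Psi_+ B_+ \e^{\i zx\sigma_3}$ using $\rho_+=\rho_-^{-1}$ so that $\Psi_-^{-1}\Psi_+$ is $\sigma_1$ on bands of $\Gamma_n\setminus\R$ and $I$ elsewhere (item~3), and the asymptotics of $\Psi$ from Proposition~\ref{p:asympsi} for item~4. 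You correctly identify item~3 as the technical core, and the sketch of how $\mathcal{M}_n^\pm$ track accumulated sign flips of $b(z)$ is the right mechanism (the paper carries this out via the auxiliary functions $p(z)$, $h(z)$).

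One remark on item~5: the reasoning ``quotients and quarter-root factors can at worst \emph{square} the growth rate, giving $\e^{c|z|^2}$'' is not correct mathematics --- composing or dividing order-one entire functions does not take $\e^{c|z|}$ to $\e^{c|z|^2}$ (a square would give $\e^{2c|z|}$, still order one). In the time-independent problem the actual growth is $\e^{c|z|}$, and the stated $\e^{c|z|^2}$ bound is simply a deliberately weaker bound chosen so the same formulation carries over to the time-dependent RHP~\ref{RHP2}, where the factors $\e^{\pm 2\i z^2 t}$ genuinely produce order-two growth. The paper itself defers the proof of this item to the defocusing case, so the conclusion you reach is not in question, but the justification as written would not survive a careful reading.
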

\begin{proof} 
We prove each item separately.
	
\noindent\textit{Item~1}. 
It follows from \eqref{BF} that $\psi^\pm$ are meromorphic functions of $z$ for $z\in\mathbb{C}\setminus\Sigma(\L)$, so that $\Psi(x,z)$ is meromorphic in $\mathbb{C}\setminus\Sigma(\L)$. From \eqref{B}, we can derive that $\Phi(x,z)$ could only be singular on $\{\mu_j\}$ and $\{\z_j\}$. Then $\Phi(x,z)$ is a holomorphic function of $z$ for $z\in\C\setminus\~\Gamma$. 

\noindent\textit{Item~2}.
\label{cond2}
    From the definition of the Bloch-Floquet solutions \eqref{BF}, we can derive that $\Psi_{\pm}(x,z)$ can only be singular at the Dirichlet eigenvalues $\mu_j$. From \eqref{B} and \eqref{Phi}, we also get that $\Phi_{\pm}(x,z)$ can only be singular at $\mu_j$ and $\zeta_j$. We will now prove that $\Phi_{\pm}$ cannot be singular at $\mu_j$ unless $\mu_j=E_k$ for some $k$. We will discuss three different cases: $\nu_j=\pm1$ and $\nu_j=0$. We will first examine the behavior at $z=\mu_j$ then use the two to one map between $\zeta$ and $\mu$ to examine the behavior at $z=\zeta_{j}$.

    Suppose that $\nu_j=1$, which is only possible when $\mu_j=\mucirc_k$ for some $k$. By looking at the definition of $\nu_j$, we can derive that $\~y_{11}(L,\mu_j)=\rho^{-1}(\mu_j)$. Then $\rho^{-1}(z)-\~y_{11}(L,z)$ is holomorphic in a neighborhood of $\mu_j$, and $\mu_j$ are the zeros of $\rho^{-1}(z)-\~y_{11}(L,z)$. For $z\in\mathbb{C}^+$, we have
\be
\Phi(x,z) = i b(z) \,\big(\,\psi^-f^-\,,\,\psi^+f^+\,\big)\,\e^{izx\sigma_3}\,.
\ee

    For the first column, the zeros of $\rho^{-1}(z)-\~y_{11}(L,z)$ at $\mu_j$ cancel the zeros of $\~y_{12}$ at $\mu_j$. For the second column, the zeros of $f^+$ cancel the singularities of $\psi^+$. Therefore $\Psi_+$ is nonsingular at $\mu_j$.  We reach the same conclusion for $z\in\mathbb{C}^-$ by the same method.

    Next, since $\mu_j=\mucirc_k$ for some $k$, the two to one map from $\zeta$ to $\mu$ means that the two corresponding $\zeta_j$ must be $E_{2k}$ and $E_{2k-1}$. Once again looking at \eqref{B}, we see that the only singular contribution to $\Phi_{\pm}$ is at $z\rightarrow E_{2k}$ and $E_{2k-1}$ from the boundary values of 
    $r_{1/4}(z)$. 

    Now suppose $\nu_j=-1$, which is once again only possible when $\mu_j=\mucirc_k$ for some $k$. 
    We have $\~y_{11}(L,\mu_j)=\rho(\mu_j)$. Then $\rho(z)-\~y_{11}(L,z)$ is holomorphic in $\mu_j$'s neighborhood, and $\mu_j$ are the zeros of $\rho(z)-\~y_{11}(L,z).$ 
    Using the same method as for $\nu=1$, we see that $\Phi_{\pm}$ are nonsingular at $\mu_j$ and has quartic root singularities at $E_{2k}$ and $E_{2k-1}$

    Lastly suppose that $\nu_j=0$. This can happen in two scenarios: a Dirichlet eigenvalue is in a degenerate gap $(\mu_j=\zeta_{2j}=\zeta_{2j-1})$ or a Dirichlet eigenvalue is on a band edge $(\mu_j=\mucirc_k=E_{2k}$ or $\mu_j=\mucirc_k=E_{2k-1}$ for some $k$). We examine the former first.

    If $\nu_j=0$ then $\rho(\mu_j)=\rho^{-1}(\mu_j)=\pm1$. 
    Therefore $\mu_j$ are the zeros of both $\rho(z)-\~y_{11}(L,z)$ and $\rho^{-1}(z)-\~y_{11}(L,z)$. 
    Looking at $\Psi_{\pm}$, the zeros of $\rho(z)-\~y_{11}(L,z)$ and $\rho^{-1}(z)-\~y_{11}(L,z)$ at $\mu_j$ cancel the zeros of $\~y_{12}(L,z)$ at $\mu_j$. 
    Furthermore, when $\mu_j=\zeta_{2j}=\zeta_{2j-1}$, the square root zeros of $(f^0)^{1/2}$ at $\mu_j$ cancel those of  $r_{1/4}(z)$ at $\zeta_{2j}=\zeta_{2j-1}$. 
    In this case, $\Psi_{\pm}(x,z)$ is nonsingular at $z=\mu_j$. 

    Now let $\mu_j=\mucirc_k=E_{2k}$ for some $k$. In this case we still have $\nu_j=0$ and $\Psi_{\pm}$ is still nonsingular at $\mu_j$. 
    Examining the coefficient of the matrix $B(z)$ in \eqref{B}, we see that $(f^0)^{1/2}$ has a square root zero and $r_{1/4}(z)$ has a quartic root zero at $\mu_j$. This leads to a quartic root zero in the numerator which is not a singularity. 
    However, at $E_{2k-1}$, there is no zero in $(f^0)^{1/2}$ meaning there is a quartic root singularity at $z=E_{2k-1}$ when $\mu_j=E_{2k}$. Likewise, there is a quartic root singularity at $z=E_{2k}$ when $\mu_j=E_{2k-1}$.
 
\noindent\textit{Item~3}. 
Recalling the definition of $\rho(z)$ in section~\ref{s:spectrum},
	we have that 
	the discrepancy between the boundary values of $\sqrt{\D^2-1}$ on the branch cut~$\Sigma(\L)$ is 
	\vspace*{-0.7ex}
	\bse
	\begin{gather}
	\rho_+(z)=\rho_-^{-1}(z),\qquad \rho_+^{-1}(z)=\rho_-(z),\qquad  z\in\Sigma(\L).
	\end{gather}
	\ese
	Therefore, 
	\vspace*{-0.4ex}
	\bse
	\begin{gather}
		\Psi_+(x,z)=\Psi_-(x,z)\sigma_1\,,\quad z\in\Sigma(\L)\setminus\Real\,.
		\\
		\Psi_+(x,z)=\Psi_-(x,z)\,,\quad z\in\Real\,.
	\end{gather}
   \ese
	As a result, we see that the jump matrix $V(x,z)$ takes the form 
	\begin{align*}
	V(x,z) &= \e^{-\i zx\sigma_3}(B_-)^{-1}\sigma_1B_+\e^{\i zx\sigma_3}, \quad z\in\Sigma(\L)\setminus\Real,\\
	&=\begin{cases}
	& \displaystyle \frac{h_-(z)}{h_+(z)}\frac{p_+(z)}{p_-(z)}\e^{-\i zx\hat\sigma_3}
	\begin{pmatrix}
		0 & {f^+}/{f^-}\\
		{f^-}/{f^+} & 0
	\end{pmatrix}, \quad z\in\Complex^+\cap \Sigma(\L),\\
    & \displaystyle \frac{h_-(z)}{h_+(z)}\frac{p_+(z)}{p_-(z)}\e^{-\i zx\hat\sigma_3}
    \begin{pmatrix} 0 & {f^-}/{f^+}\\ {f^+}/{f^-} & 0 \end{pmatrix}, \quad z\in\Complex^-\cap \Sigma(\L),
	\end{cases}\\
\begin{split}
V(x,z)&=\e^{-\i zx\sigma_3}(B_-)^{-1}B_+\e^{\i zx\sigma_3}, \quad z\in\Real,\\
&=\i\frac{h_-(z)}{h_+(z)}\frac{p_+(z)}{p_-(z)}\e^{-\i zx\hat\sigma_3}
\begin{pmatrix}
	{f^-}/{f^+} & 0\\
	0 & {f^+}/{f^-}
\end{pmatrix}, \quad z\in\Real.
\end{split}
	\end{align*}
	where $p(z)$ and $h(z)$ are defined as
		\bse
	\begin{gather}
		p(z) = \left(\prod_{k=\g^-\atop\nu_k=0 }^{\g^+}(\mucirc_k - z)\,\right)^{\frac{1}{2}}\,,
		\label{e:pdef}
		\\
		h(z) = \left(\prod_{k=\g^- }^{\g^+}{(E_{2k-1}-z)(E_{2k}-z)}
		\right)^{\frac{1}{4}}\,.
	\end{gather}
	\ese
On the real axis, we have
	\be
	\frac{h_-(z)}{h_+(z)}=-\i,\quad z\in[\varpi_{2k},\varpi_{2k+1}],\qquad \frac{h_-(z)}{h_+(z)}=\i,\quad z\in[\varpi_{2k+1},\varpi_{2k+2}].
	\ee
On the analytic curve $\Gamma_n$, we have
	\be
\frac{h_-(z)}{h_+(z)}=(\mp\i)^k,\quad z\in \Gamma_{n,k}^\pm.
\ee
Putting everything together, we obtain the jump matrix explicitly as in~\eqref{e:Vexplicit}.

\noindent\textit{Item~4}. 
From the asymptotic behaviors of $\psi^{\pm}(x,z)$ in Proposition~\ref{p:asympsi}, we have 
    \be
        \Psi(x,z)= \left(\sqrt2  U+O(1/z)\right)\e^{-izx\sigma_3},\qquad z\to\infty\,.
    \ee
Thus, recalling~\eqref{e:Utildedef}, we have 
    \be
        \~U \, \Psi(x,z)\e^{izx\sigma_3}=  I+O(1/z), \qquad z\rightarrow\infty 
    \ee
for $z\in\mathbb{C}$, which implies~\eqref{asymPhi}

\noindent\textit{Item~5}. 
The proof of item~5 is the same as the proof for the defocusing NLS equation, and is therefore omitted for brevity. 
Details can be found in \cite{BiondiniZhang2023}.
\end{proof}

\smallskip
\begin{lemma}
\label{equiv1}
    If  $\Phi(x,z)$ solves the RHP~\ref{RHP1}, then $\det\Phi(x,z)\equiv1$.
\end{lemma}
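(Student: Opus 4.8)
Write $D(x,z):=\det\Phi(x,z)$. The plan is the classical three‑step argument for the determinant of an RHP solution: show $D$ has no jump, show it continues to an entire function, and pin it down using the behaviour at infinity. First I would check that $\det V(x,z)\equiv1$ on all of $\~\Gamma$ by going through the branches of~\eqref{e:Vexplicit}. The diagonal pieces all have the shape (scalar)$\cdot\diag(f^-/f^+,f^+/f^-)$: the $2\times2$ determinant of the diagonal factor is $(f^-/f^+)(f^+/f^-)=1$, while the scalar prefactors $(-1)^{\mathcal M_n(z)}$, $(-1)^{\mathcal M_n(z)+1}$, $(-1)^{\mathcal M_n^\pm(z)}(\mp\i)^k$ with $k$ \emph{even} contribute $(-1)^{2\mathcal M_n}=1$ and $(\mp\i)^{2k}=(-1)^k=1$; hence the block determinant is $1$. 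The off‑diagonal pieces occur only for $k$ \emph{odd}: the antidiagonal factor $\bigl(\begin{smallmatrix}0&f^+/f^-\\ f^-/f^+&0\end{smallmatrix}\bigr)$ has determinant $-1$, the conjugation by $\e^{-\i zx\hat\sigma_3}$ is a similarity and leaves determinants unchanged, and the scalar prefactor contributes $(\mp\i)^{2k}=(-1)^k=-1$, so the product is again $1$. Therefore $D_+(x,z)=D_-(x,z)$ on $\~\Gamma\setminus\{E_k\}$, and by Item~1 of RHP~\ref{RHP1} together with a Morera‑type continuation argument, $D(x,\cdot)$ extends to a single‑valued holomorphic function on $\C\setminus\{E_k\}$.

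Next I would dispose of the points $E_k$. By Item~2 every entry of $\Phi$ is $O(|z-E_k|^{-1/4})$ near $E_k$, so $D=\Phi_{11}\Phi_{22}-\Phi_{12}\Phi_{21}=O(|z-E_k|^{-1/2})$; in particular $(z-E_k)D(x,z)\to0$ as $z\to E_k$, so by Riemann's removable‑singularity criterion each $E_k$ is a removable singularity of $D$, and $D(x,\cdot)$ is entire. One should also note that no other isolated singularities can hide: off $\~\Gamma$ the function $\Phi$ is holomorphic by Item~1, and at the Dirichlet points $\mu_j$ (including the movable ones $\mucirc_j\in\Sigma^o(\L)$) and at the remaining band edges lying on the bounded part of $\~\Gamma$, the cancellations engineered into $B(z)$ via $f^0,f^\pm$ in~\eqref{e:f+-0def} — exactly those used in the proof of Theorem~\ref{t:direct} — make $\Phi_\pm$, hence $D_\pm$, continuous, so Step~1 already accounts for them.

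Finally I would identify $D$ via the normalization. From Item~4, $\Phi(x,z)=(I+O(1/z))B(z)$ as $z\to\infty$, so $D(x,z)=\det(I+O(1/z))\,\det B(z)=(1+O(1/z))\,\det B(z)$; a short computation with Proposition~\ref{asymD}, the asymptotics of $\~Y(L,z)$ coming from~\eqref{asymY}, and the branch normalization of $r_{1/4}$ shows $\det B(z)\to1$. Hence $D(x,z)\to1$, so $D(x,\cdot)$ is a bounded entire function, and by Liouville's theorem $D\equiv1$. (Item~5 guarantees $D$ has finite order, which is the ingredient needed if the asymptotics of Item~4 are only available away from the bands: a Phragmén–Lindelöf argument in the finitely many sectors cut out by $\Real$ and the arcs $\Gamma_n$ then still gives the boundedness required for Liouville.)

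The bulk of the work, and the main place where something can go wrong, is the sign/phase bookkeeping in Step~1: one must track all the factors $(-1)^{\mathcal M_n}$, $(-1)^{\mathcal M_n^\pm}$ and the powers of $\pm\i$ in~\eqref{e:Vexplicit} and verify they are mutually consistent with $\det V\equiv1$ across the branch cuts of $r_{1/4}$ and $(f^0)^{1/2}$. The second delicate point is confirming in Step~2 that the at‑worst quartic‑root singularities at the $E_k$ really combine into a \emph{removable} singularity of the determinant (rather than a mere $|z-E_k|^{-1/2}$ bound), which is precisely where the particular normalization chosen for $B(z)$ is used.
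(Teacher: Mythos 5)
Your three-step scheme---verify $\det V\equiv1$ on $\~\Gamma$, continue $D=\det\Phi$ across $\~\Gamma$ and remove the singularities at the $E_k$ to get an entire function, then invoke Liouville together with the normalization at infinity---is the standard argument and almost certainly the one the paper has in mind when it defers the proof to \cite{BiondiniZhang2023}. Steps 1 and 2 are carefully executed and correct: each branch of \eqref{e:Vexplicit} has unit determinant (the diagonal blocks have $\det=1$, the antidiagonal blocks contribute $-1$ which is exactly cancelled by $(\mp\i)^{2k}=(-1)^k=-1$ for $k$ odd, and the conjugation by $\e^{-\i zx\hat\sigma_3}$ is determinant-preserving), and the bound $D=O(|z-E_k|^{-1/2})$ near the band edges gives $(z-E_k)D\to0$, so Riemann's criterion makes each $E_k$ removable.

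The gap is in Step 3. The assertion that ``a short computation \dots shows $\det B(z)\to1$'' is left unsubstantiated, and it does not obviously hold with the paper's stated normalizations. By Proposition~\ref{l:Brealevenq}, $\det B(z)=\mp\,\~y_{12}(L,z)/(\Delta^2(z)-1)^{1/2}$ for $z\in\C^\pm$; inserting the leading asymptotics $\~y_{12}(L,z)\sim\sin(zL)$ and $(\Delta^2(z)-1)^{1/2}\sim\Delta(z)\sim\cos(zL)$ (with the paper's branch normalization $r_{1/2}\sim\Delta$ as $|\Im z|\to\infty$) gives $\det B(z)\to-\i$, not $1$. This is precisely the point where the branch of $(f^0)^{1/2}$, the factor of $\i$ distinguishing $B$ on $\C^+$ from $\C^-$, and the scalar prefactor in \eqref{Phi} all have to conspire, and your proof glosses over all of them. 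Until that constant is pinned down explicitly---which is exactly the ``second delicate point'' you flag yourself but do not actually resolve---Step 3 does not establish the stated value $D\equiv1$, only that $D$ is a constant.
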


\begin{theorem}
\label{unique}
For any fixed $x\in\mathbb{R}$, the solution of the RHP~\eqref{RHP1} is unique.
\end{theorem}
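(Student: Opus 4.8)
The plan is to run the classical ``ratio'' uniqueness argument for Riemann--Hilbert problems, adapted to the unbounded jump contour $\~\Gamma$ and to the quartic-root singularities at the band edges. Fix $x\in\R$ and suppose that $\Phi_1(x,z)$ and $\Phi_2(x,z)$ both solve RHP~\ref{RHP1}. By Lemma~\ref{equiv1} we have $\det\Phi_j(x,z)\equiv1$, so each $\Phi_j(x,\cdot)$ is invertible off its singular set with inverse equal to its adjugate; for a $2\times2$ matrix the adjugate merely permutes and negates the entries, so $\Phi_j^{-1}$ inherits the analyticity of $\Phi_j$ and the same (quartic-root) behaviour at the band edges. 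I would then set $R(x,z)=\Phi_1(x,z)\,\Phi_2(x,z)^{-1}$. Since the two solutions share the jump matrix $V(x,z)$ on $\~\Gamma$, the relation $\Phi_{j,+}=\Phi_{j,-}V$ gives
\[
R_+=\Phi_{1,+}(\Phi_{2,+})^{-1}=\Phi_{1,-}V\,V^{-1}(\Phi_{2,-})^{-1}=\Phi_{1,-}(\Phi_{2,-})^{-1}=R_-
\qquad\text{on }\~\Gamma\setminus\{E_k\}\,.
\]
Thus $R(x,\cdot)$ has no jump across $\~\Gamma\setminus\{E_k\}$, and, since by item~2 of RHP~\ref{RHP1} its boundary values there are continuous, it extends holomorphically across $\~\Gamma\setminus\{E_k\}$ (e.g.\ by Morera's theorem); hence $R(x,\cdot)$ is holomorphic on $\C\setminus\{E_k\}$.

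The second step is to dispose of the band edges. By item~2 each $\Phi_j$, and therefore also $\Phi_2^{-1}$, has at worst a quartic-root singularity at every $E_k$, so each entry of $R$ is $O(|z-E_k|^{-1/2})$ as $z\to E_k$; in particular $(z-E_k)R(x,z)\to0$, so by Riemann's removable-singularity theorem every $E_k$ is a removable singularity and $R(x,\cdot)$ extends to an entire function of $z$.

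The third step is to control $R$ at infinity and apply Liouville. By item~4, $\Phi_j(x,z)=(I+O(1/z))B(z)$ as $z\to\infty$; from the infinite-product formulas for $f^0,f^\pm$ in Definition~\ref{d:f0f+f-} and the chosen branch cuts one checks that, since only finitely many of the $\mu_j$ and $\z_j$ are non-real, $B(z)$ is invertible with $B(z)^{-1}$ bounded on the half-planes $\{|\Im z|\ge1\}$ for $|z|$ large, whence there
\[
R(x,z)=(I+O(1/z))\,B(z)\,B(z)^{-1}\,(I+O(1/z))^{-1}=I+O(1/z)\longrightarrow I\,.
\]
To upgrade this to boundedness of $R$ on all of $\C$ I would invoke item~5: the entries of $\Phi_1$ — and hence those of $\Phi_2^{-1}$, which are the same up to sign — are bounded by $M\e^{c|z|^2}$, so the entire function $R$ has order at most~$2$; being bounded on the two half-planes $\{|\Im z|\ge1\}$, and in particular on the lines $\Im z=\pm1$, a Phragm\'en--Lindel\"of argument on the strip $\{|\Im z|\le1\}$ shows that $R$ is bounded throughout $\C$. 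By Liouville's theorem $R(x,\cdot)$ is a constant matrix, which by the limit above must equal $I$; therefore $\Phi_1(x,z)\equiv\Phi_2(x,z)$.

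I expect the third step to be the main obstacle. The matrix $B(z)$ carries the branch cuts of the problem and has zeros (at the band edges and the degenerate-gap Dirichlet eigenvalues) and poles (at the main eigenvalues) accumulating at infinity along the real axis, so one must check with care both that the cancellation $R=(I+O(1/z))B\,B^{-1}(I+O(1/z))^{-1}$ is legitimate and, more importantly, that the resulting entire function really is bounded near the real axis, where those exceptional points cluster; this is precisely the point at which the \emph{a priori} growth bound of item~5, combined with the Phragm\'en--Lindel\"of principle for a strip, is indispensable.
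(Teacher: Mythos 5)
Your proof is correct and follows the same route the paper has in mind: the paper omits the argument and refers to \cite{BiondiniZhang2023}, which (extending \cite{McLaughlinNabelek}) uses exactly this ratio argument --- $R=\Phi_1\Phi_2^{-1}$ has no jump, the band-edge singularities are removable since two quartic roots give at worst an inverse square root, the $B(z)$'s cancel so $R\to I$ off the real axis, and the \emph{a priori} sub-order-two bound of item~5 together with Phragm\'en--Lindel\"of on a strip around the real axis makes $R$ bounded entire, hence $R\equiv I$ by Liouville. One small simplification you could make: since $\Phi_2^{-1}=B^{-1}(I+O(1/z))^{-1}$ and the matrix factors $B$, $B^{-1}$ cancel \emph{identically} in the product $\Phi_1\Phi_2^{-1}$, you never need to argue that $B^{-1}$ is bounded on the half-planes $\{|\Im z|\ge1\}$; the cancellation is algebraic and the remaining factor $(I+O(1/z))(I+O(1/z))^{-1}=I+O(1/z)$ already gives the needed decay to $I$ there.
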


\begin{corollary}
        The potential matrix $Q(x)$ of the Dirac equation is obtained from the solution of any solution $\Phi(x,z)$ of RHP \eqref{RHP1} by
        \be
        \label{e:reconstruction2} 
        Q(x) = \lim_{z\to\infty} iz\left[\sigma_3,\Phi(x,z)B^{-1}(z)\right]\,. 
        \ee
\end{corollary}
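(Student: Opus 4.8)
The plan is to deduce \eqref{e:reconstruction2} from the reconstruction formula \eqref{e:reconstruction}, which already expresses $Q(x)$ in terms of the matrix Bloch--Floquet eigenfunction $\Psi(x,z)$, by using uniqueness of the RHP solution to replace an arbitrary solution $\Phi(x,z)$ of RHP~\ref{RHP1} with the concrete one defined in \eqref{Phi}.

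First I would combine Theorem~\ref{t:direct}, which asserts that the $\Phi(x,z)$ built from $\Psi(x,z)$ through \eqref{Phi} solves RHP~\ref{RHP1}, with Theorem~\ref{unique}, which states that for each fixed $x\in\R$ the solution of RHP~\ref{RHP1} is unique. Together these imply that any solution $\Phi(x,z)$ of RHP~\ref{RHP1} coincides with the one in \eqref{Phi}, so that it suffices to establish \eqref{e:reconstruction2} for that particular $\Phi$.

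Next I would use that $B(z)$ (Definition~\ref{d:f0f+f-}) and $\e^{\i zx\sigma_3}$ are both diagonal, hence commute, so that for $|z|$ large along a direction avoiding $\Sigma(\L)$ --- where $B(z)$ is invertible, since $\det B(z)=\mp\,\~y_{12}(L,z)/(\D^2-1)^{1/2}$ by Proposition~\ref{l:Brealevenq} and neither factor vanishes there --- the factor $B(z)$ in \eqref{Phi} can be moved past $\e^{\i zx\sigma_3}$ and cancelled against $B^{-1}(z)$ in \eqref{e:reconstruction2}. The net effect is that $\Phi(x,z)B^{-1}(z)$ reduces precisely to the combination $\~U\,\Psi(x,z)\,\e^{\i zx\sigma_3}$ (with $\~U$ as in \eqref{e:Utildedef}) appearing in \eqref{e:reconstruction}, so that \eqref{e:reconstruction2} becomes \eqref{e:reconstruction} verbatim. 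The existence of the limit, and the fact that it equals $Q(x)$, then follow from the asymptotics of $\psi^\pm(x,z)$ in Proposition~\ref{p:asympsi}: the leading identity drops out of the commutator with $\sigma_3$, the next-order term yields $Q(x)$ after multiplication by $\i z$, and the $O(1/z^2)$ remainder vanishes in the limit.

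The step requiring the most care is bookkeeping rather than conceptual: tracking the scalar normalizations through \eqref{e:Udef}, \eqref{e:Utildedef}, \eqref{Phi} and the normalization condition \eqref{asymPhi}, so that the prefactor multiplying $\~U\,\Psi\,\e^{\i zx\sigma_3}$ in $\Phi B^{-1}$ is exactly $1$; and verifying the commutation of the diagonal matrices $B(z)$ and $\e^{\i zx\sigma_3}$ so that no stray exponential factor survives as $z\to\infty$. Beyond this, the corollary is an immediate consequence of RHP uniqueness (Theorem~\ref{unique}) together with the reconstruction identity \eqref{e:reconstruction} already obtained from Proposition~\ref{p:asympsi}.
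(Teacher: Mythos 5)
Your argument is correct and coincides with the paper's one-line proof, which states that \eqref{e:reconstruction2} follows from \eqref{Phi} and Corollary~\ref{e:reconstruction}: your steps --- invoking Theorem~\ref{unique} to identify ``any solution'' with the concrete $\Phi$ of \eqref{Phi}, commuting the diagonal matrices $B(z)$ and $\e^{\i zx\sigma_3}$ so that $B(z)B^{-1}(z)$ cancels, and then applying Proposition~\ref{p:asympsi} through \eqref{e:reconstruction} --- are exactly the unpacking of that line. The normalization bookkeeping you rightly flag does reveal a small inconsistency in the paper itself: with the literal prefactor $\tfrac12 U^{-1}$ in \eqref{Phi} one gets $\Phi B^{-1}=\tfrac1{\sqrt2}\,\~U\,\Psi\,\e^{\i zx\sigma_3}$ rather than $\~U\,\Psi\,\e^{\i zx\sigma_3}$, so for both \eqref{e:reconstruction2} and the normalization \eqref{asymPhi} to hold, that prefactor should read $(\sqrt2\,U)^{-1}=\tfrac1{\sqrt2}U^{-1}$.
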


The reconstruction formula for $Q(x)$ in \eqref{e:reconstruction2} follows from \eqref{Phi} and Corollary~\ref{e:reconstruction}.
The proofs of Lemma~\eqref{equiv1} and Theorem~\eqref{unique} are once again the same as in the defocusing case and are omitted for brevity. 
Details can be found in~\cite{BiondiniZhang2023}.

A feature in the non-self-adjoint (focusing) case that is not present in the self-adjoint (defocusing) case  
is the possibility of a Dirichlet eigenvalue lying on the interior of a spectral band. 
(Note that, due to the symmetry \eqref{symmetry} this phenomenon can only happen off the real axis,
see Appendix~\ref{a:spectrum}.)
This situation is already accounted for by the formalism, as briefly discussed next.

To see what happens if a Dirichlet eigenvalue lies inside a band, it suffices to look at the definition~\ref{nudef} of $\nu_k$ as well as that of the Bloch Floquet solutions \eqref{BF}. 
When there is a Dirichlet eigenvalue $\mu$ on the interior of a band, then $\~y_{11}(L,\mu)$ and $\~y_{22}(L,\mu)$ are complex conjugates  with magnitude 1. 
By \eqref{nudef}, this implies that $\nu=0$ for this Dirichlet eigenvalue.
For all other Dirichlet eigenvalues, $\nu=0$ implies that $\psi^\pm$ have no pole there, by Proposition~\ref{p:psipm_nu=0}.
However, when $\mu$ lies in the interior of a band, 
$\~y_{11}(L,\mu)$ and $\~y_{22}(L,\mu)$ are not equal to $\pm1$, per the above considerations,
implying that there is indeed a pole for either $\psi^+$ or $\psi^-$. 
The resolution to this complication is to first identify whether $\~y_{11}(L,\mu)$ or $\~y_{22}(L,\mu)$ equals 
$\rho(\mu)$, which determines whether $\psi^+(x,z)$ or $\psi^-(x,z)$ has a pole at $z=\mu$. 
From there, one then removes this Dirichlet eigenvalue in $f^0(z)$, instead placing it in either $f^+(z)$ or $f^-(z)$, 
in order to eliminate the pole from the Riemann-Hilbert problem.\break
A concrete example of this situation is discussed in Appendix~\ref{s:genuszero}.

\section{Time dependence, initial value problem for the focusing NLS equation with periodic boundary conditions}
\label{s:time}

In this section we show how the results of sections~\ref{s:spectrum}, \ref{s:asymptotics} and~\ref{s:inverse} can be used to 
solve the initial value problem for the focusing NLS equation with periodic BC, namely
\vspace*{-0.4ex}
\be
q(x+L,t)=q(x,t)\,.
\ee

To construct a Riemann-Hilbert characterization of periodic solutions of the focusing NLS equation with infinite-gap initial conditions, we recall 
that the Lax equation of the focusing NLS equation~\eqref{e:nls}
associated with the Lax pair~\eqref{e:NLSLP},
which can be derived from the zero curvature condition $X_t - T_x + [X,T] = 0$ in a straightforward way, is
\vspace*{-0.4ex}
\bse
\begin{gather}
	\L_t=[\mathcal{A},\L]\,,
    \label{e:LaxequationNLS}
    \\[-0.4ex]
	\intertext{with $\L$ as in~\eqref{e:Diracoperator} and}
	\mathcal{A}=2\i\sigma_3\partial_x^2+2\i Q\sigma_3\partial_x-\i(Q^2-Q_x)\sigma_3\,.
\end{gather}
\ese
It is well known that, if the potential $q$ of~\eqref{e:Diraceigenvalueproblem} evolves according to the NLS equation~\eqref{e:nls}, the Lax spectrum $\Sigma(\L)$ of the ZS problem is invariant in time
(which is easily proven by showing that the Floquet discriminant $\Delta(z)$ is time-independent).
On the other hand, the movable Dirichlet eigenvalues are time dependent, and so are the Bloch-Floquet eigenfunctions.
Thus, in order to construct an effective time-dependent Riemann-Hilbert problem, 
one must determine the time evolution of the Dirichlet eigenvalues and 
of the Bloch-Floquet eigenfunctions.
We turn to these tasks next.

\begin{proposition}
	For all $n\in\Integer$, the Dirichlet eigenvalues $\mu_n(t)$ satisfy the ODE
	\bse
	\label{e:dmundt}
	\be
	\partialderiv{\mu_{n}(t)}t=\nu_n(t)c_1(\mu_{n}(t))\frac{\rho(\mu_n(t))-\rho^{-1}(\mu_n(t))}{\~y'_{12}(L,t,\mu_{n}(t))}.
	\ee
	where 
	\be
	c_1(z)=-2z^2+\i z(q^*(0,t)+q(0,t))+|q(0,t)|^2+\half(q_x^*(0,t)-q_x(0,t))\,.
	\label{e:c1def}
	\ee
	\ese
	\label{p:dirichlettimeevol}
\end{proposition}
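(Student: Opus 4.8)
The plan is to derive the time-evolution ODE for the Dirichlet eigenvalues directly from the characterization $\tilde y_{12}(L,t,\mu_n(t))=0$ of Proposition~\ref{p:dirzeros}, combined with the known time dependence of $\tilde Y(x,t,z)$. First I would differentiate the identity $\tilde y_{12}(L,t,\mu_n(t))=0$ implicitly in $t$, using the chain rule, to get
\be
\partialderiv{\tilde y_{12}}{t}(L,t,\mu_n) + \tilde y'_{12}(L,t,\mu_n)\,\partialderiv{\mu_n}{t} = 0\,,
\ee
where the prime denotes $\partial_z$. Solving for $\partial_t\mu_n$ reduces the problem to computing $\partial_t \tilde y_{12}(L,t,\mu_n)$ — i.e., the time derivative of the $(1,2)$ entry of the modified monodromy matrix $\tilde M(z;t)$, evaluated at the Dirichlet eigenvalue.

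To compute $\partial_t\tilde M$, I would use the second half of the Lax pair~\eqref{e:NLSLP2}: since $\tilde Y(x,t,z)=U\,Y(x,t,z)\,U^{-1}$ and $Y$ is the principal solution normalized at $x=0$, a standard computation (differentiating the Volterra equation, or using the compatibility of the Lax pair) gives
\be
\partialderiv{M}{t}(z,t) = T(L,t,z)\,M(z,t) - M(z,t)\,T(0,t,z)\,,
\ee
and hence $\partial_t\tilde M = \tilde T(L)\tilde M - \tilde M\,\tilde T(0)$ with $\tilde T = U T U^{-1}$. I would then extract the $(1,2)$ entry of this matrix identity and specialize to $z=\mu_n$. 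At that point Lemma~\ref{y2} is the key input: it tells us $\tilde y_{12}(L,\mu_n)=0$, that the second column $\tilde y_2(x,\mu_n)$ is a Bloch--Floquet solution with multiplier $\tilde y_{22}(L,\mu_n)=\rho^{\nu_n}(\mu_n)$, and that $\tilde y_{11}(L,\mu_n)=\rho^{-\nu_n}(\mu_n)$. Using $\tilde y_{12}(L,\mu_n)=0$ kills several terms, and the remaining combination of $\tilde T(L)$ and $\tilde T(0)$ entries, weighted by $\tilde y_{11}(L,\mu_n)-\tilde y_{22}(L,\mu_n)=\rho^{-\nu_n}(\mu_n)-\rho^{\nu_n}(\mu_n)$, collapses to the stated factor $\nu_n\big(\rho(\mu_n)-\rho^{-1}(\mu_n)\big)$ after noting $\rho^{-\nu_n}-\rho^{\nu_n}=\nu_n(\rho^{-1}-\rho)\cdot(-1)$ — wait, more carefully, $\rho^{-\nu_n}-\rho^{\nu_n}$ equals $\rho-\rho^{-1}$ when $\nu_n=-1$ and $\rho^{-1}-\rho$ when $\nu_n=1$, i.e. $-\nu_n(\rho-\rho^{-1})$; combined with the sign in the $\tilde T$ entries this yields the $\nu_n$ in the numerator. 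The scalar coefficient $c_1(z)$ should then emerge as the relevant entry of $\tilde T(L,t,z)$ minus $\tilde T(0,t,z)$, i.e., from the $H$-part of $T$ conjugated by $U$ — $H(x,t,z)=2zQ-\i\sigma_3(Q^2-Q_x)$ — evaluated using periodicity $Q(L,t)=Q(0,t)$ so that only the $-2\i z^2\sigma_3$ term and the boundary values at $x=0$ survive; pushing the $U$-conjugation through gives precisely $-2z^2+\i z(q^*(0,t)+q(0,t))+|q(0,t)|^2+\tfrac12(q_x^*(0,t)-q_x(0,t))$.

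The main obstacle I anticipate is the bookkeeping in the $U$-conjugation: one must carefully track how $T(x,t,z)=-2\i z^2\sigma_3+H(x,t,z)$ transforms under conjugation by $U$ from~\eqref{e:Udef}, identify which matrix entries of $\tilde T(L)\tilde M-\tilde M\tilde T(0)$ contribute to the $(1,2)$ component after imposing $\tilde y_{12}(L,\mu_n)=0$, and correctly account for the $L$-periodicity of $Q$ (so that $\tilde T(L,t,z)$ and $\tilde T(0,t,z)$ differ only through the explicit $z$-dependence, not through $Q$). A secondary subtlety is verifying that the formula holds uniformly in $n$ including for eigenvalues inside a band or at a degenerate edge, where $\nu_n$ may be $0$ — but there the right-hand side vanishes as it should, consistent with such eigenvalues being fixed. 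I would also need $\tilde y'_{12}(L,t,\mu_n)\neq 0$, which holds by Assumption~\ref{assump}(iii) (simplicity of the Dirichlet eigenvalues as zeros of $\tilde y_{12}(L,\cdot)$) away from the exceptional higher-multiplicity points.
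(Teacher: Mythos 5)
Your plan is correct and reaches the stated formula, but it follows a different route from the paper's. You work at the level of the $2\times2$ monodromy matrix: differentiate $\tilde y_{12}(L,t,\mu_n(t))=0$ implicitly, and compute $\partial_t\tilde M$ from the second half of the Lax pair via $\partial_t\tilde M = \tilde T(L)\tilde M - \tilde M\tilde T(0)$ (with $\tilde T = UTU^{-1}$), then extract the $(1,2)$ entry at $z=\mu_n$. The paper instead works at the level of the operator Lax equation $\tilde{\mathcal L}_t=[\tilde{\mathcal A},\tilde{\mathcal L}]$: it observes that $\tilde y_{2,t}-\tilde{\mathcal A}\tilde y_2$ again solves the modified ZS problem, writes it as $c_1\tilde y_1+c_2\tilde y_2$, determines $c_1,c_2$ by evaluating $-\tilde{\mathcal A}\tilde y_2$ at $x=0$, and recovers the implicit-differentiation relation $\tilde y_{12,t}(L,t,\mu_n)=-\mu_{n,t}\tilde y'_{12}(L,t,\mu_n)$ not by the chain rule directly but by differentiating the Hadamard infinite product~\eqref{y12} in $t$ and in $z$. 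The two routes are equivalent in content — your $(\tilde T_0)_{12}$ is the paper's $-c_1$, and the linear-combination step is the column version of $\tilde M_t=[\tilde T_0,\tilde M]$ — but your monodromy-matrix computation and direct chain rule are arguably more economical, whereas the paper's Hadamard-product derivation of the chain-rule identity also serves to set up machinery reused elsewhere. One small imprecision worth fixing in your write-up: you say $c_1$ "should emerge as the relevant entry of $\tilde T(L,t,z)$ minus $\tilde T(0,t,z)$," but by the $L$-periodicity of $Q$ and $Q_x$ these two matrices are \emph{equal}, so $\tilde M_t=[\tilde T_0,\tilde M]$ and $c_1=-(\tilde T_0)_{12}$ comes from the commutator structure after using $\tilde y_{12}(L,\mu_n)=0$, not from any difference of $T$'s. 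With that corrected, the sign bookkeeping $\tilde y_{11}-\tilde y_{22}=\rho^{-\nu_n}-\rho^{\nu_n}=-\nu_n(\rho-\rho^{-1})$ combined with $(\tilde T_0)_{12}=-c_1$ gives exactly the $\nu_n c_1(\rho-\rho^{-1})/\tilde y'_{12}$ of the statement, and your appeal to Assumption~\ref{assump}(iii) for $\tilde y'_{12}(L,t,\mu_n)\ne 0$ is the right way to justify the division.
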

\begin{proof}
	Recall that the Dirichlet eigenvalues are poles of the modified Bloch-Floquet solutions,
	which, in turn, are defined via~\eqref{e:Ytildedef}.
	The transformation~\eqref{e:Ytildedef} yields the modified Lax equation
    \vspace*{-0.4ex}
	\bse
	\begin{gather}
		\tilde{\L}_t=[\tilde{\mathcal{A}},\tilde{\L}]\,,
		\label{mlax}
		\\[-1ex]
		\intertext{with}
		\tilde{\L}= U\L U^{-1}\,,
		\qquad
		\tilde{\mathcal{A}} = U\mathcal{A}U^{-1}\,.
	\end{gather}
	\ese
	Recall that $\~y_2$ is a Bloch-Floquet solution of $\tilde\L$, 
	i.e., $\tilde{\L}\~y_2=z\~y_2$. 
	Differentiating both sides with respect to $t$, we have
	$\tilde{\L}_t\~y_2+\tilde{\L}\~y_{2,t}=z\~y_{2,t}$,
	which, 
	combined with the modified Lax equation \eqref{mlax} implies 
	\be
	\tilde{\L}(\~y_{2,t}-\tilde{\mathcal{A}}\~y_2)=z(\~y_{2,t}-\tilde{\mathcal{A}}\~y_2)\,.
	\ee
	Therefore, for fixed $t$ and $z$, there exist constants $c_1$ and $c_2$ such that
	\be
	\~y_{2,t}-\tilde{\mathcal{A}}\~y_2=c_1\~y_1+c_2\~y_2\,.\label{y2t}
	\ee
	The normalization condition of $\~Y(x,z)$ at $x=0$ gives that $\~y_{12,t}(0,t,z)=\~y_{22,t}(0,t,z)=0$, so we obtain
	\be
	(c_1,c_2)^T = -\tilde{\mathcal{A}}\~y_{2}(0,t,z)\,.
	\ee
	Moreover, the modified equation \eqref{e:modZS} also implies
	\be
	\~y_{2,xx}(x,t,z)=-\i zU\sigma_3U^{-1}\~y_{2,x}(x,t,z)+UQ(x,t)U^{-1}\~y_{2,x}(x,t,z)+UQ_x(x,t)U^{-1}\~y_2(x,t,z)\,,
	\ee
	and therefore
	\begin{align}
			\tilde{\mathcal{A}}\~y_2(x,t,z) 
			= -2\i z^2U\sigma_3U^{-1}\~y_2(x,t,z)+2zUQU^{-1}\~y_2(x,t,z)-\i U(Q^2+Q_x)\sigma_3U^{-1}\~y_2(x,t,z)\,.
		\label{Ay2}
	\end{align}
	Evaluation at $x=0$ yields
	\bse
	\label{c1c2}
	\begin{align}
		&c_1=-2z^2+\i z(q^*(0,t)+q(0,t))+|q(0,t)|^2+\half(q_x^*(0,t)-q_x(0,t))\,,\label{c1}\\
		&c_2=z(q(0,t)-q^*(0,t))+\frac{\i}{2}(q_x^*(0,t)-q_x(0,t))\,.\label{c2}
	\end{align}
	\ese
Inserting~\eqref{Ay2} and~\eqref{c1c2} into \eqref{y2t} and evaluating~\eqref{y2t} at $x=L$ and $z=\mu_n(t)$, we get 
	\be
		\~y_{12,t}(L,t,\mu_n(t)) = c_1(\mu_n(t))
		\big(\~y_{11}(L,t,\mu_n(t))-\~y_{22}(L,t,\mu_n(t))\big)\,.
	\ee
At the same time, differentiating the expansion \eqref{y12} in $t$ we have
\vspace*{-0.4ex}
\be
\~y_{12,t}(L,t,z)=\e^{A_1z+B_1}\sum_{n\in\mathbb{Z}}\left[\frac{z\mu_{n,t}(t)}{\mu_n^2(t)}\e^{\frac{z}{\mu_n(t)}}\frac{z}{\mu_n(t)}\prod_{j\neq n}\left(1-\frac{z}{\mu_j(t)}\right)\e^{\frac{z}{\mu_j(t)}}\right]\,,
\label{y12t}
\ee
whereas differentiating  \eqref{y12} in $z$ we have
\be
\~y'_{12}(L,t,z)=A_1\e^{A_1z+B_1}\prod_{j\in\mathbb{Z}}\left(1-\frac{z}{\mu_j(t)}\right)\e^{\frac{z}{\mu_j(t)}}+\e^{A_1z+B_1}\sum_{n\in\mathbb{Z}}\left[-\frac{z}{\mu_n^2(t)}\e^{\frac{z}{\mu_n(t)}}\prod_{j\neq n}\left(1-\frac{z}{\mu_j(t)}\right)\e^{\frac{z}{\mu_j(t)}}\right]\,.
\label{y12z}
\ee
Evaluating \eqref{y12t} and \eqref{y12z} at $z=\mu_n(t)$ gives $\~y_{12,t}(L,t,\mu_n(t))=-\mu_{n,t}(t)\~y'_{12}(L,t,\mu_{n}(t))$.
Therefore, we can express $\mu_{n,t}(t)$ as:
\be
\mu_{n,t}(t) = -\frac{\~y_{12,t}(L,t,\mu_n(t))}{\~y'_{12}(L,t,\mu_{n}(t))}
= \frac{c_1(\mu_{n}(t))(\~y_{22}(L,t,\mu_n(t))-\~y_{11}(L,t,\mu_n(t)))}{\~y'_{12}(L,t,\mu_{n}(t))}\,,
\nonumber
\ee
which yields \eqref{e:dmundt} upon substituting $c_1$ via~\eqref{e:c1def}.
\end{proof}

\begin{proposition}
	\label{asymofa}
	The time-dependent Bloch-Floquet solutions $\psi^\pm(x,t,z)$ satisfy the ODEs
	\be
	\psi^\pm_t(x,t,z)+\alpha^\pm(t,z)\psi^\pm(x,t,z)=\tilde{\mathcal{A}}\psi^\pm(x,t,z)
	\label{psit}
	\ee
	where 
	\vspace*{-1ex}
	\begin{align}
		\begin{split}
			\alpha^\pm(t,z)=&z(q(0,t)-q^*(0,t))+\frac{\i}{2}(q_x(0,t)+q^*_x(0,t))\\
			&+\Big(2z^2-\i z\big(q^*(0,t)+q(0,t)\big)-|q(0,t)|^2+\half q_x(0,t)-\half q_x^*(0,t)\Big)
			\psi^\pm_2(0,t,z)\,.
		\end{split}
		\label{e:alphapm}
	\end{align}
	As a result, as $z\rightarrow\infty$ we have 
	\be
	\label{e:alphapmasymp}
		\alpha^\pm= \begin{cases} \pm2\i z^2+O(1/z),\qquad z\in\Complex^+,\\
		\mp2\i z^2+O(1/z),\qquad z\in\Complex^-.
	\end{cases}
	\ee
	Furthermore, if $\nu_n(t)=1$ then $\alpha^+(t,z)$ has a simple pole at $\mu_{n}(t)$ with residue $-\mu_{n,t}(t)$ and if $\nu_{n}=-1$ then $\alpha^-(t,z)$ has a simple pole at $\mu_{n}(t)$ with residue $-\mu_{n,t}(t)$.
\end{proposition}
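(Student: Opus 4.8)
The plan is to mirror the proof of Proposition~\ref{p:dirichlettimeevol}. For fixed $t$, each $\psi^\pm(x,t,z)$ is a Bloch-Floquet solution of the modified problem, so $\tilde{\L}\psi^\pm=z\psi^\pm$ and $\psi^\pm(x+L,t,z)=\rho^{\pm1}(z)\psi^\pm(x,t,z)$. Differentiating the eigenvalue relation in $t$ and invoking the modified Lax equation~\eqref{mlax} gives, exactly as for $\~y_2$, that $\tilde{\L}\bigl(\psi^\pm_t-\tilde{\mathcal A}\psi^\pm\bigr)=z\bigl(\psi^\pm_t-\tilde{\mathcal A}\psi^\pm\bigr)$. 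Differentiating the Floquet relation in $t$ and using that $\rho(z)$ is time-independent (since $\Delta(z)$ is) together with the $L$-periodicity of the coefficients of $\tilde{\mathcal A}$ shows that $\psi^\pm_t-\tilde{\mathcal A}\psi^\pm$ again satisfies that same Floquet relation with multiplier $\rho^{\pm1}(z)$. For $z\notin\{\zeta_j\}$ the space of solutions of~\eqref{e:modZS} with prescribed Floquet multiplier $\rho^{\pm1}(z)$ is one-dimensional (because $\rho^{+1}\neq\rho^{-1}$ there), so there is a scalar $\alpha^\pm(t,z)$ with $\psi^\pm_t-\tilde{\mathcal A}\psi^\pm=-\alpha^\pm\psi^\pm$; this is~\eqref{psit}, and the relation then extends to all $z$ by meromorphy.

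To identify $\alpha^\pm$ explicitly, I would evaluate~\eqref{psit} at $x=0$. Since $\psi^\pm_1(0,t,z)\equiv1$ by the normalization in~\eqref{e:psipmdef}, one has $\psi^\pm_{1,t}(0,t,z)=0$, so the first component of~\eqref{psit} at $x=0$ yields $\alpha^\pm(t,z)=\bigl(\tilde{\mathcal A}\psi^\pm\bigr)_1(0,t,z)$. The identity~\eqref{Ay2} was derived using only that its argument solves the modified problem~\eqref{e:modZS}, so it holds verbatim for $\psi^\pm$: $\tilde{\mathcal A}\psi^\pm(0,t,z)=-2\i z^2U\sigma_3U^{-1}\psi^\pm(0,t,z)+2zUQ(0,t)U^{-1}\psi^\pm(0,t,z)-\i U\bigl(Q^2(0,t)+Q_x(0,t)\bigr)\sigma_3U^{-1}\psi^\pm(0,t,z)$. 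Inserting $\psi^\pm(0,t,z)=\bigl(1,\,\psi^\pm_2(0,t,z)\bigr)^T$ with $\psi^\pm_2(0,t,z)=(\rho^{\pm1}(z)-\~y_{11}(L,z))/\~y_{12}(L,z)$ from~\eqref{BF} (recall $\~Y(0,z)=I$) and extracting the first component with the explicit $U$ in~\eqref{e:Udef} — using that $U\sigma_3U^{-1}$ is off-diagonal and $Q^2(0,t)=-|q(0,t)|^2I$ — produces formula~\eqref{e:alphapm} after routine simplification.

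For the large-$z$ behavior I would plug the $z\to\infty$ expansion of $\psi^\pm_2(0,t,z)$ into~\eqref{e:alphapm}. That expansion follows from~\eqref{BF} at $x=0$ together with the asymptotics of $\Delta(z)$, $\~y_{11}(L,z)$ and $\~y_{12}(L,z)$ obtained from~\eqref{asymY} (carried one order beyond Lemma~\ref{l:asymY}); in $\Complex^+$ it reads $\psi^+_2(0,t,z)=\i-q(0,t)/z-\tfrac{\i}{2}\bigl(q^2(0,t)+q_x(0,t)\bigr)/z^2+O(1/z^3)$, consistently with the leading terms in Proposition~\ref{p:asympsi}. In~\eqref{e:alphapm} the $2z^2$-term then yields the leading $\pm2\i z^2$, while the $z^1$- and $z^0$-contributions cancel identically against the remaining terms of~\eqref{e:alphapm}, leaving $\alpha^\pm=\pm2\i z^2+O(1/z)$ in $\Complex^\pm$; the opposite sign in the other half-plane comes from the complementary branch of the asymptotics in Proposition~\ref{p:asympsi}. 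Tracking these cancellations — which is exactly what upgrades the remainder from $O(1)$ to $O(1/z)$, and which is why the $1/z^2$ term of $\psi^\pm_2(0,t,z)$ is needed — is the one step that demands genuine care; everything else is bookkeeping.

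Finally, for the pole structure I would observe that the only $z$-singular ingredient of~\eqref{e:alphapm} is $\psi^\pm_2(0,t,z)=(\rho^{\pm1}(z)-\~y_{11}(L,z))/\~y_{12}(L,z)$, and that the factor multiplying it in~\eqref{e:alphapm} is exactly $-c_1(z)$ with $c_1$ as in~\eqref{e:c1def}. If $\nu_n(t)=1$, then $\mu_n\notin\{\zeta_j\}$, so $\~y_{12}(L,z)$ has a simple zero at $\mu_n$ by Assumption~\ref{assump}(iii), while Lemma~\ref{y2} and Definition~\ref{nudef} give $\~y_{11}(L,\mu_n)=\rho^{-1}(\mu_n)$; hence $\psi^+_2(0,t,z)$ has a simple pole at $\mu_n(t)$ with residue $(\rho(\mu_n)-\rho^{-1}(\mu_n))/\~y'_{12}(L,t,\mu_n)$, whereas $\psi^-_2(0,t,z)$ has a removable singularity there. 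Consequently $\alpha^+$ has a simple pole at $\mu_n(t)$ with residue $-c_1(\mu_n)(\rho(\mu_n)-\rho^{-1}(\mu_n))/\~y'_{12}(L,t,\mu_n)$, which by Proposition~\ref{p:dirichlettimeevol} with $\nu_n=1$ equals $-\mu_{n,t}(t)$, while $\alpha^-$ is regular at $\mu_n(t)$. The case $\nu_n(t)=-1$ is identical after using $\~y_{11}(L,\mu_n)=\rho(\mu_n)$, hence $\rho^{-1}(\mu_n)-\~y_{11}(L,\mu_n)=-(\rho(\mu_n)-\rho^{-1}(\mu_n))$, and noting that Proposition~\ref{p:dirichlettimeevol} supplies the compensating sign through the factor $\nu_n$.
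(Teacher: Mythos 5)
Your proof is correct and arrives at the same conclusions as the paper, with the explicit identification of $\alpha^\pm$ from $\alpha^\pm=(\tilde{\mathcal{A}}\psi^\pm)_1(0,t,z)$ following the paper verbatim. You do, however, take a genuinely different route at the key step of establishing \eqref{psit}. The paper writes $\psi^\pm_t-\tilde{\mathcal{A}}\psi^\pm=-\alpha^\pm\psi^\pm-\beta^\pm\psi^\mp$, factors $\psi^\pm=p^\pm\rho^{\pm x/L}$, and sends $x\to\pm\infty$ using $|\rho(z)|<1$ off $\Sigma(\L)$ to conclude $\beta^\pm\equiv0$. You instead observe that since $\rho(z)$ is time-independent and the coefficients of $\tilde{\mathcal{A}}$ are $L$-periodic, $\psi^\pm_t-\tilde{\mathcal{A}}\psi^\pm$ inherits the Floquet multiplier $\rho^{\pm1}(z)$; for $z\notin\{\zeta_j\}$ (so that $\rho\neq\rho^{-1}$) the corresponding Bloch eigenspace is one-dimensional, forcing proportionality to $\psi^\pm$, and you extend by meromorphy. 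Both arguments are sound: the paper's avoids invoking multiplicity-one of the Bloch eigenspace but needs boundedness of $p^\pm$ as $x\to\pm\infty$, whereas yours avoids the factorization entirely and is arguably the cleaner of the two. A bonus of your write-up is that you actually carry out the $O(1)$ cancellations that yield \eqref{e:alphapmasymp} (correctly noting one must track $\psi^\pm_2(0,t,z)$ to $O(1/z^2)$, not just the leading term from Proposition~\ref{p:asympsi}) and verify the residue claim by recognizing the coefficient of $\psi^\pm_2$ in \eqref{e:alphapm} as $-c_1(z)$ and matching against Proposition~\ref{p:dirichlettimeevol}, including the sign flip from $\nu_n$; the paper states both facts but omits these computations.
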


\begin{proof}
	For each $t$, we can factor the normalized Bloch-Floquet solutions $\psi^\pm(x,t,z)$  as
	\be
	\psi^+(x,t,z)=p^+(x,t,z)\rho^{x/L}(z),\qquad 
	\psi^-(x,t,z)=p^-(x,t,z)\rho^{-x/L}(z)\,.
	\ee
	Differentiating $(\tilde{\L}-z)\psi^+=0$ with respect to $t$ and using the modified Lax equation imply that
	\be
	\tilde{\L}_t\psi^++\tilde{\L}\psi^+_t-z\psi^+_t=[\tilde{\mathcal{A}},\tilde{\L}]\psi^++\tilde{\L}\psi^+_t-z\psi_t^+=(\tilde{\L}-z)(\psi^+_t-\tilde{\mathcal{A}}\psi^+)=0.
	\ee
	So $\psi_t^+-\tilde{\mathcal{A}}\psi^+$ solves~\eqref{e:modZS} for all $t>0$. 
	Therefore, it can be written as a linear combination of $\psi^\pm$:
	\be
	p^+_t\rho^{x/L}(z) - \tilde{\mathcal{A}}\,p^+\rho^{x/L}(z) = 
	- \alpha^+(t,z)p^+\rho^{x/L}(z) - \beta^+(t,z)p^-\rho^{-x/L}(z),
	\ee
	where $\alpha^+(t,z)$ and $\beta^+(t,z)$ are some undetermined functions independent of $x$. 
	Recall that $|\rho(z)|<1~\forall z\in\mathbb{C}\setminus\Sigma(\L)$.
	Evaluating the LHS and RHS of the above equation as $x\to\infty$,
	every term decays exponentially except the last one.
	Therefore, in order for the equation to hold, one needs $\beta^+(t,z)=0$. 
	Hence $\psi^+$ solves~\eqref{psit} with the plus sign.
	An analogous argument, but now taking the limit as $x\to-\infty$, 
	shows that $\psi^-$ solves~\eqref{psit} with the minus sign, for some $\alpha^-(t,z)$. 
	Next we derive~\eqref{e:alphapm} and~\eqref{e:alphapmasymp}. 
	To do so, we evaluate the first component of~\eqref{psit} at $x=0$.
	Since $\psi^\pm_1(0,t,z)=1$ for all time, we have $\psi^\pm_{1,t}(0,t,z)=0$. 
	Therefore \eqref{psit} yields
	\be
	\alpha^\pm(t,z) = (\~{\mathcal{A}}\psi^\pm)_1(0,t,z)\,.
	\label{e:alphapmeq}
	\ee
	From \eqref{Ay2} we have that
	\be
	\tilde{\mathcal{A}}v = -2\i z^2U\sigma_3U^{-1}v+2zUQU^{-1}v-\i U(Q^2+Q_x)\sigma_3U^{-1}v\,.
	\ee
	Explicitly, the first component of the above equation is
	\be
	(\tilde{\mathcal{A}}v)_1 = 2z^2\,v_2 + z\,\big[(-q^* + q)\,v_1 - \i(q^*+q)v_2 \big]
	\\
	+ \txtfrac \i2 \big[ (q^*_x + q_x)\,v_1 + \i(2|q|^2 + q^*_x - q_x ) v_2 \big]\,.
	\label{e:alphapmeq2}
	\ee
	Letting $v(x,t,z) = \psi^\pm(x,t,z)$, evaluating \eqref{e:alphapmeq2} at $x=0$ 
	recalling that~\eqref{BF} evaluated at $x=0$ yields
	$\psi^\pm_1(0,t,z)=1$ and 
	$\psi^\pm_2(0,t,z) = (\rho^{\pm1}-\~y_{11}(L,t,z))/\~y_{12}(L,t,z)$,
	\eqref{e:alphapmeq} then yields~\eqref{e:alphapm}.
\end{proof}

Next, we will introduce an auxiliary function $e^\pm(t,z)$ to set up the time-dependent Bloch-Floquet solutions.
Let $e^\pm(t,z)$ be solutions of the differential equation  
$e^\pm_t(t,z)=\alpha^\pm(t,z)e^\pm(t,z)$ with the initial condition $e^\pm(0,z)=1$,
i.e.,
\vspace*{-1ex}
\be
e^\pm(t,z)=\exp\left(\int_0^t\alpha^\pm(\tau,z)\d\tau\right)\,.
\ee
The functions $e^\pm(t,z)$ satisfy the following properties, the proof of which is essentially identical to that in the defocusing case, and can be found in \cite{BiondiniZhang2023}:

\begin{proposition}
	\label{p:e}
	The functions $e^\pm(t,z)$ satisfy the following properties:
	\vspace*{-1ex}
	\begin{enumerate}
		\advance\itemsep-2pt
		\item 
		$e^\pm(t,z)$ are meromorphic functions in $\mathbb{C}\setminus\Sigma(\mathcal{L})$.
		\item 
		$e^+(t,z)$ has simple poles on $\mathring\mu_{k}(0)$ as $\nu_{k}(0)=1$and simple zeros on $\mathring\mu_{k}(t)$ as $\nu_{k}(t)=1$;
		$e^-(t,z)$ has simple poles on $\mathring\mu_{k}(0)$ as $\nu_{k}(0)=-1$ and simple zeros on $\mathring\mu_{k}(t)$ as $\nu_{k}(t)=-1$. 
		\item
		$e^\pm(t,z)$ both have square root singularities at $\mathring\mu_{k}(0)$ when $\nu_{k}(0)=0$ and square root zeros at $\mathring\mu_{k}(t)$ when $\nu_{k}(t)=0$.
		\item 
		The boundary values $e_\pm^\pm$ of $e^\pm$ satisfy $e^\pm_+(t,z)=e^\mp_-(t,z)$ for $z\in\Sigma(\mathcal{L})$.
		\item 
		For fixed $t$, $e^\pm$ have the following asymptotic behaviors 
		as $z\rightarrow\infty$:
		\vspace*{-0.6ex}
		\be
		e^\pm(t,z)=\begin{cases} \e^{\pm2\i z^2t}(1+O(1/z)),& z\in\Complex^+\\
			\e^{\mp2\i z^2t}(1+O(1/z)),& z\in\Complex^-\end{cases}\,.
		\ee
		\vspace*{-2ex}
		\item 
		There exist positive constants $c$ and $M$ such that $|e^\pm(t,z)|\leq M\e^{c|z|^2}$.
	\end{enumerate}
\end{proposition}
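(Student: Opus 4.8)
The plan is to reduce every one of the six items to properties of $\alpha^\pm(t,z)$ that were already established in Proposition~\ref{asymofa} — namely its meromorphy in $\Complex\setminus\Sigma(\L)$, the location and residues of its poles at the Dirichlet eigenvalues, the jump it inherits across $\Sigma(\L)$, and the asymptotics~\eqref{e:alphapmasymp} — and then to read off the corresponding statement for $e^\pm(t,z)=\exp\big(\int_0^t\alpha^\pm(\tau,z)\,\d\tau\big)$. This is precisely the strategy used for the analogous result in the defocusing case \cite{BiondiniZhang2023}, and the only genuinely new input is the (mild) possibility that a Dirichlet eigenvalue lies in the interior of a band.

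First I would handle Items~1--3 together. Fix $z_0\notin\Sigma(\L)$. If $z_0$ avoids $\mu_n(\tau)$ for all $n$ and all $\tau\in[0,t]$, then $\alpha^\pm(\tau,z)$ is holomorphic in $z$ in a fixed neighborhood of $z_0$, uniformly in $\tau$, so $e^\pm(t,z)$ is holomorphic near $z_0$; here one uses Theorem~\ref{spectraprop} together with continuity of the flow~\eqref{e:dmundt} to ensure that only finitely many Dirichlet eigenvalues visit a bounded region over the compact interval $[0,t]$. The remaining points are treated by the key observation that, when $\nu_n=\pm1$, Proposition~\ref{asymofa} gives $\alpha^\pm(\tau,z)=-\mu_{n,\tau}(\tau)/(z-\mu_n(\tau))+(\text{holomorphic in }z)$ near $z=\mu_n(\tau)$, and since $\partial_\tau\log\big(z-\mu_n(\tau)\big)=-\mu_{n,\tau}(\tau)/(z-\mu_n(\tau))$, integrating in $\tau$ turns this simple pole into $\log\big(z-\mu_n(t)\big)-\log\big(z-\mu_n(0)\big)$ plus a locally holomorphic remainder. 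Exponentiating then shows that $e^\pm(t,z)$ has a simple zero at $\mu_n(t)$ exactly when $\nu_n(t)=\pm1$ and a simple pole at $\mu_n(0)$ exactly when $\nu_n(0)=\pm1$, which is Items~1--2 at once. Item~3 follows from the same computation with a factor $\half$ in front of the logarithm: when $\nu_n=0$ the point $\mu_n$ sits at a square-root branch point of $\rho$ (a point of $\{E_k\}$, a degenerate-gap endpoint, or the band-interior case discussed at the end of section~\ref{s:inverse}), so $\alpha^\pm$ carries half the residue and $e^\pm(t,z)$ acquires a square-root zero at $\mu_n(t)$ and a square-root singularity at $\mu_n(0)$.

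Next I would treat Items~4--6. For Item~4, the jump relation $\Psi_+=\Psi_-\sigma_1$ on $\Sigma(\L)\setminus\Real$ (from the proof of Theorem~\ref{t:direct}) interchanges $\psi^+\leftrightarrow\psi^-$ across the cut, and by~\eqref{e:alphapm} this gives $\alpha^+_+(\tau,z)=\alpha^-_-(\tau,z)$ and $\alpha^-_+(\tau,z)=\alpha^+_-(\tau,z)$ for $z\in\Sigma(\L)$; integrating in $\tau$ and exponentiating yields $e^\pm_+(t,z)=e^\mp_-(t,z)$. For Item~5, \eqref{e:alphapmasymp} gives $\alpha^\pm(\tau,z)=\pm2\i z^2+O(1/z)$ on $\Complex^\pm$ uniformly in $\tau\in[0,t]$, so $\int_0^t\alpha^\pm(\tau,z)\,\d\tau=\pm2\i z^2t+O(1/z)$ and the stated asymptotics follow upon exponentiation (with the sign flip on $\Complex^-$). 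For Item~6, away from the moving poles one has $|\Re\alpha^\pm(\tau,z)|\le C|z|^2$ uniformly in $\tau$ by~\eqref{e:alphapmasymp}, while near each $\mu_n(\tau)$ the logarithmic nature of the $\tau$-integral established above shows that $e^\pm$ behaves like a bounded power of $(z-\mu_n(t))$ or of $1/(z-\mu_n(0))$ rather than like an exponential; combining these gives $|e^\pm(t,z)|\le M\e^{c|z|^2}$ on $\Complex$.

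I expect the main obstacle to be the uniformity bookkeeping underlying Items~1, 3 and~6, rather than any conceptual difficulty. One must verify that the $\tau$-integral of $\alpha^\pm$ produces only a $\log$-type (and not worse) singularity near each moving pole, which requires uniform control in $(\tau,z)$ of the holomorphic remainder $\alpha^\pm(\tau,z)+\mu_{n,\tau}(\tau)/(z-\mu_n(\tau))$; and one must check that a Dirichlet eigenvalue cannot develop an anomalous singularity when its value of $\nu_n$ changes — in particular when it crosses a band edge, where $\rho-\rho^{-1}$ vanishes in~\eqref{e:dmundt} — which is exactly what the $\nu_n=0$ analysis in Item~3 is designed to control. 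Once these uniformity statements are in hand, the argument is essentially identical to the one in \cite{BiondiniZhang2023}, and I would simply refer to it for the remaining routine estimates.
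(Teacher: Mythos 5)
Your strategy — reducing each item to the corresponding property of $\alpha^\pm$ established in Proposition~\ref{asymofa} and then exponentiating the $\tau$-integral — is exactly the right one, and it is what the paper intends by deferring to the defocusing case in \cite{BiondiniZhang2023}, where the analogous proposition is proved this way.

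One small imprecision is worth flagging in your treatment of Item~3. You say that when $\nu_n=0$ ``$\alpha^\pm$ carries half the residue,'' but at a band edge $E_k$ with $\nu_n(t)=0$ the function $\alpha^\pm(t,\cdot)$ does not have a simple pole at all: from $\psi^\pm_2(0,t,z)=(\rho^{\pm1}(z)-\~y_{11}(L,t,z))/\~y_{12}(L,t,z)$, with $\~y_{12}(L,t,z)\sim c_1(z-E_k)$ and $\rho^{\pm1}(z)-\~y_{11}(L,t,z)\sim\mp c_4(z-E_k)^{1/2}$, one finds $\alpha^\pm(t,z)\sim C(z-E_k)^{-1/2}$, an inverse square-root singularity rather than a pole. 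The cleanest way to reach the conclusion you want is to pass to the local uniformizing coordinate $\zeta=(z-E_k)^{1/2}$, in which $\rho$ is single-valued, the Dirichlet dynamics is smooth through the band edge, and $\alpha^\pm$ has a genuine simple pole at $\tilde\mu_n(\tau)=(\mu_n(\tau)-E_k)^{1/2}$ with residue governed by $\tilde\mu_{n,\tau}$. Then your $\tau$-integration argument applies verbatim in $\zeta$, producing a simple zero/pole in $\zeta$ at $\tilde\mu_n(t)$, $\tilde\mu_n(0)$, which read off in the original variable $z$ are precisely the square-root zero at $\mu_n(t)$ and square-root singularity at $\mu_n(0)$ claimed in Item~3. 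This same change of variable also handles the case you flagged of a Dirichlet eigenvalue crossing a band edge during $[0,t]$, where $\nu_n$ flips sign: the pole simply moves from $\alpha^+$ to $\alpha^-$ (or vice versa) in $z$, but in $\zeta$ the accounting is continuous and nothing anomalous happens. With that adjustment the rest of your argument — Items 1, 2, 4, 5, 6 — is correct as written.
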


Thanks to the properties of $e^\pm(t,z)$, we proceed to introduce the time-dependent solutions and the time-dependent jump matrix for the RH problem.
Define
\be\label{checkpsi}
\check{\psi}^\pm(x,t,z)=\psi^\pm(x,t,z)e^\pm(t,z)\,.
\ee
Then $\check{\psi}^\pm(x,t,z)$ satisfy the system of equations $\~\L\check\psi^\pm = z\check\psi^\pm$ and 
$\check{\psi}^\pm_t=\tilde{\mathcal{A}}\check{\psi}^\pm$,
for which the modified Lax equation~\eqref{mlax} serves as  the compatibility condition. 
Note that the factors $e^\pm(t,z)$ in Eq.~\eqref{checkpsi} have the effect of shifting the singularities of $\check \psi^\pm(x,t,z)$ from those of
$\psi^\pm(x,t,z)$ to those of the Bloch-Floquet eigenfunctions at time zero, $\psi^\pm(x,0,z)$.

\begin{definition}
	Let $\check{V}(x,t,z)$ be defined by
	\be
	\check{V}(x,t,z) =\begin{cases}
			(-1)^{\mathcal{M}_n(z)+1}\begin{pmatrix} {f^-}/{f^+} & 0\\ 0 & {f^+}/{f^-} \end{pmatrix},
			& z\in\Omega\,,\\
			(-1)^{\mathcal{M}_n(z)}\begin{pmatrix} {f^-}/{f^+} & 0\\ 0 & {f^+}/{f^-} \end{pmatrix},
			&z\in\Real\setminus\Omega\,,\\
			(-1)^{\mathcal{M}_n^+(z)}(-\i)^k\e^{-\i( zx+2z^2t)\hat\sigma_3}
			\begin{pmatrix}	0 & {f^+}/{f^-}\\ {f^-}/{f^+} & 0\end{pmatrix}, & 
            z\in \Gamma_{n,k}^+,\quad $k$~\textrm{odd},\\
			(-1)^{\mathcal{M}_n^+(z)}(-\i)^k, & 
            z\in \Gamma_{n,k}^+,\quad $k$~\textrm{even},\\
			(-1)^{\mathcal{M}_n^-(z)}\i^k\e^{-\i( zx+2z^2t)\hat\sigma_3}
			\begin{pmatrix}	0 & {f^-}/{f^+} \\ {f^+}/{f^-} & 0 \end{pmatrix}, &  
            z\in\Gamma_{n,k}^-,\quad $k$~\textrm{odd},\\
			(-1)^{\mathcal{M}_n^-(z)}\i^k, & z\in \Gamma_{n,k}^-,\quad $k$~\textrm{even},
		\end{cases}	
	\label{e:Vxtz}
	\ee
	with $\mathcal{M}_n(z)$ and $\mathcal{M}_n^\pm(z)$ still the counting functions \eqref{mnplus}-\eqref{mn},
	and let the matrix-valued time-dependent Bloch-Floquet solution $\check\Psi(x,t,z)$ be defined as in~\eqref{e:defPsi}
	but with $\psi_j^\pm$ replaced by $\check\psi_j^\pm$, as defined in \eqref{checkpsi}.
	Finally, let $\check{\Phi}(x,t,z)$ be given by
	\be
	\label{Phicheck}
        \check{\Phi}(x,t,z)= (2U)^{-1}\, \check{\Psi}(x,t,z)B(z)\e^{\i zx\sigma_3+2\i z^2t\sigma_3}\,.
\ee
\end{definition}

\begin{theorem}
	\label{qtoPhi}
	Let $q(x,t)$ be the solution to the focusing NLS equation~\eqref{e:nls}
	with smooth initial data $q(x,0)=q_0(x)$, 
	and let $\Sigma(q_0)$ denote the spectral data corresponding to the Dirac operator associated with $q_0(x)$.
	Then, there exists a unique solution $\check{\Phi}$ to the following Riemann-Hilbert problem:
	\begin{RHP}
		\label{RHP2}
		Find a $2\times2$ matrix-valued function $\check{\Phi}(x,t,z)$ such that
		\vspace*{-1ex}
		\begin{enumerate}
			\advance\itemsep-4pt
			\item 
			$\check{\Phi}(x,t,z)$ is a holomorphic function of $z$ for $z\in\C\setminus\~\Gamma$.
			\item 
			The non-tangential limits $\check{\Phi}_\pm(x,t,z)$ are continuous functions of $z$ in $\begingroup\~\Gamma\endgroup\setminus\{E_k\}$, and have at worst quartic root singularities on $\{E_k\}$.
			\item 
			$\check{\Phi}_\pm(x,t,z)$ satisfy the jump relation $\check{\Phi}_+(x,t,z)=\check{\Phi}_-(x,t,z)\check{V}(x,t,z)$, with $\check{V}(x,t,z)$ given by \eqref{e:Vxtz}.
			\item 
			As $z\to\infty$ with $z\in\Complex$, $\check{\Phi}(x,t,z)$ has the following asymptotic behavior 
			\be
			\check{\Phi}(x,t,z) = \,(I+O(1/z))\,B(z)\,.  \label{asymcheckPhi}
			\ee
			\item 
			There exist positive constants $c$ and $M$ such that $|\check{\phi}_{ij}(x,t,z)|\leq M\e^{c|z|^2}$ for all $z\in \mathcal{D}$.
		\end{enumerate}
	\end{RHP}
\end{theorem}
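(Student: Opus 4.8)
The plan is to reduce the time-dependent problem to the time-independent one already solved in Theorems~\ref{t:direct} and~\ref{unique}, using the singularity-shifting properties of the factors $e^\pm(t,z)$ collected in Proposition~\ref{p:e} (which in turn rest on Propositions~\ref{p:dirichlettimeevol} and~\ref{asymofa}). Since $q(x,t)$ is assumed to solve~\eqref{e:nls} with smooth initial datum $q_0$, the modified Bloch--Floquet eigenfunctions $\psi^\pm(x,t,z)$ are well defined for each $t$, hence so are $\check\psi^\pm$, $\check\Psi$ and $\check\Phi$ via~\eqref{checkpsi} and~\eqref{Phicheck}; the work is to verify that this $\check\Phi$ satisfies items~1--5 of RHP~\ref{RHP2}, after which uniqueness is inherited.

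\emph{Items 1 and 2 (analyticity and edge singularities).} The only possible singularities of $\check\Phi$ off $\~\Gamma$ sit at the Dirichlet eigenvalues and the band edges $\{E_k\}$. Using the pole structure of $\psi^\pm(x,t,z)$ at the Dirichlet eigenvalues described in Section~\ref{s:asymptotics} together with Proposition~\ref{p:e}(2)--(3), one checks that at each movable Dirichlet eigenvalue $\mucirc_k(t)$ the zero (when $\nu_k(t)=\pm1$) or square-root zero (when $\nu_k(t)=0$) of $e^\pm(t,z)$ cancels the corresponding singularity of $\psi^\pm$, so $\check\psi^\pm$ is regular there. The factor $e^\pm(t,z)$ instead introduces poles or square-root branch points at the \emph{time-zero} locations $\mucirc_k(0)$, and these are exactly cancelled by the zeros of $f^\pm$, resp.\ $(f^0)^{1/2}$, inside $B(z)$, since by Proposition~\ref{p:spectralproducts} $B(z)$ is constructed from the time-zero spectral data $S(q_0)$. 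The behavior at the $E_k$ --- continuity of $\check\Phi_\pm$ off $\{E_k\}$ and at worst quartic-root singularities on $\{E_k\}$ --- is identical to the corresponding part of the proof of Theorem~\ref{t:direct}, because $\Delta(z)$, and hence $\Sigma(\L)$ and the $E_k$, are time-independent, and $e^\pm(t,z)$ produces nothing worse than a square root on $\Sigma(\L)$.

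\emph{Items 3--5 (jump, asymptotics, growth).} Writing $\check\Psi_\pm$ as $\Psi_\pm$ times the diagonal matrix formed from the boundary values of $e^\pm(t,z)$, and combining $\Psi_+=\Psi_-\sigma_1$ on $\Sigma(\L)\setminus\R$ and $\Psi_+=\Psi_-$ on $\R$ (established in the proof of Theorem~\ref{t:direct}) with $e^\pm_+(t,z)=e^\mp_-(t,z)$ from Proposition~\ref{p:e}(4), one finds that on each arc $\Gamma_{n,k}^\pm$ the jump of $\check\Phi$ is that of $\Phi$ with the conjugating factor $\e^{-2\i z^2t\hat\sigma_3}$ inserted, while on $\R$ no new factor appears; this turns~\eqref{e:Vexplicit} into~\eqref{e:Vxtz} via $zx\mapsto zx+2z^2t$. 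Item~4 follows by combining the $z\to\infty$ behavior of $\Psi(x,z)$ used in the proof of Theorem~\ref{t:direct} with Proposition~\ref{p:e}(5): the latter supplies precisely the extra factor $\e^{-2\i z^2t\sigma_3}$ in the leading term of $\check\Psi$, which is cancelled by the $\e^{2\i z^2t\sigma_3}$ in~\eqref{Phicheck}, so the computation reduces to the $t=0$ one and gives~\eqref{asymcheckPhi}. Item~5 is immediate from Proposition~\ref{p:e}(6), the $\e^{c|z|^2}$ bound already established for $\Phi$, and the polynomial growth of $B(z)$.

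\emph{Uniqueness and main obstacle.} Finally $\det\check\Phi\equiv1$ by the analogue of Lemma~\ref{equiv1}, and uniqueness follows by the same argument as in the proof of Theorem~\ref{unique}, the only change being the innocuous replacement $zx\mapsto zx+2z^2t$ in the exponential prefactors, which does not affect the $\e^{c|z|^2}$ control. I expect the hard part to be the singularity bookkeeping in items~1--2: one must simultaneously track the moving eigenvalues $\mucirc_k(t)$, the fixed time-zero data inside $B(z)$, and the poles and zeros of $e^\pm(t,z)$, handling in particular the configurations in which $\nu_k(t)$ changes with $t$ (a Dirichlet eigenvalue crossing a band edge, or migrating between a gap and the interior of a band). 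The definition of $e^\pm(t,z)$ in Proposition~\ref{p:e}, which carefully separates $\nu_k(0)$ from $\nu_k(t)$, is engineered precisely so that all these cancellations go through; verifying them in every case is where the real effort lies.
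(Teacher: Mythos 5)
Your proposal correctly reconstructs the argument the paper leaves implicit: after Propositions~\ref{p:dirichlettimeevol}, \ref{asymofa} and~\ref{p:e} and the definitions of $\check\psi^\pm$, $\check V$, $\check\Phi$, the paper simply asserts Theorem~\ref{qtoPhi} (with uniqueness and reconstruction ``as before''), and the intended verification is exactly the one you spell out: the $e^\pm(t,z)$ factors transfer the Dirichlet singularities of $\psi^\pm(x,t,\cdot)$ from $\mucirc_k(t)$ to $\mucirc_k(0)$, where they are absorbed by the time-zero $B(z)$, while the jump, asymptotics, growth bound and uniqueness carry over from Theorem~\ref{t:direct} and Theorem~\ref{unique} via $e^\pm_+=e^\mp_-$ and $e^\pm\sim\e^{\pm 2\i z^2 t}$, which amounts to the substitution $zx\mapsto zx+2z^2t$. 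Your flag that the delicate bookkeeping is the case analysis when $\nu_k(t)$ changes along the flow is apt, and is precisely the content that the paper defers to Proposition~\ref{p:e} (and, for its proof, to the defocusing reference~\cite{BiondiniZhang2023}).
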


For fixed $x,t\in\mathbb{R}$, the solution of  RHP \ref{RHP2} is unique.
Moreover, the matrix $Q(x,t)$ associated with the solution $q(x,t)$ of the NLS equation~\eqref{e:nls}
is given in terms of the solution $\check{\Phi}(x,t,z)$ of RHP \ref{RHP2} by 
\be
Q(x,t) = \lim_{z\to\infty}\i z [\sigma_3,\check{\Phi}(x,t,z)B^{-1}(z)]\,.
\label{trecons}
\ee

\section{Concluding remarks}
\label{s:conclusions}

In summary, we have presented the formulation of the direct and inverse spectral theory for a non-self-adjoint Dirac operator via a Riemann-Hilbert approach that remains valid in the infinite genus case, and we showed how the formalism can be used to solve the initial value problem for the focusing NLS equation.

Compared to the solution representation obtained via the trace formulae (e.g., see \cite{McLaughlinOverman}), 
the present formalism allows one to recover the solution by specifying the value of a single set of Dirichlet eigenvalues
(as opposed to two) at a single base point.
Also, compared to both the trace formulae and the solution representation obtained via the algebra-geometric approach (e.g., see \cite{BBEIM}),
the present formalism allows for the rigorous analytical treatment of infinite-genus potentials. 

We believe that the results of this work will lead to several applications in the future, including but not limited to
the efficient numerical calculation of finite-genus solutions with large genus
(using similar methods as in \cite{BilmanNabelekTrogdon} for the KdV equation), 
and the rigorous theory of soliton gases for the focusing NLS equation.

One immediate avenue for further study will be to establish a precise connection between the RHP in the present work for the finite-genus case and the RHP for the finite-genus solutions in (e.g., see \cite{KotlyarovShepelsky}).
We believe that a useful tool in this regard will be a transformation 
to move the poles of associated with the Dirichlet eigenvalues to the endpoints of the spectral bands,
similarly to what was done in \cite{BilmanNabelekTrogdon} for the KdV equation.

Another interesting avenue for future work will be a further extension of the formalism to remove some of the conditions in Assumption~\ref{assump}.
For example, recall that, in Assumption~\ref{assump}, we asked that all $\Gamma_n$ intersect the real axis. 
Next we briefly discuss the modifications that one would need to make in the formalism to remove that assumption.
There are three separate situations that need to be considered:

1. The first case that needs to be discussed is that of an arc $\Gamma_o$ located entirely in either the upper or lower half plane and which does not intersect either with the real axis nor any other $\Gamma_n$. 
In this case, the branch cuts for $\rho$ will be defined in the same way along the spectral bands, 
while the square root and fourth root branch cuts needed for $B(z)$ in \eqref{B} need to be modified. 
However, there is no guarantee that there will be the correct number of branch points along this $\Gamma_o$ to allow these square root and fourth root branch cuts to be entirely contained on $\Gamma_o$. 
To resolve this issue, one can run the branch cuts from any point on the leftmost spectral band on $\Gamma_o$ to the real axis and along the real axis to the point at infinity. 
A segment would need to be added to the the jump matrix $V(x,z)$ \eqref{e:Vexplicit} to account for these cuts, its behavior similar to when $k$ is even. Note that due to symmetry and the fact that bands not on the real axis must have a finite length, there must be an even number of fourth order branch points in the upper half plane and the same number in the lower half plane and the jump across the real axis due to the fourth order branch cuts will not be affected. 
The square root branch cuts brought about by Dirichlet eigenvalues on the band edges (which lack symmetry) can affect the jump across along the real axis, but that is easily handled by a counting function similar to $\mathcal{M}_n^{\pm}$ in~\eqref{mnplus} and $\mathcal{M}_n$ in~\eqref{mn}. 
The only required change is that $\mathcal{M}_n^{\pm}$ follows the cut from the real axis to the leftmost point of the left most spectral band and follows $\Gamma_o$ right to the rightmost of the right most spectral band.

2. The second case that needs to be discussed is that when an $\Gamma_{n_1}$ intersects another arc $\Gamma_{n_2}$ in such a way that no spectral bands intersect each other. 
In this case, the two intersecting arcs can be treated as two arcs that do not intersect any other $\Gamma_n$, 
and the modification described above can be used to define square roots, fourth roots and calculate the required jumps.

3. The third and final case to be discussed is that when an arc $\Gamma_{n_1}$ intersects another arc $\Gamma_{n_2}$ in such a way that spectral bands do intersect each other, resulting in a curved ``cross'' figure in the complex plane. 
For obvious reasons, this can only happen when one of these two arcs does not also intersect the real axis.
Let $\Gamma_{n_1}$ be this arc.
In this case, any other spectral bands along the two intersecting arcs can be treated as above, 
while the cross configuration is treated as a part of $\Gamma_{n_2}$. 
In this case, we define the terms ``below" and ``above" such that ``below'' denotes any locations along the branch cut for that 
$\Gamma_{n_2}$ between the real axis and the intersection point of the cross, while ``above'' denotes any locations on the other side of the intersection point of the cross. 
With the terms above and below defined as such, 
the jump of any quantity above the cross due to fourth order branch cuts is unchanged, 
while the jump of any quantity below the cross due to fourth order branch cuts is changed by a factor of $-1$.  
The jump on the arms of the cross due to fourth order branch cuts is $i$ in the lower half plane and $-i$ in the upper half plane. 
The modifications required for the counting functions $\mathcal{M}_n^{\pm}$ are conceptually simple. 
Every cross adds two band edges that a Dirichlet eigenvalue could reside on, which would result in square root branch cuts. These branch cuts would follow the arms of the cross to the intersected $\Gamma_n$ and move down to the real axis along which it would run to infinity. 
The jump along segment this branch cut runs would simply need to be changed by a factor of $-1$.

Finally, we also hope that this new formalism will make it possible to establish a more direct limiting connection between the problem on the line and that with periodic boundary conditions, 
as well as to use rigorous techniques of asymptotic analysis such as the Deift-Zhou method \cite{DeiftZhou1991} to 
compute semiclassical limits and advance our understanding of the behavior of solutions
with periodic boundary conditions.
We hope that the present work and the above discussion will spur further work on these and related topics.

\appendix
\section*{Appendix}
\setcounter{section}1
\setcounter{subsection}0
\setcounter{equation}0
\setcounter{figure}0
\def\thesection{\Alph{section}}
\def\theequation{\Alph{section}.\arabic{equation}}
\def\thetheorem{\Alph{section}.\arabic{theorem}}
\def\thefigure{\Alph{section}.\arabic{figure}}
\addcontentsline{toc}{section}{Appendix}

\subsection{Basic properties of the spectrum}
\label{a:spectrum}

The spectrum of the non-self-adjoint Dirac operator~\eqref{e:Diracoperator} is considerably more complicated than in the self-adjoint case.
Here we prove some basic properties that are used in the rest of the work.
The following proposition is straightforward:

\begin{proposition}
The fundamental matrix solution $Y(x,z)$ and modified fundamental matrix solution $\~Y(x,z)$ satisfy respectively the following symmetries:
\bse
\begin{gather}
Y^*(x,z^*)=\sigma_2 Y(x,z)\sigma_2,
\label{e:symY}
\\
-\sigma\~Y^*(x,z^*)\sigma= \~Y(x,z), 
\label{e:symY~}
\end{gather}
\ese
where 
\be
\sigma=\begin{pmatrix}0&1\\-1&0\end{pmatrix}\,.
\ee
\end{proposition}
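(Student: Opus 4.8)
The plan is to prove~\eqref{e:symY} first, directly from the ODE characterization of $Y(x,z)$, and then to deduce~\eqref{e:symY~} by conjugating~\eqref{e:symY} with $U$.

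For the first symmetry, recall that $Y(x,z)$ is the unique solution of $Y_x=X(x,z)Y$ with $X(x,z)=-\i z\sigma_3+Q(x)$ and $Y(0,z)=\mathbb{I}$ (equivalently, of the Volterra equation~\eqref{phi}). Taking complex conjugates and replacing $z$ by $z^*$ shows that $Y^*(x,z^*)$ solves $\partial_x Y^*(x,z^*)=X^*(x,z^*)\,Y^*(x,z^*)$ with $Y^*(0,z^*)=\mathbb{I}$, where $X^*(x,z^*)=\i z\sigma_3+Q^*(x)$. On the other hand, since $\sigma_2^2=\mathbb{I}$, the matrix $\sigma_2 Y(x,z)\sigma_2$ satisfies $\partial_x[\sigma_2 Y(x,z)\sigma_2]=(\sigma_2 X(x,z)\sigma_2)\,(\sigma_2 Y(x,z)\sigma_2)$ and equals $\mathbb{I}$ at $x=0$. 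The only computation needed is then the algebraic identity $\sigma_2 X(x,z)\sigma_2=X^*(x,z^*)$, which follows from $\sigma_2\sigma_3\sigma_2=-\sigma_3$ together with the direct check $\sigma_2 Q(x)\sigma_2=Q^*(x)$; here it is essential that the $(2,1)$ entry of $Q$ in~\eqref{e:Q} is $-q^*(x)$, i.e., this is precisely the focusing structure of the potential. Since $Y^*(x,z^*)$ and $\sigma_2 Y(x,z)\sigma_2$ thus solve the same linear initial value problem, uniqueness gives~\eqref{e:symY}.

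For the second symmetry I would transform~\eqref{e:symY} through~\eqref{e:Ytildedef}. Writing $\~Y^*(x,z^*)=U^*Y^*(x,z^*)(U^{-1})^*=U^*\sigma_2\,Y(x,z)\,\sigma_2(U^{-1})^*$, a short computation with the explicit matrices $U$ in~\eqref{e:Udef} and $U^{-1}=\sqrt2\,\~U$ from~\eqref{e:Utildedef} yields $\sigma U^*\sigma_2=U$ and $\sigma_2(U^{-1})^*\sigma=-U^{-1}$, whence $-\sigma\,\~Y^*(x,z^*)\,\sigma=-(\sigma U^*\sigma_2)\,Y(x,z)\,(\sigma_2(U^{-1})^*\sigma)=U\,Y(x,z)\,U^{-1}=\~Y(x,z)$, which is~\eqref{e:symY~}. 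Alternatively, one may bypass these identities and rerun the ODE argument directly for $\~Y(x,z)$, using that it solves~\eqref{e:modZS} and checking that conjugation by $\sigma$ maps the coefficient matrix of~\eqref{e:modZS}, complex-conjugated and with $z\mapsto z^*$, to minus itself.

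I do not anticipate any genuine obstacle: the proposition is an immediate consequence of uniqueness for linear ODEs together with two routine Pauli-matrix identities. The only place requiring a little care is the sign and conjugation bookkeeping — in particular, noting that it is $\sigma_2$ (not $\sigma$) that implements the symmetry of the untransformed $Y$, and tracking the minus sign in $\sigma_2(U^{-1})^*\sigma=-U^{-1}$ when passing to $\~Y$.
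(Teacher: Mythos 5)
Your proof is correct and follows essentially the same route as the paper: once \eqref{e:symY} is established, \eqref{e:symY~} is obtained by conjugating through the similarity transformation~\eqref{e:Ytildedef}, exactly as you do (the paper writes this as $\sigma_2 Y^*(z^*)\sigma_2=\sigma_2 U^{-1*}\~Y^*(z^*)U^*\sigma_2=Y(z)=U^{-1}\~Y(z)U$, which is the same manipulation rearranged). The only difference is that the paper simply cites Ma--Ablowitz for \eqref{e:symY}, whereas you supply the standard ODE-uniqueness argument together with the identity $\sigma_2X(x,z)\sigma_2=X^*(x,z^*)$; your computations, including $\sigma U^*\sigma_2=U$ and $\sigma_2(U^{-1})^*\sigma=-U^{-1}$, check out.
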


In particular, \eqref{e:symY} follows from \cite{MaAblowitz}.
Since the fundamental matrix solution satisfies \eqref{e:symY}, 
using the definition~\eqref{e:Ytildedef} 
	of $\~Y$, we have
$	\sigma_2 Y^*(z^*)\sigma_2 = \sigma_2 U^{-1*}\~Y^*(z^*)U^*\sigma_2 = Y(z)=U^{-1}\~Y(z)U$,
which yields \eqref{e:symY~}.

\begin{proposition}
	For all $z\in\Complex$, the Floquet discriminant $\Delta(z)$ satisfies the Schwarz symmetry: $\Delta^*(z^*)=\Delta(z)$. As a result, for all $z\in\Complex/\Real$, the Floquet multiplier $\rho(z)$ satisfies $\rho^*(z^*)=\rho(z)$, the square root $r_{1/2}$ satisfies $r_{1/2}^*(z^*) = r_{1/2}(z)$ and $\z_k$ arise in complex conjugate pairs. 
\end{proposition}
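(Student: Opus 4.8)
The plan is to reduce everything to the Schwarz symmetry of the discriminant. First I would take the trace of the symmetry relation~\eqref{e:symY} at $x=L$: since $\sigma_2^2$ is the identity and the trace is cyclic, $\bigl(\tr M(z^*)\bigr)^*=\tr M(z)$, which by~\eqref{e:DeltaM} is precisely $\Delta^*(z^*)=\Delta(z)$ for all $z\in\Complex$. Two consequences follow at once. Conjugating $\Delta^2(\z_k)=1$ and invoking this symmetry gives $\Delta^2(\z_k^*)=1$, so the real main eigenvalues are self-conjugate and the non-real ones occur in conjugate pairs. Applying the same reasoning to the defining condition $\Delta(z)\in[-1,1]$ in~\eqref{e:laxspec2} shows that the Lax spectrum is Schwarz-symmetric, $\Sigma(\L)^*=\Sigma(\L)$; this last fact I will need below.

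For the Floquet multiplier, I would fix $z\notin\Sigma(\L)$. Conjugating the quadratic~\eqref{tracepoly} satisfied by $\rho(z^*)$ and substituting $\Delta^*(z^*)=\Delta(z)$ shows that $\rho^*(z^*)$ solves $\rho^2-2\Delta(z)\rho+1=0$, so $\rho^*(z^*)\in\{\rho(z),\rho(z)^{-1}\}$. To select the correct root I would use that $z^*\notin\Sigma(\L)$, so by the normalization of $\rho$ one has $|\rho(z^*)|<1$, whence $|\rho^*(z^*)|=|\rho(z^*)|<1$; since $z\notin\Sigma(\L)$ forces $|\rho(z)|\ne1$, the root of modulus less than $1$ is unique, and therefore $\rho^*(z^*)=\rho(z)$, the identity extending to $z\in\Sigma(\L)\setminus\Real$ by taking boundary values. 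For the square root I would argue by uniqueness: the function $z\mapsto\bigl(r_{1/2}(z^*)\bigr)^*$ is an analytic square root of $\Delta^2(z)-1$ off $\Sigma(\L)$ (using $\bigl(\Delta^*(z^*)\bigr)^2=\Delta^2(z)$ and $\Sigma(\L)^*=\Sigma(\L)$), it is asymptotic to $\bigl(\Delta(z^*)\bigr)^*=\Delta(z)$ as $|\Im z|\to\infty$, and it inherits the prescribed one-sided continuity conventions; hence it satisfies the three conditions that characterize $r_{1/2}$ uniquely, so it must equal $r_{1/2}$, i.e. $r_{1/2}^*(z^*)=r_{1/2}(z)$.

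The conjugation algebra is routine; the delicate part is the two ``which branch'' steps. For $\rho$ one must confirm that $\rho^*(z^*)$ is the root of modulus $<1$ rather than its reciprocal, which is exactly where Schwarz symmetry of $\Sigma(\L)$ is used. For $r_{1/2}$ one must verify the third defining condition for $z\mapsto\bigl(r_{1/2}(z^*)\bigr)^*$ — in particular that the reflection $z\mapsto z^*$ is compatible with the chosen orientation of the arcs $\Gamma_n$, so that ``continuous from the left'' at $z$ corresponds to ``continuous from the left'' at $z^*$ rather than ``from the right'', and that ``continuous from above'' along $\Real$ is likewise preserved. Once the orientations are fixed this is only a finite check, but it is the one point where the geometry of the spectrum, rather than algebra, enters.
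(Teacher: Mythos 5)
Your proof is essentially the same in spirit as the paper's, but noticeably more careful in the second half. The paper establishes $\Delta^*(z^*)=\Delta(z)$ by taking the trace of the symmetry relation~\eqref{e:symY~} (for $\~Y$) and simply asserts that this ``directly gives the symmetry for $\rho$''; you instead take the trace of~\eqref{e:symY} (for $Y$), which is completely equivalent since the trace is insensitive to the $U$-conjugation relating $Y$ to $\~Y$, and since both $\sigma_2$-conjugation and $\sigma$-conjugation preserve the trace. Where you genuinely add value is in the branch-selection arguments: the paper does not spell out why $\rho^*(z^*)$ lands on $\rho(z)$ rather than its reciprocal, nor does it discuss $r_{1/2}$ at all, whereas you correctly invoke the Schwarz symmetry of $\Sigma(\L)$ together with the normalization $|\rho(z)|<1$ on $\Complex\setminus\Sigma(\L)$ to single out the right root, and you characterize $r_{1/2}$ through conditions (i)--(ii) plus an asymptotic match.

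One caution on the last point, which you flag but do not settle: for $z\notin\Sigma(\L)$ the map $z\mapsto(r_{1/2}(z^*))^*$ is uniquely pinned down by the branch-cut location and the asymptotics in condition~(ii), so your argument is complete there and condition~(iii) never enters. It is precisely on $\Sigma(\L)\setminus\Real$, where $r_{1/2}$ only exists as a one-sided boundary value, that conjugation interacts nontrivially with the orientation convention: since $z\mapsto z^*$ sends the left side of an oriented arc $\Gamma_n$ to the right side of the reflected arc, the ``continuous from the left'' convention is \emph{not} automatically inherited, and one should expect $(r_{1/2})_-^*(z^*)=(r_{1/2})_+(z)$ rather than an identity between like-sided boundary values. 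So if the proposition is read as an assertion about boundary values on the non-real part of the cut, the claimed identity needs either a sign correction or a re-reading of which one-sided limit is intended. You noticed this subtlety -- which is more than the paper's own proof does -- but your phrase ``inherits the prescribed one-sided continuity conventions'' slides past it rather than resolving it. For the statement restricted to $z\in\Complex\setminus(\Real\cup\Sigma(\L))$, which is what is actually used downstream, your argument is airtight.
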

\begin{proof}
	From \eqref{e:symY~}, we have that $\Delta^*(z^*)=\half \tr\~Y^*(z^*)=\half \tr(-\sigma\~Y(z)\sigma)=\half(\~y_{11}+\~y_{22})=\Delta(z)$, which directly gives the symmetry for $\rho(z)$. Additionally, since  $\Delta^2(\z_k)-1=\Delta^{*2}(\z_k^*)-1$, $\z_k$ appear in complex conjugate pairs. 
\end{proof}

\begin{proposition}
	If $\zeta_j$ is real, then it is at least a double root of $\Delta^2-1=0$. Moreover, for real $z$, $\Delta(z)$ is real and $-1\leq\Delta(z)\leq1$. In other words, there are no gaps on the real axis.
\end{proposition}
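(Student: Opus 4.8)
The plan is to establish the ``moreover'' part first and then deduce the assertion about double roots. That $\Delta(z)$ is real for real $z$ is immediate from the Schwarz symmetry $\Delta^*(z^*)=\Delta(z)$ proved above: taking $z=x\in\Real$ gives $\Delta(x)^*=\Delta(x)$.

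For the bound $-1\le\Delta(x)\le1$, I would exploit the fact that, in the focusing case, the potential matrix~\eqref{e:Q} satisfies $Q^\dagger(x)=-Q(x)$, so that for real $z$ the coefficient matrix $X(x,z)=-\i z\sigma_3+Q(x)$ in~\eqref{e:zs} is skew-Hermitian. Hence $\frac{\d}{\d x}\big(Y^\dagger Y\big)=Y^\dagger\big(X^\dagger+X\big)Y=0$, and since $Y(0,z)=\I$ this shows $Y(x,z)$ is unitary for all $x$ and all real $z$. In particular the monodromy matrix $M(z)=Y(L,z)$ is unitary, and together with $\det M(z)\equiv1$ (Abel's formula) it lies in $SU(2)$; its eigenvalues, the Floquet multipliers $\rho_{1,2}(z)$, therefore lie on the unit circle and form a complex-conjugate pair. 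Consequently $2\Delta(x)=\tr M(x)=\rho_1(x)+\overline{\rho_1(x)}=2\Re\rho_1(x)\in[-2,2]$, i.e.\ $\Delta(x)\in[-1,1]$. In view of the characterization~\eqref{e:laxspec2}, this is precisely the statement that $\Real\subset\Sigma(\L)$ and that there are no gaps on the real axis.

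For the first assertion, suppose $\zeta_j\in\Real$. Then $\Delta^2(\zeta_j)=1$ and, by reality, $\Delta(\zeta_j)=\pm1$; assume $\Delta(\zeta_j)=1$, the other case being identical. Since $\Delta(x)\le1$ for all real $x$ while $\Delta(\zeta_j)=1$, the smooth real-valued function $x\mapsto\Delta(x)$ attains a maximum over $\Real$ at $x=\zeta_j$, so its derivative there vanishes; as $\Delta$ is entire with real restriction to $\Real$, this real derivative coincides with the complex derivative, hence $\Delta'(\zeta_j)=0$. Therefore $\big(\Delta^2-1\big)'(\zeta_j)=2\Delta(\zeta_j)\Delta'(\zeta_j)=0$ as well, so $\zeta_j$ is a zero of $\Delta^2-1$ of order at least two.

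The only substantive step, and the one genuinely specific to the focusing (non-self-adjoint) setting, is the unitarity of $Y(x,z)$ on the real axis, which rests on the skew-Hermiticity $Q^\dagger=-Q$; everything else is elementary. Note that the analogous statement fails for the self-adjoint Dirac operator, where $Q$ is Hermitian rather than skew-Hermitian, which is precisely why gaps may open on the real axis in that case.
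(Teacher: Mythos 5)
Your proof is correct, and it arrives at the result by a genuinely different route from the paper. For the bound $-1\le\Delta(z)\le1$ on $\Real$, you observe that $X(x,z)=-\i z\sigma_3+Q(x)$ is skew-Hermitian for real $z$ (since $Q^\dagger=-Q$), deduce that $Y(x,z)$ is unitary, and hence that $M(z)\in SU(2)$ with eigenvalues on the unit circle; the paper instead writes out the special block form of $M(z)$ for real $z$ (Eq.~\eqref{symmetry}) and uses $\det M=|M_{11}|^2+|M_{12}|^2=1$ to bound $|M_{11}|$. These are two presentations of the same structural fact, but your derivation from skew-Hermiticity of the coefficient matrix is a bit more conceptual and makes transparent \emph{why} the $SU(2)$ structure is there. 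For the double-root statement, your argument is cleaner and more classical: since $\Delta$ is real and confined to $[-1,1]$ on $\Real$ and attains the boundary value $\pm1$ at a real $\zeta_j$, the point is an interior extremum of $\Delta|_\Real$, so $\Delta'(\zeta_j)=0$ and $\zeta_j$ is a zero of $\Delta^2-1$ of order at least two. The paper proves $\Delta'(\zeta_j)=0$ differently, by differentiating the Wronskian identity $|y_{11}(L,z)|^2+|y_{21}(L,z)|^2=1$ in $z$ after first showing that at a real main eigenvalue all the entries except $\Re y_{11}$ vanish. Your calculus argument is arguably more transparent, but it logically depends on having the bound $-1\le\Delta\le1$ in hand first, whereas the paper's Wronskian computation is self-contained; in exchange, your ordering of the proof unifies the two claims under a single structural observation (unitarity). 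One small bookkeeping point: you are invoking $\Delta(x)\le 1$ globally to conclude that $\zeta_j$ is a maximum; this is fine precisely because you established that bound for \emph{all} real $x$ in the previous paragraph, so the dependency is sound.
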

\begin{proof}
    Here we just need to prove that when $\zeta_j\in\Real$, $\Delta'(z)|_{z=\zeta_j}=0$.
    Denoting the columns of $Y(x,z)$ as $y_j(x,z)$ for $j=1,2$ as before and 
    expanding $y_1(x+L,z)$ in terms of $y_1(x,z)$ and $y_2(x,z)$, we have 
	\be
	y_1(x+L,z)=a y_1(x,z)+b y_2(x,z).
	\ee
	Evaluating $x$ at $0$, we obtain
	\be
	a(0,z)=y_{11}(L,z),\quad b(0,z)=y_{21}(L,z).
	\ee
	and from Wronskian relation, we have
	\be\label{Wr}
	a(0,z) a^*(0,z^*)+b(0,z) b^*(0,z^*)=1.
	\ee
	At $z=\zeta_j$, using the symmetry $y_{11}(x,z)=y_{22}^*(x,z^*)$, \eqref{Wr} becomes 
	\be
	y_{11,\re}^2(L,\zeta_j)+y_{11,\im}^2(L,\zeta_j)+y_{12,\re}^2(L,\zeta_j)+y_{12,\im}^2(L,\zeta_j)=1.
	\ee
	Moreover, at $z=\zeta_j,$ the Floquet discriminant $\Delta(\zeta_j)=\half\tr Y(L,\zeta_j)=y_{11,\re}(L,\zeta_j)$, and $\Delta^2(\zeta_j)-1=0$ yields $y_{11,\re}^2(L,\zeta_j)=1$. These conditions imply  $y_{11,\im}(L,\zeta_j)=y_{12,\re}(L,\zeta_j)=y_{12,\im}(L,\zeta_j)=0$. Consequently, the differentiation of \eqref{Wr} with respect to $z$ indicates that $\frac{\d\Delta}{\d z}|_{z=\zeta_j}=0$.

	Next, we prove that the entire real axis is part of the spectrum. Consider the symmetry of the monodromy matrix $M(z)$, 
	\be
    \label{symmetry}
	M(z)=\begin{pmatrix}
		M_{11}& M_{12}\\
		-M_{12}^* & M_{11}^*
		\end{pmatrix}\,,\qquad z\in\Real\,.
	\ee
	For real $z$, we always have that $\Delta(z)$ is real. Moreover, we have $1=\det M=|M_{11}|^2+|M_{12}|^2$, which implies $|M_{11}|\leq1$. Therefore, $4|\Delta|^2=|M_{11}+M_{11}^*|^2\leq4|M_{11}|^2\leq 4$. This indicates that there are no gaps on the real axis.
\end{proof}

\begin{proposition}
Under Assumptions~\ref{assump}, there are no real branch points for $(\Delta^2-1)^{1/2}$. 
\end{proposition}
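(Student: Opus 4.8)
The plan is to reduce the statement to the elementary fact that the branch points of an entire square root are precisely the zeros of odd multiplicity of its radicand, and then to invoke the two unconditional results proved earlier in this appendix — that $\Delta$ is real on the real axis and that there are no gaps on the real axis — to rule out odd-order real zeros of $\Delta^2-1$.

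First I would record the following characterization. Since $\Delta^2-1$ is entire, a point $z_0\in\Complex$ is a branch point of $r_{1/2}(z)=(\Delta^2(z)-1)^{1/2}$ if and only if $\Delta^2-1$ vanishes at $z_0$ to odd order: if the order is even, say $2k$, one may write $\Delta^2(z)-1=(z-z_0)^{2k}g(z)$ with $g$ holomorphic and $g(z_0)\neq 0$, so on a small disc about $z_0$ one has $r_{1/2}(z)=(z-z_0)^k\,g^{1/2}(z)$ with $g^{1/2}$ holomorphic, and $z_0$ is not a branch point; conversely a zero of odd order forces genuine multivaluedness. Under Assumption~\ref{assump} the branch points of $r_{1/2}$ are therefore exactly the non-degenerate band edges $E_j$ (simple zeros of $\Delta^2-1$), while the degenerate band edges $\hat\zeta_j$ (double zeros) are regular points of $r_{1/2}$; so the claim is equivalent to saying that no $E_j$ is real.

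Next I would show that every real zero of $\Delta^2-1$ has even multiplicity (this is the content of the remark following Assumption~\ref{assump}). Suppose $z_0\in\Real$ with $\Delta(z_0)^2=1$, say $\Delta(z_0)=1$ (the case $\Delta(z_0)=-1$ is identical). Since $\Delta+1$ is holomorphic and nonzero near $z_0$, the order of $z_0$ as a zero of $\Delta^2-1$ equals its order as a zero of $\Delta-1$. But $\Delta$ is entire and real-valued on a real neighbourhood of $z_0$, so its Taylor coefficients at $z_0$ are real; if that order were odd, $\Delta(z)-1$ would change sign across $z_0$, producing real points arbitrarily close to $z_0$ with $\Delta(z)>1$, contradicting the absence of gaps on the real axis. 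Hence the order is even. Combining with the first step, no real point is an odd-order zero of $\Delta^2-1$, hence none is a branch point of $r_{1/2}$.

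The step I expect to require the most care is not the substance but the bookkeeping around the contour structure: one must note that the corner points $\varpi_n=\Gamma_n\cap\Real$ — where the branch cut $\Sigma(\L)$ of $r_{1/2}$, viewed as a curve, fails to be smooth — are \emph{not} branch points in the analytic sense, because there $\Delta^2-1$ either does not vanish at all or, if it does (being then a real main eigenvalue), vanishes to even order by the argument above; across such a corner $r_{1/2}$ merely continues analytically along the real axis. Thus no real point contributes a branch point, which completes the proof.
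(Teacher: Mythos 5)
Your proof is correct, and it is essentially the argument that the paper intends: the paper states this proposition without a separate proof precisely because it follows from the preceding proposition (no gaps on the real axis, $\Delta$ real on $\Real$) together with the parenthetical remark after Assumption~\ref{assump} that real main eigenvalues are even-order zeros of $\Delta^2-1$ "since the real axis is one infinitely long band.'' Your two reductions --- that branch points of an entire square root are exactly the odd-order zeros of the radicand, and that the sign-change argument combined with $|\Delta|\le 1$ on $\Real$ forces even order at every real zero --- make that implicit chain of reasoning explicit and are both sound. The only cosmetic point worth tightening in the final paragraph: it is cleaner to distinguish "branch \emph{point}'' from "branch \emph{cut}''. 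The paper's convention places a branch cut along the entire real axis (since $\Delta^2-1\le 0$ there, $r_{1/2}$ takes conjugate imaginary boundary values from $\C^\pm$), so $r_{1/2}$ is genuinely discontinuous across $\Real$; what your argument shows is that this discontinuity is a choice of cut, not forced by a singularity, because the radicand has no odd-order real zero. Phrasing it as "$r_{1/2}$ merely continues analytically along the real axis'' risks being read as a claim about continuation across the cut, which would be false; what you mean, and what is true, is that no real point obstructs local single-valuedness of the germ of $(\Delta^2-1)^{1/2}$.
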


\begin{proposition}\label{RealZeta}
	Each real main eigenvalue $\zeta_j$ is also a Dirichlet eigenvalue. 
\end{proposition}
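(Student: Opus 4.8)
The plan is to reduce the claim to Proposition~\ref{p:dirzeros}, which identifies $\Sigma_{\Dir}(0)$ with the zero set of $\~y_{12}(L,z)$, by showing that at a real main eigenvalue the monodromy matrix degenerates to a scalar multiple of the identity. The key input is already at hand: in the preceding proposition on real main eigenvalues it was shown --- from the Wronskian identity, together with $\Delta(\zeta_j)=y_{11,\re}(L,\zeta_j)$ and $\Delta^2(\zeta_j)-1=0$ --- that for real $\zeta_j$ one has $y_{12}(L,\zeta_j)=0$ and $y_{11}(L,\zeta_j)\in\{1,-1\}$. Combining this with the symmetry~\eqref{symmetry} of $M(z)$ on the real axis, which yields $M_{21}(\zeta_j)=-M_{12}^*(\zeta_j)=0$ and $M_{22}(\zeta_j)=M_{11}^*(\zeta_j)=M_{11}(\zeta_j)$, we obtain $M(\zeta_j)=Y(L,\zeta_j)=\Delta(\zeta_j)\,\mathbb I=\pm\mathbb I$.

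With this in hand the conclusion follows in two equivalent ways. On the one hand, since $\~M(z)=U\,M(z)\,U^{-1}$ by~\eqref{e:Ytildedef} and $\pm\mathbb I$ is central, $\~Y(L,\zeta_j)=\~M(\zeta_j)=\pm\mathbb I$, so $\~y_{12}(L,\zeta_j)=0$ and hence $\zeta_j\in\Sigma_{\Dir}(0)$ by Proposition~\ref{p:dirzeros}. On the other hand, one can exhibit the Dirichlet eigenfunction directly: the function $v(x):=Y(x,\zeta_j)\,(1,-1)^T$ solves~\eqref{e:zs} (equivalently $\mathcal L v=\zeta_j v$) with $v(0)=(1,-1)^T$, and then $v(L)=M(\zeta_j)\,(1,-1)^T=\pm(1,-1)^T$; thus $v_1(0)+v_2(0)=v_1(L)+v_2(L)=0$, so $v$ is a nontrivial $H^1$ solution satisfying $\mathrm{BC}_{\Dir,0}$, i.e.\ $\zeta_j\in\Sigma_{\Dir}(0)$.

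I do not expect any genuine obstacle: all the substance lies in the previously established fact that a real main eigenvalue forces $M(\zeta_j)=\pm\mathbb I$, which in turn reflects that the whole real axis is a single spectral band (so every real band edge is at once a periodic or antiperiodic point and a collapsed gap). The one point worth flagging is that the argument is tied to the base point $x_o=0$ used throughout in the spectral data~\eqref{e:spectraldata}: for another base point $x_o$ the monodromy $\~M(z;x_o)$ need not reduce to $\pm\mathbb I$ at $\zeta_j$, so a real main eigenvalue need not be a Dirichlet eigenvalue with respect to every base point, consistently with Definition~\ref{def:dirichleteigenvalue}.
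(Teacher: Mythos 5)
Your proof is correct and rests on the same substance as the paper's: the Schwarz‐type symmetry of the monodromy on the real axis together with $\det = 1$ forces the off-diagonal entry to vanish at a real main eigenvalue. The one stylistic difference is that you route through the unmodified monodromy $M$, reusing the intermediate facts $y_{12}(L,\zeta_j)=0$ and $y_{11}(L,\zeta_j)=\pm1$ from the preceding proposition and then conjugating by $U$, whereas the paper works directly with $\~M$ and its own symmetry $\~M_{22}=\~M_{11}^*$ and determinant identity $|\~M_{11}|^2+|\~M_{12}|^2=1$ to squeeze $|\~M_{11}|=1$ and $\~M_{12}=0$; the content is the same. Your second route, exhibiting the Dirichlet eigenfunction $v(x)=Y(x,\zeta_j)(1,-1)^T$ and checking the boundary conditions~\eqref{e:Dirbcs} directly, is a nice elementary alternative that bypasses Proposition~\ref{p:dirzeros}, and your closing remark about base-point dependence is accurate.
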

\begin{proof}
	For a real eigenvalue $\zeta$, $|\Delta(\z)|$ can be written as
	\be\label{e:A.51}
	|\Delta(\z)|=\half\left|\~M_{11}(\z)+\~M_{22}(
	\z)\right|=\half\left|\~M_{11}(\z)+\~M_{11}^*(
	\z)\right|=|\Re \~M_{11}(\z)|=1\,,
	\ee
    indicating that $|\~M_{11}(\z)|\geq 1$.
	Moreover, since $\~M(z)$ has a unit determinant, we have
	\be\label{e:A.52}
	1=\det\~M=\~M_{11}\~M_{11}^*+\~M_{12}\~M_{12}^*=|\~M_{11}|^2+|\~M_{12}|^2,
	\ee
    which implies $|\~M_{11}(\z)|\leq 1$.
    Therefore, $|\~M_{11}(\z)|=1$, and $|\~M_{12}(\z)|=0$. implying that $\z$ is a Dirichlet eigenvalue.
\end{proof}

Proposition~\ref{RealZeta} shows that all real main eigenvalues have Dirichlet eigenvalues locked to them. 
Note however that the converse of Proposition~\ref{RealZeta} does not hold.  Namely, 
there can exist real Dirichlet eigenvalue that do not coincide with main eigenvalues.

\subsection{Alternative sets of Dirichlet eigenvalues and trace formulae}
\label{a:alternative}

Similar to our previous work \cite{BiondiniZhang2023}, only one set of Dirichlet eigenvalues suffices to reconstruct the potential  $q(x)$.
We now show that an alternative set of Dirichlet eigenvalues can also be used for inverse spectral problem.

As introduced in~\cite{MaAblowitz,McLaughlinOverman}, this second set is defined analogously to \eqref{e:dirichlet} but with modified boundary conditions:
\be
v_1(x_0)+\i v_2(x_0)=v_1(x_0+L)+\i v_2(x_0+L)=0.
\label{e:DirichletBC2}
\ee
Let $\{\mu_j(x_o)\}_{j\in\mathbb{Z}}$ and $\{\check\mu_j(x_o)\}_{j\in\mathbb{Z}}$ 
denote the Dirichlet eigenvalues corresponding to \eqref{e:Dirbcs} and \eqref{e:DirichletBC2}, respectively. 
Then (e.g., see \cite{MaAblowitz,McLaughlinOverman})
\bse
\label{e:trace}
\begin{gather}
q(x) + q^*(x) = \i \sum_{j\in\mathbb{Z}} (\z_{2j} + \z_{2j+1} - 2\mu_j(x))\,,
\label{e:trace1}
\\
q(x) - q^*(x) =  \sum_{j\in\mathbb{Z}} (\z_{2j} + \z_{2j+1} - 2\check\mu_j(x))\,.
\label{e:trace2}
\end{gather}
\ese

To use the second set of Dirichlet eigenvalues to construct a Riemann-Hilbert problem, define a modified similarity transformation:
\be
\check{Y}=\check{U}Y\check{U}^{-1}\,,\qquad
\check{U}=U\e^{-\frac{\i\pi}{4}\sigma_3}\,.
\ee
The modified solutions $\check{Y}$ satisfy
$\check{Y}_x = \check{U}(-\i z\sigma_3 + Q)\check{U}^{-1}\,\check{Y}$,
and the Floquet discriminant and multipliers remain unchanged: $\check{\Delta}=\Delta$, and $\check{\rho}=\rho$. Hence, the main spectrum $\{\zeta_j\}$ preserved.
On the other hand, the alternative Dirichlet spectrum consists of zeros of $\check{y}_{12}(L,\check{\mu})$,
and satisfies all the properties in Theorem~\ref{spectraprop}. We define the auxiliary spectral data:
\be
\check{S}(q):=\{{E}_{2k-1},{E}_{2k},\check{\mucirc}_k,\check{\nu}_k\}_{k=\g_-,\dots,\g_+}
\ee
with $\check{\mucirc}_k$ and $\check{\nu}_k$ defined analogously to Definitions \ref{def:dirichleteigenvalue} and \ref{nudef}. 
By standard Bloch-Floquet theory, one obtains modified Bloch-Floquet solutions $\check{\psi}^\pm$ using $\check{y}_{ij}$, leading to a new RHP 
$\check{\Phi}$ with asymptotics
\be
\check\Phi(x,z) = \check{U}\,(I+O(1/z))\,\check{B}(z)\,,
\ee
where $\check{B}(z)$ is defined in \eqref{B} with $\mu_n$ replaced by $\check{\mu}_n$. Following the same procedure as in Sections~\ref{s:asymptotics}--\ref{s:time}, this RHP yields a unique solution, 
from which the potential can be reconstructed.

In closing this section, it might be helpful to clarify that, in many other works in the literature, yet a different definition is used
e.g., see~\cite{gesztesyweikard_acta1998,gesztesyweikard_bams1998,AdvMath2023},
and the Dirichlet eigenvalues are defined as those values $z\in\C$ for which the solutions of the Zakharov-Shabat problem satisfy the boundary conditions
\vspace*{-1ex}
\be
v_1(x_o) = v_1(x_o + L) = 0\,.
\label{e:GesztesyDirEvals}
\ee
The reason why this is relevant is that, 
for finite-genus solutions, 
the number of movable Dirichlet eigenvalues defined according to the BCs~\eqref{e:GesztesyDirEvals}
is exactly equal to the genus,
as shown in \cite{gesztesyweikard_acta1998,gesztesyweikard_bams1998},
In contrast, the number of movable Dirichlet eigenvalues defined according to 
either the BCs~\eqref{e:Dirbcs} equals twice the genus, 
as is the number for those defined according to~\eqref{e:DirichletBC2}.

\subsection{Explicit formulation of the RHP for genus-zero potentials}
\label{s:genuszero}

Let $q(x) = A\,\e^{\i\alpha}$, with $\alpha\in\Real$ and where we can take $A>0$ without loss of generality.
Since $q(x)$ is independent of $x$, the scattering problem can be solved exactly.
Let us choose the eigenvector matrix of $X = - \i z\sigma_3 + Q$ as
\be
W(z) = I - \i\sigma_3Q/(z+\lambda),
\ee
so that $X\,W = W\,(-\i\lambda)\sigma_3$, 
where $\lambda = \lambda(z) = (z^2+A^2)^{1/2}$.
Explicitly, we define the complex square root so that:
(i) its branch cut is $[-\i A, \i A]\cup\R$;
(ii) $(z^2+A^2)^{1/2}>0$ for all $z\in\i\Real\setminus[-\i A,\i A]$;
(iii) on $[-\i A, \i A]$, $(z^2+A^2)^{1/2}$ is continuous from the left, and on the real axis, it is continuous from above.

For illustration purposes, 
we will slightly generalize some of the quantities defined in section~\ref{s:spectrum}. 
We do so even though it is not necessary for the formulation of the RHP, 
in order to identify the movable and immovable Dirichlet eigenvalues. 
Accordingly, let $Y(x,x_o,z)$ be the fundamental matrix solution of the ZS problem such that $Y(x_o,x_o,z) = I$ at 
an arbitrary base point $x_o$.
In light of the above calculations, it is straigthforward to see that
\bse
\begin{gather}
	Y(x,x_o,z) = W(z)\,\e^{-\i\lambda (x-x_o) \sigma_3}W^{-1}(z)\,,
	\label{e:Ygenus0}
	\\
\intertext{and the corresponding monodromy matrix is}
    M(x_o,z) = Y(x_o+L,z) = \frac1{(\l+z)^2+A^2} 
	\begin{pmatrix}\e^{-\i\l L}\big((\l+z)^2+A^2\e^{2\i\l L}\big) & 2A\e^{\i\alpha}(\l+z)\sin(\l L) \\ 
		-2A\e^{-\i\alpha}(\l+z)\sin(\l L) & \e^{-\i\l L}\big(A^2+\e^{2\i\l L}(\l+z)^2\big)
	\end{pmatrix},
\end{gather}
\ese
implying $\D(z) = \cos(\lambda L)$.
The periodic eigenvalues are therefore the points $z_n$ such that $\lambda(z_n) = 2n\pi/L$, $n\in\Integer$.
All such eigenvalues are double zeros of $\D(z)$. 
Note that, if $AL>\pi$, some of these double points lie on the imaginary axis, which violates clause~(ii) of Assumption~\ref{assump}.
Nonetheless, we will see below that this is not an obstacle to the formulation of the RHP.
The existence of such complex double points is related to the modulational instability of the constant potential,
as shown for example in \cite{ForestLee}.
It is also straightforward to see that, if $AL = n\pi$, the point $z=0$ is a main eigenvalue with multiplicity~4.

Next we proceed to the calculation of the Dirichlet eigenvalues. 
By analogy with~\eqref{e:Ygenus0}, we generalize \eqref{e:Ytildedef} by defining
\be
\~Y(x,x_o,z) = U Y(x,x_o,z) U^{-1}\,,
\label{e:YildexOdef}
\ee
It is then straightforward to see that the Dirichlet eigenvalues with base point $x_o$ are the zeros of $\~y_{12}(x_o+L,x_o,z)$.
Explicitly, we have
\bse
\label{ytildeconstant}
\begin{gather}
\~Y(x_o+L,x_o,z) = \begin{pmatrix} 
    \displaystyle \cos(\l L)+\frac{\sin(\l L)}{\l}\i A\sin\alpha &
    \displaystyle \frac{\sin(\lambda L)}\lambda\,(z-\i A\cos\alpha)
	\\
    \displaystyle	- \frac{\sin(\lambda L)}\lambda\,(z+ \i A\cos\alpha) &
    \displaystyle	\cos(\l L) - \frac{\sin(\l L)}\l \i A\,\sin\alpha\
    \end{pmatrix}
\end{gather}
\ese
We then see that there are Dirichlet eigenvalues wherever $\lambda L = n\pi$ for $n\in\Integer\setminus\{0\}$,
whereas the points $z = \pm i A$, at which $\lambda = 0$, are not zeros of $\~y_{12}(x_o+L,x_o,z)$
unless $\alpha=0$ or $\alpha = \pi$.
In addition, the point $z=0$ is a multiple Dirichlet eigenvalue if $AL= n\pi$.
All these Dirichlet eigenvalues are single and tied to the corresponding double main eigenvalues discussed earlier.
There is also an additional, single Dirichlet eigenvalue  $\mu_o = \i A\,\cos\alpha$, which, as the above calculations show, is also independent of $x_o$ and therefore immovable. 
The numerically computed Lax spectrum and Dirichlet eigenvalue are shown in Figure~\ref{f:genuszero}
(see Appendix~\ref{s:numerics} for details).
In any case, as discussed in section~\ref{s:inverse}, for the present approach to the IST it is sufficient to use 
the Dirichlet eigenvalues with base point~$x_o=0$.

\begin{figure}[b!]
\centerline{\includegraphics[trim= 0 0 0 0,clip,width=\figwidth]{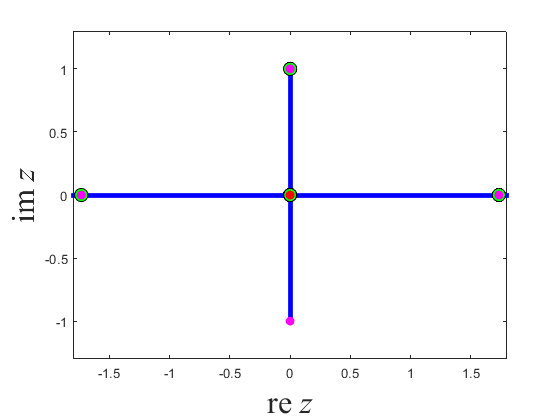}\kern-1em
  \includegraphics[trim= 0 0 0 0,clip,width=\figwidth]{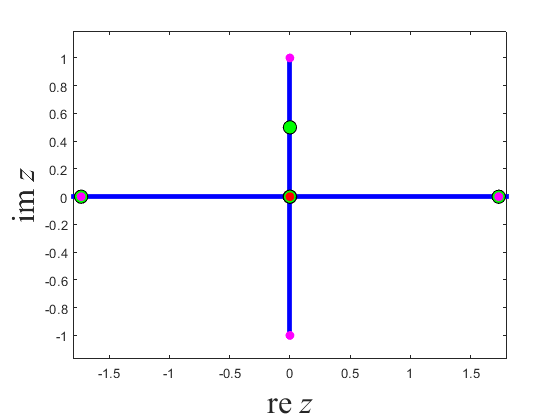}}
\caption{Left: The Lax spectrum (in blue) for $q(x)=\dn(x,0)=1$. 
The periodic and antiperiodic eigenvalues are highlighted in magenta and red, respectively, and the Dirichlet eigenvalues are shown as large green circles outlined in black. 
Here the period $L$ was chosen to correspond to the period of $\dn(x,0)$, namely $L=\pi$, 
and the 
base point was $x_0=0$. 
Right: Same, but for $q(x)=\e^{\i\pi/3}$. 
Note how the Dirichlet eigenvalue is located at $z= \i/2$, 
in agreement with \eqref{ytildeconstant}.}
\label{f:genuszero}
\end{figure}

Equations~\eqref{BF} and~\eqref{e:YildexOdef} with $x_o=0$ immediately yield explicit expressions for the Bloch-Floquet solutions explicitly.
However, the resulting formulae are complicated and not particularly illuminating, and are therefore omitted for brevity.
Since $\rho(\mu_o)=\cos(\l(\mu_o) L)-\i\sin(\l(\mu_o) L)$, which coincides with $\~y_{22}(L,\mu_o)$, by definition we have $\nu_0 = 1$ when $\alpha\in(0,\pi/2)\cup(3\pi/2,2\pi)$ and $\nu_0 = -1$ when $\alpha\in(\pi/2,3\pi/2)$. 
Let us consider without loss of generality the case $\nu_0=1$ and compute the jump matrix of $\Phi(x,z)$ as defined in \eqref{Phi}. 
When $\nu_0=1$, we have $f^-=1$, $f^+=\mu_o-z$, and $f^0=\~y_{12}(L,z)/(f^-f^+)=-\sin(\l L)/\l$. 
Thus, we obtain
\bse
\begin{gather}
B(z) = \begin{cases}
	\displaystyle \i b(z) \diag(1,\mu_o-z), &z\in\Complex^+,
	\\[0ex]
	\displaystyle b(z) \diag(\mu_o-z,1), &z\in\Complex^-\,,
\end{cases}
\label{e:Bgenus0}
\end{gather}
\ese
with $b(z) = \e^{\i\pi/4}/(z^2+A^2)^{1/4}$.
For all $z\in \Sigma(\L)$, we have $\rho_+(z)=\rho_-^{-1}(z)$ and $\l_+(z)=\l_-(z)$,
where as usual the subscripts $\pm$ denote the non-tangential limits.
As $z\in\R$, from \eqref{e:defPsi} and \eqref{BF}, there is a switch between the first and second columns of $\Psi$ in different half-planes, which cancels the jump of $\rho$, implying that $\Psi_+=\Psi_-$. Thus, we obtain the jump as
\be\label{VR}
\Phi^{-1}_-(x,z)\Phi_+(x,z) = \e^{-\i zx\hat{\sigma}_3}(B_-^{-1}B_+)
  =  \i  \frac{ b_+(z) }{ b_-(z) } 
      \begin{pmatrix} \frac{1}{\mu_o-z} & 0\\ 0 & \mu_o-z \end{pmatrix}\,,
      \qquad z\in\Real\,.
\ee
For $z\in[-\i A, \i A]$, the definition of $\Psi$ is the same on both sides of the contour in each half-plane. 
However, the jumps of $\rho$ and $\l$, result in a switch of the columns of $\Psi(x,z)$, which yields $\Psi_+(x,z)=\Psi_-(x,z)\sigma_1$. The jump is therefore given by
\vspace*{-1ex}
\begin{align}
\label{ViR}
\Phi^{-1}_-(x,z)\Phi_+(x,z) = \begin{cases} \displaystyle
\frac{ b_+(z) }{ b_-(z) } \,\e^{-\i zx\hat\sigma_3}
	\begin{pmatrix}	0 & \mu_o-z\\	\frac{1}{\mu_o-z} & 0 \end{pmatrix}, 
        \quad z\in(0,\i A),\\ \displaystyle
\frac{ b_+(z) }{ b_-(z) } \,\e^{-\i zx\hat\sigma_3}
        \begin{pmatrix} 0 & \frac{1}{\mu_o-z}\\   	\mu_o-z & 0    \end{pmatrix}, 
        \quad z\in(-\i A,0).
    \end{cases}
\end{align}
To make these jumps fully explicit, it remains to consider the quartic root. 
From the above definition of $\lambda(z)$, we have
\vspace*{-1ex}
\be\label{4throotjump}
\frac{ b_+(z) }{ b_-(z) } = \begin{cases}
    -\i, & z\in(-\infty,0)\cup(-\i A,0)\,,\\
    \i, & z\in(0,\infty)\cup(0,\i A)\,.
\end{cases}
\ee
Inserting~\eqref{4throotjump} into \eqref{VR} and \eqref{ViR}, we get \eqref{e:Vg0}, which also coincides with the definition in \eqref{e:Vexplicit}.
Hence, the $2\times2$ matrix-valued function $\Phi(x,z)$ defined by~\eqref{Phi} indeed solves the following RHP:
\begin{RHP}
\label{RHPgenus0}
Find a $2\times2$ matrix-valued function $\Phi(x,z)$ such that
\vspace*{-1ex}
\begin{enumerate}
\advance\itemsep-4pt
    \item 
	$\Phi(x,z)$ is a holomorphic function of $z$ for $z\in\C\setminus(\R\cup[-\i A,\i A])$. 
    \item 
        The non-tangential limits $\Phi_\pm(x,z)$ to $\R\cup[-\i A,\i A]$ are continuous functions of $z$ in $\R\cup(-\i A,\i A)$, and have at worst quartic root singularities at $\{\pm \i A\}$. 
    \item 
	$\Phi_\pm(x,z)$ satisfy the jump relation 
		\be
			\Phi_+(x,z)=\Phi_-(x,z)V(x,z),\qquad z\in \R\cup[-\i A,\i A] \,,
		\ee
         with 
    \be
    \label{e:Vg0}
    V(z)=\begin{cases}
	\displaystyle\begin{pmatrix} 0 & -\i\e^{-2\i zx}(\mu_o-z) \\ -\i\e^{2\i zx}/(\mu_o-z) & 0 \end{pmatrix},&z\in(0,\i A)\,,
	\\[2ex]
	\begin{pmatrix} 0 & \i\e^{-2\i zx} /(\mu_o-z) \\ \i\e^{2\i zx} (\mu_o-z) & 0 \end{pmatrix},&z\in(-\i A,0)\,,
        \\[2ex]
        \begin{pmatrix}
            \frac{1}{z-\mu_o} & 0\\
            0 & z-\mu_o
        \end{pmatrix}\,, & z\in (-\infty,0)\,,
        \\[2ex]
        \begin{pmatrix}
            \frac{1}{\mu_o-z} & 0\\
            0 & \mu_o-z
        \end{pmatrix}\,, & z\in (0,\infty)\,.
    \end{cases}
    \ee
    
    \item 
	As $z\to\infty$ with $z\in\Complex$, $\Phi(x,z)$ has the following asymptotic behavior 
		\be
			\Phi(x,z) = (I+O(1/z))B(z), 
			\label{asymPhi_g0}
		\ee
	with $U$ as in~\eqref{e:Udef} and $B(z)$ as in~\eqref{e:Bgenus0}.
    \item 
	There exist positive constants $c$ and $M$ such that $|\phi_{ij}(x,z)|\leq M\e^{c|z|^2}$ for all $z\in \mathcal{D}$.
	\end{enumerate}
\end{RHP}

In the special case $\alpha=2n\pi$, the potential is real and even, and the above formalism simplifies considerably.
Indeed, in this case, the Dirichlet eigenvalue $\mu_0$ equals $iA$ (consistently with Lemma~\ref{l:Dirichletbase0}).
Thus, $f^\pm(z)\equiv1$, and $b(z) = \e^{i\pi/4} [(z-\i A)/(z+\i A)]^{1/4}$.
The jump matrix in the RHP~\label{RHPgenus0} then simplifies to
\be
V(z) = \begin{pmatrix} 0 & \i\,\e^{-2\i zx} \\ \i\,\e^{2\i zx}& 0 \end{pmatrix}, \qquad z\in(-\i A,\i A)\,,
\ee
while $V(z) \equiv I$ for $z\in\Real$\,.

\subsection{Explicit formulation of the RHP for a genus-one elliptic potential}
\label{s:dn}

The focusing NLS equation admits quite a large family of genus-one elliptic solutions, some of which were studied in 
\cite{MaAblowitz}, while the full family was later characterized in \cite{Kamchatnov,DeconinckSegal}.
One simple such solution is the potential $q(x) = \dn(x,m)$, where dn is one of the Jacobi elliptic functions \cite{NIST}
and $m\in[0,1]$ is the elliptic parameter.
The more general family $q(x) = A\,\dn(x,m)$ with $A>0$ was studied in \cite{AdvMath2023}, where it was shown that
if and only if $A\in\Natural$, it is a finite-genus potential with genus equal to $2A-1$.
Moreover, in \cite{AdvMath2023} it was shown that, when $A\in\Natural$, for all $m\in(0,1)$ the Lax spectrum comprises the real $z$ axis plus $2A$ bands on $i\Real$ (symmetrically located with respect to the real axis as dictated by Schwarz symmetry), separated by $2A-1$ spectral gaps.
Here we consider the case $A=1$ and study the Lax spectrum, eigenfunctions, Dirichlet eigenvalues and the corresponding RHP formulation.
(It was also shown in \cite{AdvMath2023} that each of the $2A-1$ spectral gaps on $i\R$ contains exactly one movable Dirichlet eigenvalue defined according to the BCs~\eqref{e:GesztesyDirEvals}, but that is not relevant for the formalism presented in this paper.)

Note first that the main eigenvalues can be obtained using either the methods of~\cite{BertolaTovbis2016} or those of~\cite{DeconinckSegal}.
With either method, one finds that the main eigenvalues in $\C^+$ are located at 
$\z_{1,2}=\frac \i2\big(1\pm\sqrt{1-m}\big)$,
with symmetrically located eigenvalues at $-\z_{1,2}$ in $\C^-$.
On the other hand, the determination of the Dirichlet eigenvalues is more involved, and, as far as we know, has not been presented in the literature before.

By Lemma~\ref{l:Dirichletbase0}, we know that the Dirichlet eigenvalues with base point $x_o=0$ are located at some points of the main spectrum.
The trace formulae~\eqref{e:trace} then allow us to determine exactly which of the main eigenvalues they are.  In particular, they coincide with $\z_{1,2}$.
Next, even though it is not necessary for the formulation of the RHP, for illustrative purposes we present the explicit calculation of the ZS eigenfunctions and we obtain an equation that yields the location of the Dirichlet eigenvalues with arbitrary base point~$x_o$,
similarly to the previous section.
The following calculations are based on the framework of \cite{DeconinckSegal}.

Note first that the solution of the NLS equation with IC $q(x,0) = \dn(x,m)$ is simply
$q(x,t) = \e^{2\i\omega t}\,\dn(x,m)$, with $\omega = 1 - m/2
$.
Accordingly, one can look for simultaneous solutions $\Xi(x,t,z)$ of both parts of the modified Lax pair
\be
\Xi_x = \~X(x,z)\,\Xi\,,\qquad
\Xi_t = [\~T(x,z) - \i\omega\sigma_3]\,\Xi\,,
\ee
with $\~X(x,z)$ and $\~T(x,z)$ given by~\eqref{e:NLSLP} but with $q(x,t)$ replaced by $q(x) = \dn(x,m)$,
which makes the resulting expressions time-independent.
Further, note that $\~T(x,z) + \i\omega\sigma_3$ can be written as  
\bse
\begin{gather}
\~T(x,z) - \i\omega\sigma_3 = \begin{pmatrix} A & B \\ C & -A \end{pmatrix},
\\
A(x,z) = - 2\i z^2 + \i q^2(x) - \i \omega\,,\qquad
B(x,z) = 2z q(x) + \i q'(x)\,,\qquad C(x,z) = B(x,-z)\,.
\end{gather}
\ese
Since $A$, $B$ and $C$ are time-independent, we can therefore separate variables by letting
$\Xi(x,t,z) = \e^{\Omega t}\chi(x,z)$.
Then $\chi(x,z)$ solves the homogeneous system of algebraic equations 
\vspace*{-1ex}
\be
(\~T(x,z) - \i\omega - \Omega I)\,\chi(x,z) = 0\,.
\label{e:g1linearsystem}
\ee
In order for nontrivial solutions to exist, one obviously needs $\det( \~T(x,z) + \i\omega - \Omega I ) = 0$, i.e., 
$\Omega^2 = A^2(x) + B(x)\,C(x) = - 4z^4 - \txtfrac14( m - 4 z^2)^2$, 
which yields two values for $\Omega$:  
$\Omega_\pm  = \pm \frac12 \big[ - 16z^2 - ( m - 4 z^2)^2 \big]^{1/2}\,$.
Note that $\Omega$ is not only independent of~$t$, but also of~$x$.
In other words, $\Omega$ is strictly a function of $z$ and $m$.
For each choice of $\Omega$, we can then solve~\eqref{e:g1linearsystem} to obtain
\bse
\begin{gather}
\chi_\pm (x,x_o,z) = \e^{g_{\pm,x_o}(x,z)} \begin{pmatrix} B(x,z) \\ \Omega(z) - A(x,z) \end{pmatrix}\,,
\intertext{with}
g_{\pm,x_o}(x,z) = \int_{x_o}^x \frac{(\Omega_\pm(z)-A(s,z))\,q(x) - B_x(s,z) - \i z }{B(s,z)} \,\d s\,.
\end{gather}
\ese
We can then construct a fundamental matrix solution of the ZS problem as
\bse
\begin{gather}
W(x,z) = \big( \chi_-(x,z), \chi_+(x,z) \big)\,, 
\\
\intertext{which in turn allows us to obtain $Y(x,x_o,z)$ as}
Y(x,x_o,z) = W(x,x_o,z)\,W^{-1}(x_o,x_o,z)\,.
\end{gather}
\ese
Straightforward calculations then yield the Floquet discriminant as
\be
\Delta(z,m) = \frac12\big( \e^{g_{-,0}(2K_m,z)} + \e^{g_{+,0}(2K_m,z)} \big)\,.
\ee
where $K_m = K(m)$ is the complete elliptic integral of the first kind \cite{NIST}.
Finally, defining $\~Y(x,x_o,s)$ as in \eqref{e:YildexOdef}, one explicitly obtains
\bse
\begin{gather}
\~y_{12}(x,x_o,z) = \i \left(\e^{g_{-,0}(2K_m)}-\e^{g_{+,0}(2K_m)}\right) 
-\frac{C_-(x_o,z)C_+(x_o,z)}{2(\Omega_--\Omega _+) (2 z\dn(x_o,m)-\i m \cn(x_o,m) \sn(x_o,m))}\,,
\label{e:dnDirichleteigs}
\\
\intertext{where}
C_\pm(s,z) = 2 z^2 -2 \i z \dn(s,m) - m \cn(s|m) \sn(s,m)-\dn(s,m)^2+\omega -\i \Omega_\pm\,.
\end{gather}
\ese
The zeros of $\~y_{12}(x_o+L,x_o,z)$ then determine the Dirichlet eigenvalues with base point~$x_o$.
Although an analytical expression is not available,
it is straightforward to obtain the zeros of~\eqref{e:dnDirichleteigs} numerically.
The results agree very well with the numerically obtained Lax spectrum and Dirichlet eigenvalues computed according to the numerical methods described in section~\ref{s:numerics}
(see Figs.~\ref{f:genusone} and~\ref{f:Dirichletmotion}).

Finally, we formulate the RHP satisfied by $\Phi(x,t,z)$.
Let us denote the two movable Dirichlet eigenvalues by $\mu_0$ and $\mu_1$, and consider the case where $\mu_0=\z_1$, and $\mu_1=\z_2$, which happens when $x_o=0$ (corresponding  to the top left plot in Figure~\ref{f:genusone}). Note that, in this case, both $f^-$ and $f^+$ are simply equal to $1$, and $f^0=\~y_{12}(L,z)/2$. 
Then, $B(z)$ is given explicitly by
\bse
\begin{gather}
B(z) = \begin{cases}
	\displaystyle \i b(z), &z\in\Complex^+,
	\\[0ex]
	b(z), &z\in\Complex^-\,,
\end{cases}
\label{e:Bgenus1}
\end{gather}
\ese
where now
\be
b(z) = \e^{\i\pi/4}\bigg[\frac{(z-\zeta_1)(z-\zeta_2)}{(z+\zeta_1)(z+\zeta_2)}\bigg]^{1/4}\,.
\label{e:bg1}
\ee
Note that $B(z)$ is a scalar since the potential is real and even, in agreement with Proposition~\ref{l:Brealevenq}.
We define the complex square root in the numerator of~\eqref{e:bg1} so that:
(i) its branch cut is $[-\z_2, -\z_1]\cup[\z_1,\z_2]\cup\R$;
(ii) $[(z-\mu_0)(z-\mu_1)]^{1/2}>0$ for all $z\in\i\Real\setminus\left([-\z_2, -\z_1]\cup[\z_1,\z_2]\right)$;
(iii) on $[-\z_2, -\z_1]\cup[\z_1,\z_2]$, $[(z-\mu_0)(z-\mu_1)]^{1/2}$ is continuous from the left, and on the real axis, it is continuous from above. 

Now we compute the jump matrix of $\Phi(x,z)$ defined in \eqref{Phi}. 
When $z\in \Sigma(\L):=[-\z_2, -\z_1]\cup[\z_1,\z_2]\cup\R$, we have $\rho_+(z)=\rho_-^{-1}(z)$.  
For $z\in\R$, we have $\Psi_+(x,z) = \Psi_-(x,z)$,
by the same arguments as in Appendix~\ref{s:genuszero}.
Thus, the jump is, again,
\be\label{VRg1}
V(x,z)=\e^{-\i zx\hat{\sigma}_3}(B_-^{-1}B_+)= \i \frac{b_+(z)}{b_-(z)}\,,\qquad z\in\R\,.
\ee
For $z\in (-\z_2,-\z_1)\cup(\z_1,\z_2)$ along $\i\R$, just as in the genus-zero case, we have $\Psi_+(x,z)=\Psi_-(x,z)\sigma_1$. 
The jump is therefore given by
\vspace*{-1ex}
\be
\label{ViR1g1}
V(x,z)=
\frac{b_+(z)}{b_-(z)} \e^{-\i zx\hat\sigma_3}\sigma_1\,,\qquad
z\in (-\z_2,-\z_1)\cup(\z_1,\z_2)\,.
\ee
Finally, for $z\in(-\z_1,\z_1)$ along $\i\R$, $\Psi(x,z)$ has no jump, since there is no jump for $\rho$. 
Therefore, only the branch cuts of the quartic root contribute to the jump, implying:
\be\label{ViR2g1}
V(x,z) = \frac{b_+(z)}{b_-(z)}\,,
\qquad 
z\in(-\z_1,\z_1).
\ee

It thus remains to compute the jump of $b(z)$ in order to define the RHP explicitly.
Straightforward calculations show that:
\vspace*{-1ex}
\be\label{4throotjumpg1}
\frac{ b_-(z) } { b_+(z) } 
  = \begin{cases}
    \i, & z\in\R\cup(-\z_2,-\z_1)\cup(\z_1,\z_2)\,,\\
    -1,& z\in (-\z_1,\z_1)\,.
\end{cases}
\ee
Inserting~\eqref{4throotjumpg1} into \eqref{VRg1}, \eqref{ViR1g1} and \eqref{ViR2g1}, we get \eqref{e:Vg1}, which also coincides with the definition in \eqref{e:Vexplicit}.
To summarize, we arrive at the
 $2\times2$ matrix-valued function $\Phi(x,z)$, which solves the following RHP:
\begin{RHP}
\label{RHPgenus0}
Find a $2\times2$ matrix-valued function $\Phi(x,z)$ such that
\vspace*{-1ex}
\begin{enumerate}
\advance\itemsep-4pt
    \item 
    $\Phi(x,z)$ is a holomorphic function  of $z$ for $z\in\C\setminus(\R\cup[-\zeta_2,\zeta_2])$. 
    \item 
    The non-tangential limits $\Phi_\pm(x,z)$ of $\Phi(x,z)$ to $\R\cup[-\zeta_2,\zeta_2]$ are continuous functions of $z$ in  $(\R\cup[-\zeta_2,\zeta_2])\setminus\{\pm\zeta_1,\pm\zeta_2\}$, and have at worst quartic root singularities 
    at $z = \pm \zeta_1,\pm\zeta_2$. 
    \item 
    $\Phi_\pm(x,z)$ satisfy the jump relation 
		\be
			\Phi_+(x,z)=\Phi_-(x,z)V(x,z),\qquad z\in \R\cup[-\zeta_2,\zeta_2]\,,
		\ee
         with 
    \be
    \label{e:Vg1}
    V(z)=\begin{cases}
	\displaystyle\begin{pmatrix} 0 & \i\,\e^{-2\i zx} \\ \i\,\e^{2\i zx} & 0 \end{pmatrix},&z\in(\z_1,\z_2)\,,
	\\[2ex]
	\begin{pmatrix} 0 & \i\,\e^{-2\i zx}\\ \i\,\e^{2\i zx} & 0 \end{pmatrix},&z\in(-\z_2,-\z_1)\,,
        \\[2ex]
        -I\,, & z\in (-\z_1,\z_1)\,.
    \end{cases}
    \ee
    \item 
    As $z\to\infty$ with $z\in\Complex$, $\Phi(x,z)$ has the following asymptotic behavior 
		\be
			\Phi(x,z) = (I+O(1/z))B(z), 
			\label{asymPhi_g1}
		\ee
	with $B(z)$ as in~\eqref{e:Bgenus1}.
    \item 
    There exist positive constants $c$ and $M$ such that $|\phi_{ij}(x,z)|\leq M\e^{c|z|^2}$ for all $z\in \mathcal{D}$.
\end{enumerate}
\end{RHP}

\begin{figure}[t!]
\centerline{\includegraphics[trim= 0 0 0 0,clip,width=\figwidth]{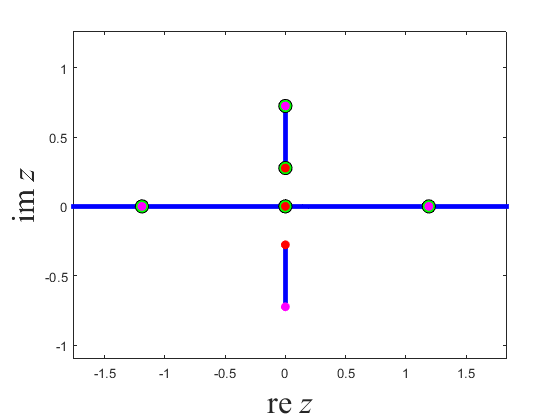}\kern-1em
  \includegraphics[trim= 0 0 0 0,clip,width=\figwidth]{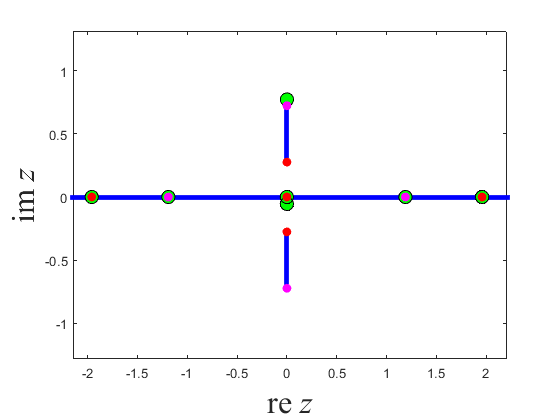}}
\centerline{\includegraphics[trim= 0 0 0 0,clip,width=\figwidth]{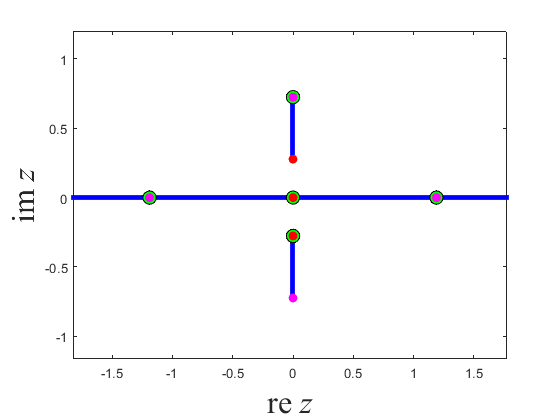}\kern-1em
  \includegraphics[trim= 0 0 0 0,clip,width=\figwidth]{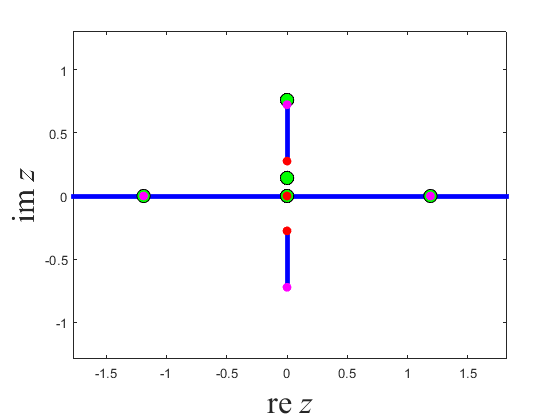}}
\caption{Plots of $q(x)=\dn(x,m)$ for $m = 0.8$ and  various choice of base points. 
Top left: $x_0=0$. 
Top right: $x_0=1$. 
Bottom left: $x_0=L/2$.
Bottom right: $x_0=4$. 
Here $L=2K(m)$, where $K(m)$ is the complete elliptic integral of the first kind.}
\label{f:genusone}
\end{figure}

In closing this section, we should note that the focusing NLS equation admits a larger family of genus-one solutions \cite{DeconinckSegal}.
Even though in this appendix we limited ourselves to treating the dn potential for brevity,
we believe that the above approach will work for all of them.

\subsection{A numerical approach for the calculation of the Dirichlet spectrum}
\label{s:numerics}

The main spectrum of the scattering problem can be calculated using the Floquet-Hill method \cite{DK2006}, similarly to what was done in~\cite{BOT_JST2023}. 
(Specifically, periodic and anti-periodic eigenvalues can be computed by taking the Floquet exponent $\nu$ to be $0$ and $\pi/L$ respectively.)
Here we discuss how one can efficiently compute the Dirichlet eigenvalues.

The calculation of the Dirichlet spectrum begins with the modified scattering problem~\eqref{e:modZS}. 
Recall that the Dirichlet eigenvalues with base point $x_0$ are the zeros of $\~y_{12}(x_0+L,z)$,
where $\~Y(x,z) = U Y(x,x_0,z) U^{-1}$ and 
$Y(x,x_0,z)$ is the fundamental matrix solution of the original scattering problem such that $Y(x_0,x_0,z)= I$ (as in the previous section).
For each fixed value of~$z$ and $x_0$, \eqref{e:modZS} can be integrated numerically up to $x=x_0+L$ using standard methods (e.g., such as Runge-Kutta) for some fixed vector-valued initial condition. 
Letting $\^y(z)$ be the numerical solution of the modified scattering problem at $x=x_0+L$ with base point $x=x_0$ and initial condition (IC) $\~y(x_0,z) = (0,1)^T$,
the problem of finding the Dirichlet eigenvalues then reduces to that of finding those values of $z$ for which the top element of $\^y(x_0+L,z)$ equals zero.
To this end, one can use Newton's method, by constructing the iteration
\be
    z_{n+1}=z_n - {\^y_1(z_n)}/{\^y_1'(z_n)}\,,
\label{e:Newton}
\ee
where the prime denotes differentiation with respect to~$z$ and $\^y_1(z)$ denotes the top entry of $\^y(z)$.
This method requires knowledge of the derivative of $\^y_1(z)$ with respect to $z$, which can be obtained by constructing the first variational system. 
Differentiating~\eqref{e:modZS} with respect to~$z$ yields the forced linear system of ODEs
\be
    \~y'_x
     = U(-iz\sigma_3+Q(x))U^{-1}\,\~y' - iU\sigma_3 U^{-1}\~y\,.
\label{e:modZS'}
\ee
The first variational system is the $4\times4$ system of ODEs obtained by combining
\eqref{e:modZS} and~\eqref{e:modZS'},
namely,
\begin{equation}
    P_x =
    \begin{pmatrix} U(-iz\sigma_3+Q(x))U^{-1} & 0 \\
        -iU\sigma_3 U^{-1} & U(-iz\sigma_3+Q(x))U^{-1} \end{pmatrix} \, P\,,
\label{e:Pode}
\end{equation}
where the dependent variable is the four-component vector $P(x,z) = (\~y,\~y')^T$.
The IC for~\eqref{e:modZS'} is the derivative with respect to $z$ of the IC for the modified scattering problem~\eqref{e:modZS}, which implies $\~y'(x_0,z) = (0,0)^T$.
Let $\^P(z)$ be the numerical solution of the system at $x=x_0+L$ with IC $P(x_0,z) =(0,1,0,0)^T$. 
Then \eqref{e:Newton} becomes
\vspace*{-1ex}
\begin{equation}
    z_{n+1} = z_n- \^P_1(z_n)/\^P_3(z_n),
\label{e:Newton2}
\end{equation}
where $\^P_1(z)$ and $\^P_3(z)$ denote respectively the first and third entries of~$\^P(z)$. 
Allowing Newton's method to iterate until a desired tolerance is reached will then produce a Dirichlet eigenvalue with numerical error equal to that of the method used to solve the first variational system plus the tolerance of Newton's method.

The last issue that needs to be discussed to make this approach practical is the choice of $z_0$, namely, the initial condition for the iteration~\eqref{e:Newton2}. 
Recall that Newton's method only converges quickly if the initial guess is close to the true zero, and may not converge at all if the initial guess is too far away. 
In addition, the problem admits an infinite number of Dirichlet eigenvalues, and we would like to find as many of them as quickly as possible.
A possible way to do this is to select a rectangular region in the complex plane to check for Dirichlet eigenvalues. 
Next, one partitions this region with $m_\re$ points along the real axis and $m_\im$ points along the imaginary axis, thereby forming an $m_\re\times m_\im$ grid.
For each point $z$ in this grid, one integrates numerically~\eqref{e:Pode}
and then uses a bisection method to locate those points (which will generically lie in between grid points) where the real part of $\^P_1(z) = \~y_{12}(x_0+L,z)$ is zero. 
Connecting these points, one obtains the contour curves $\Re[\~y_{12}(x_0+L,z)] = 0$.
Repeating this process to construct the contours $\Im[\~y_{12}(x_0+L,z)] = 0$, 
the intersection between these zero contours will yield an acceptable initial guess for Newton's method.
Two issues with this method, however, are that (i) improving the accuracy of the zero contours requires a denser grid, and (ii) grid points that do not have a zero contour in their immediate vicinity still need to be calculated. 
Considering every grid point requires the numerical solution of a system of ODEs, so it would be desirable to find a way to determine initial guesses for Newton's method that reduce the required number of grid points. 

\begin{figure}[t!]
\kern-\medskipamount
\centerline{\includegraphics[trim= 0 0 0 0,clip,width=\figwidth]{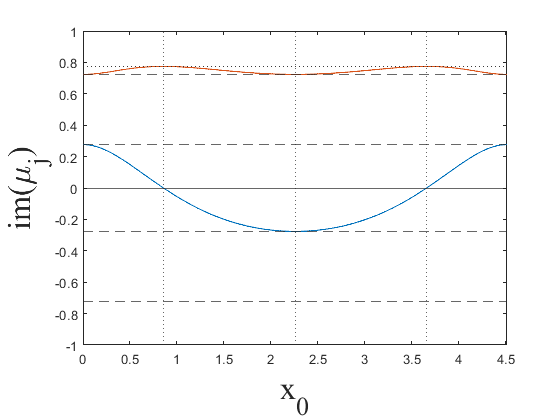}\kern-1em
  \includegraphics[trim= 0 0 0 0,clip,width=\figwidth]{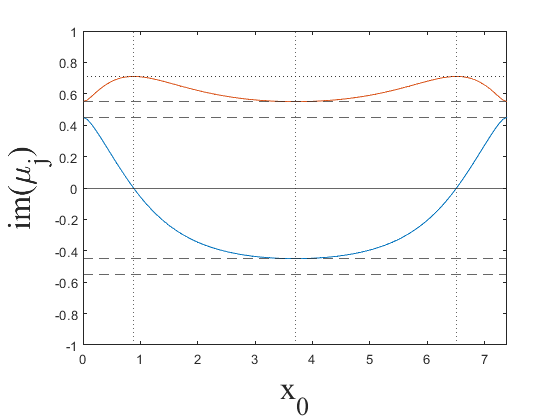}}
\kern-\bigskipamount
\caption{Numerically calculated movable Dirichlet eigenvalues along the imaginary axis as a function of the base point $x_o$ for the potential $q(x) = \dn(x,m)$ for $m=0.8$ (left) and $m=0.99$ (right).}
\label{f:Dirichletmotion}
\end{figure}

Figure~\ref{f:genuszero}(left) shows the spectrum for the constant potential $q(x) = 1$, calculated using the methodology described above. 
Since there are no movable Dirichlet eigenvalues, as shown in~\eqref{ytildeconstant}, 
it would be pointless to show multiple plots with different base points. 
Note however that even the immovable Dirichlet eigenvalues can be time-dependent.  
Recall that the location of the Dirichlet eigenvalue at $z=\i A\cos\alpha$ in \eqref{ytildeconstant} changes with~$\alpha$. 
Moreover, the IC $q(x,0)=1$ corresponds to the exact solution $q(x,t)=\e^{2\i t}$. Thus, if one takes $\alpha=2t$, the phase shift in the constant potential is the same as that obtained from the time evolution of the exact solution. 
Indeed, Fig.~\ref{f:genuszero}(right) shows how the Dirichlet spectrum for the potential
$q(x) = \e^{\i\pi/3}$ differs from the original one, validating the results of the previous section. 
On the other hand, we emphasize that the results of section~\ref{s:time}
show that the location of the poles for the time-dependent Riemann-Hilbert problem
remains unchanged with time, thanks to the effect of the functions $e^\pm(t,z)$.
Figure~\ref{f:genusone} shows plots of the spectrum for the genus-1 potential $q(x)=$dn$(x,0.8)$ with a variety of different choices for~$x_0$. 
Note that all Dirichlet eigenvalues on the real axis lie on a periodic or antiperiodic eigenvalue and are also immovable.
Finally, Fig.~\ref{f:Dirichletmotion} shows the dependence of the numerically calculated movable Dirichlet eigenvalues along the imaginary axis as a function of the base point $x_o$ for the potential $q(x) = \dn(x,m)$ for $m=0.8$ (left) and $m=0.99$ (right).
The values agree with the values obtained from calculations presented in the previous section.
We reiterate, however, that the inverse spectral method presented in section~\ref{s:inverse} only requires knowledge of the Dirichlet eigenvalues with a single base point.

\medskip
\let\em=\it

\makeatletter
\def\@biblabel#1{#1.}
\def\doibase{http://dx.doi.org/}
\def\reftitle#1{``#1''}
\def\booktitle#1{\textit{#1}}
\def\href#1#2{#2}
\makeatother
\small



\end{document}